\definecolor{shadecolor}{rgb}{0.95, 0.95, 0.86}
\numberwithin{equation}{section}
\renewcommand{\d}{{\mathrm d}}
\newtheorem{theorem}{Theorem}[section]
\newtheorem{example}[theorem]{Example}
\newtheorem{exercise}[theorem]{Exercise}
\newtheorem{lemma}[theorem]{Lemma}
\newtheorem{remark}[theorem]{Remark}
\newtheorem{problem}[theorem]{Riemann-Hilbert Problem}
\newtheorem{proposition}[theorem]{Proposition} 
\newtheorem{corollary}[theorem]{Corollary} 
\newtheorem{definition}[theorem]{Definition}
\def\le{\left}
\def\ri{\right}
\def\ds{\displaystyle}
\def\bth{\begin{theorem}}
\def\et{\end{theorem}}
\def\bc{\begin{corollary}}
\def\ec{\end{corollary}}
\def\bx{\begin{example}\small}
\def\ex{\end{example}}
\def\bxr{\begin{exercise}\small}
\def\exr{\end{exercise}}
\def\bl{\begin{lemma}}
\def\el{\end{lemma}}
\def\bd{\begin{definition}}
\def\ed{\end{definition}}
\def\bp{\begin{proposition}}
\def\ep{\end{proposition}}
\def\br{\begin{remark}}
\def\er{\end{remark}}
\def\be{\begin{equation}}
\def\ee{\end{equation}}
\def\&{\hspace{-15pt}&}
\def\mod{\, \mathrm{mod}\,\,}
\def\bea{\begin{eqnarray}}
\def\eea{\end{eqnarray}}
\def\beas{\begin{eqnarray*}}
\def\eeas{\end{eqnarray*}}
\def \pa{\partial}
\def\C{{\mathbb C}}
\def\R{{\mathbb R}}
\def\N{{\mathbb N}}
\def\wh{\widehat}
\def\Z{{\mathbb Z}}
\def\u{\mathfrak u}
\def\1{{\bf 1}}
\def\z{\zeta}
\def\eqref#1{(\ref{#1})}
\begin{document}


\title[Zeros of large degree Vorob'ev-Yablonski polynomials via a Hankel determinant identity]{Zeros of large degree Vorob'ev-Yablonski polynomials via a Hankel determinant identity}

\author{Marco Bertola}
\address{Centre de recherches math\'ematiques,
Universit\'e de Montr\'eal, C.~P.~6128, succ. centre ville, Montr\'eal,
Qu\'ebec, Canada H3C 3J7 and,
Department of Mathematics and
Statistics, Concordia University, 1455 de Maisonneuve W., Montr\'eal, Qu\'ebec,
Canada H3G 1M8}
\email{bertola@mathstat.concordia.ca}

\author{Thomas Bothner}
\address{Centre de recherches math\'ematiques,
Universit\'e de Montr\'eal, C.~P.~6128, succ. centre ville, Montr\'eal,
Qu\'ebec, Canada H3C 3J7 and,
Department of Mathematics and
Statistics, Concordia University, 1455 de Maisonneuve W., Montr\'eal, Qu\'ebec,
Canada H3G 1M8}
\email{bothner@crm.umontreal.ca}

\keywords{Vorob'ev-Yablonski polynomials, Hankel determinant representation, 
asymptotic behavior of (pseudo) orthogonal polynomials, Riemann-Hilbert problem, Deift-Zhou nonlinear steepest descent method.}

\subjclass[2000]{Primary 33E17; Secondary 34E05, 34M50}

\thanks{The first author is supported
in part by the Natural Sciences and Engineering Research Council of Canada. The second author acknowledges support by Concordia University through a postdoctoral fellow
top-up award.}

\date{\today}

\begin{abstract}
In the present paper we derive a new Hankel determinant representation for the square of the Vorob'ev-Yablonski polynomial $\mathcal{Q}_n(x),x\in\mathbb{C}$. These polynomials are 
the major ingredients in the construction of rational solutions to the second Painlev\'e equation $u_{xx}=xu+2u^3+\alpha$. As an application of the new identity, we
study the zero distribution of $\mathcal{Q}_n(x)$ as $n\rightarrow\infty$ by asymptotically analyzing a certain collection of (pseudo) orthogonal polynomials connected
to the aforementioned Hankel determinant. Our approach reproduces recently obtained results in the same context by Buckingham and Miller \cite{BM}, which used the Jimbo-Miwa
Lax representation of PII equation and the asymptotical analysis thereof.
\end{abstract}
\maketitle
\section{Introduction and statement of results}\label{math-intro}

Rational solutions of the second Painlev\'e\ equation 
\be\label{PII}
  u_{xx}=xu+2u^3+\alpha,\hspace{0.5cm}\alpha\in\mathbb{C},
\ee
were introduced in \cite{V,Y}  in terms of a certain sequence of monic polynomials $\{\mathcal{Q}_n(x)\}_{n\geq 0}$, henceforth generally named Vorob'ev-Yablonski polynomials. These polynomials
are defined via the differential-difference equation
\be
  \mathcal{Q}_{n+1}(x)\mathcal{Q}_{n-1}(x) = x\mathcal{Q}_n^2(x)-4\Big[\mathcal{Q}_n''(x)\mathcal{Q}_n(x)-\big(\mathcal{Q}_n'(x)\big)^2\Big],\hspace{0.5cm}n\geq 1,
  \ x\in\mathbb{C}\nonumber
\ee
with $\mathcal{Q}_0(x)=1,\mathcal{Q}_1(x)=x$.
It was found that rational solutions of \eqref{PII} exist if and only if $\alpha=n\in\mathbb{Z}$. For each value $n\geq 1$  they are uniquely given by
\begin{equation}\label{VoYa}
  u(x)\equiv u(x;n) = \frac{\d}{\d x}\left\{\ln\left[\frac{\mathcal{Q}_{n-1}(x)}{\mathcal{Q}_n(x)}\right]\right\}\ ,\ \ \ 
  u(x;0) = 0,\qquad u(x;-n) = -u(x;n). 
\end{equation}
The Vorob'ev-Yablonski polynomial $\mathcal{Q}_n(x)$ for $n\geq 0$ is a monic polynomials of degree $\frac{n}{2}(n+1)$ with integer coefficients.
In the literature it is known \cite{KO} that $\mathcal{Q}_n(x)$ admits two determinantal representations; our first result will be a third representation.\smallskip

Of the pre-existing formul\ae\ we first state a formula of Jacobi-Trudi type: let $\{q_k(x)\}_{k\geq 0}$ be the polynomials 
defined by the generating function
\be\label{gen1}
  F_1(t;x) = \exp\left[-\frac{4t^3}{3}+tx\right] = \sum_{k=0}^{\infty}q_k(x)t^k
\ee
and set in addition $q_k(x)\equiv 0$ for $k<0$. Then
\be\label{JT}
  \mathcal{Q}_n(x) = \prod_{k=1}^n(2k+1)^{n-k}\,\det\Big[q_{n-2\ell+j}(x)\Big]_{\ell,j=0}^{n-1},\hspace{0.5cm}n\geq 1.
\ee
Secondly one can compute $\mathcal{Q}_n(x)$ from a Hankel determinant: let $\{p_k(x)\}_{k\geq 0}$ be the polynomials defined recursively via
\be
  p_0(x)=x,\hspace{0.5cm} p_1(x)=1,\hspace{0.5cm}p_{k+1}(x) = p_k'(x) +\sum_{m=0}^{k-1}p_m(x)p_{k-1-m}(x),
\ee
in particular
\begin{equation*}
  p_2(x)=x^2,\hspace{0.5cm}p_3(x)=4x,\hspace{0.5cm}p_4(x)=2x^3+5,\hspace{0.5cm}p_5(x)=16x^2,\hspace{0.5cm}p_6(x)=5x\left(x^3+10\right).
\end{equation*}
Then
\be\label{H1}
  \mathcal{Q}_n(x) = \kappa^{-\frac{n}{2}(n+1)}\det\Big[p_{\ell+j-2}\big(\kappa x\big)\Big]_{\ell,j=1}^n,\hspace{0.25cm} n\geq 1;
  \hspace{0.5cm} \kappa=-2^{-\frac{2}{3}}.
\ee
Although this identity expresses $\mathcal{Q}_n(x)$ as an exact Hankel determinant, the polynomials $\{p_k(x)\}_{k\geq 0}$ cannot be derived from an elementary generating function
as it was the case for $\{q_k(x)\}_{k\geq 0}$ in \eqref{gen1}. 

Our first major result is a seemingly new Hankel determinant representation for the squares of $\mathcal{Q}_n(x)$,
which indeed results from an elementary generating function. Let $\{\mu_k(x)\}_{k\geq0}$ be the collection of polynomials defined by the generating
function
\be\label{gen2}
  F_2(t;x) = \exp\left[-\frac{t^3}{3}+tx\right] = \sum_{k=0}^{\infty}\mu_k(x)t^k.
\ee
These polynomials satisfy the three-term recurrence 
\begin{equation}\label{rec1}
  \mu_{k+3}(x) = \frac{x\mu_{k+2}(x)}{k+3}-\frac{\mu_k(x)}{k+3},\hspace{0.5cm}k\geq 0
\end{equation}
with $\mu_0(x)=1,\mu_1(x)=x$ and $\mu_2(x)=\frac{1}{2}x^2$. Moreover
\begin{equation*}
  \mu_3(x) = \frac{x^3-2}{3!},\hspace{0.5cm}\mu_4(x) = \frac{x(x^3-8)}{4!},\hspace{0.5cm}\mu_5(x)=\frac{x^2(x^3-20)}{5!},\hspace{0.5cm}\mu_6(x)=\frac{x^6-40x^3+40}{6!},
\end{equation*}
and in general
\begin{equation*}
 \mu_k(-\kappa x) = q_k(x)(-\kappa)^k,\hspace{0.5cm}k\geq 0.
\end{equation*}
The relation to the Vorob'ev-Yablonski polynomials is as follows
\bth\label{theo1} For any $n\geq 1$, we have
\be\label{the1}
  \mathcal{Q}_{n-1}^2(x) = (-)^{\lfloor\frac{n}{2}\rfloor}\frac{1}{2^{n-1}}\prod_{k=1}^{n-1}\left[\frac{(2k)!}{k!}\right]^2\,\det\Big[\mu_{\ell+j-2}(x)\Big]_{\ell,j=1}^{n}.
\ee
where $\lfloor y\rfloor$ denotes the floor function of a real number $y$.
\et
The proof of Theorem \ref{theo1} is found in Section \ref{sec1}. Theorem \ref{theo1} can be put to practical use in the analysis of the distributions of the zeros of $\mathcal Q_n(x)$ when $n\to \infty$. This very same asymptotic problem was very recently addressed in \cite{BM} where Buckingham and Miller have analyzed the large degree asymptotics of $\mathcal{Q}_n(x)$ in different regions of the complex $x$-plane. Their approach
uses a specific Lax representation of \eqref{PII} and corresponding Riemann--Hilbert problem, which is completely different than the one we derive here (Sec. \ref{sec2}),  and then they proceed to an  asymptotical resolution of the RHP as  $n\rightarrow\infty$.\smallskip

Indeed, a direct consequence of  Theorem \ref{theo1} is that we can frame the same analysis in the relatively familiar context of large-degree asymptotics of orthogonal polynomials with respect to a varying weight in the spirit of \cite{DKMVZ}; recall that 
\be
  \mu_k(x) = \frac{1}{k!}\frac{\d^k}{\d t^k} F_2(t;x)\Big|_{t=0} = -\frac{1}{2\pi i}\oint F_2(w;x)\frac{\d w}{w^{k+1}} = -
\oint\zeta^{k}\,\d\nu(\zeta;x)
\label{moments}
\ee
where the line integrals are taken along the unit circle $S^1=\{\zeta\in\mathbb{C}: |\zeta|=1\}$ in clockwise orientation and
\begin{equation}\label{meas}
  \d\nu(\zeta;x) = \frac{1}{2\pi i}e^{-\theta(\zeta;x)}\frac{\d\zeta}{\zeta},\hspace{0.5cm} \theta(\zeta;x) = \frac{1}{3\zeta^3}-\frac{x}{\zeta}.
\end{equation}
In this setting we now introduce (pseudo) orthogonal polynomials 
\begin{definition}
 The monic orthogonal polynomials $\{\psi_n(\zeta;x)\}_{n\geq 0}$ of exact degree $n$ are defined by the requirements
\begin{eqnarray}
  \oint\psi_n(\zeta;x)\zeta^m \d\nu(\zeta;x) &=& \begin{cases}
  h_n(x),& m=n\\
  \ \ \ 0,& m\leq n-1\\
  \end{cases}
  \label{o1}\\
  \psi_n(\zeta;x) &=& \zeta^n+\mathcal O\left(\zeta^{n-1}\right),\hspace{0.5cm}\zeta\rightarrow\infty.\label{o2}
\end{eqnarray}
Also here, the line integral is taken along the unit circle $S^1$ in clockwise orientation.
\end{definition}
For any fixed $n\in \N$, the  existence of $\psi_n(\zeta;x)$ amounts to a problem of Linear Algebra and rests upon the nonvanishing of the Hankel determinant of the moments \eqref{moments} of the measure $\d\nu (\zeta;x)$ 
\begin{equation*}
  \det\Big[\mu_{\ell+j-2}(x)\Big]_{\ell,j=1}^n\neq 0.
\end{equation*}
Recall also that the normalizing constants are related to the Hankel determinants by
\begin{equation}\label{hnid}
  h_n(x) = -\frac{\det[\mu_{\ell+k-2}(x)]_{\ell,k=1}^{n+1}}{\det[\mu_{\ell+k-2}(x)]_{\ell,k=1}^n}.
\end{equation}
Now combining \eqref{hnid} with \eqref{the1} and \eqref{VoYa}, we obtain for $n\geq 1$
\begin{equation}\label{PIIconnec}
  h_n(x) = 2(-)^{n-1}\left[\frac{n!}{(2n)!}\right]^2\left(\frac{\mathcal Q_n(x)}{\mathcal Q_{n-1}(x)}\right)^2,\hspace{1cm}u(x;n) = -\frac{1}{2}\frac{h_n'(x)}{h_n(x)}.
\end{equation}
Hence zeros of the $n$-th Vorob'ev-Yablonski polynomial $\mathcal{Q}_n(x)$, respectively poles of the $n$-th rational solutions $u(x;n)$ to \eqref{PII}, are in one-to-one correspondence with the exceptional values of the parameter $x$ for which the $n$-th orthogonal polynomial $\psi_n(\zeta;x)$ \eqref{o1},\eqref{o2} ceases to exist.\smallskip


In this perspective, our second result confirms an analog one in \cite{BM}, namely it shows that
the Vorob'ev-Yablonski polynomials of large degree (after a rescaling) are zero-free outside a star shaped region $\overline{\Delta}\subset\mathbb{C}$ defined as follows
\begin{definition}\label{defstar} Let $a=a(x)$ denote the (unique) solution of the cubic equation
\begin{equation}\label{cubic}
  1+2xa^2-4a^3=0
\end{equation}
subject to boundary condition
\begin{equation*}
 a=\frac{x}{2}+\mathcal O\left(x^{-2}\right),\hspace{0.5cm}x\rightarrow\infty.
\end{equation*}
The three branch points $x_k=-\frac{3}{\sqrt[3]{2}}e^{\frac{2\pi i}{3} k},k=0,1,2$ of equation \eqref{cubic} form the vertices of the star shaped region $\overline{\Delta}=\Delta\cup\partial{\Delta}$
depicted in Figure \ref{star} below which contains the origin and whose boundary $\partial\Delta$ consists of three edges defined implicitly via the requirement
\begin{equation}\label{starb}
  \Re\left\{-2\ln\left(\frac{1+\sqrt{1+2a^3}}{ia\sqrt{2a}}\right)+\sqrt{1+2a^3}\left(\frac{4a^3-1}{3a^3}\right)\right\}=0.
\end{equation}
Here, all branches of fractional exponents and logarithms are chosen to be principal ones.
\end{definition}
\begin{figure}[tbh]
\begin{center}
\includegraphics[width=0.4\textwidth]{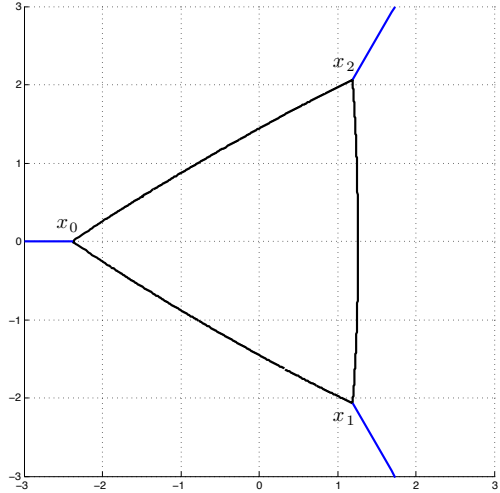}
\end{center}
\caption{The star shaped region $\overline{\Delta}=\Delta\cup\partial\Delta$. The boundary $\partial\Delta$ is given as the union of the three black solid curves.}
\label{star}
\end{figure}

In terms of the latter definition, our second main result shows that the region $\mathbb{C}\backslash\overline{\Delta}$ does not contain any zeros of $\mathcal{Q}_n(n^{\frac{2}{3}}x)$, provided $n$ is large enough. We have
\bth[see \cite{BM}, Theorem $1$]\label{theo2} Let $x\in\mathbb{C}:\textnormal{dist}(x,\overline{\Delta})\geq\delta>0$, then the orthogonal polynomials $\psi_n(\zeta;n^{\frac{2}{3}}x),\zeta\in S^1$ defined by \eqref{o1} and \eqref{o2} exist if
$n$ is sufficiently large. Equivalently, the (rescaled) Vorob'ev-Yablonski polynomials $\wh Q_n(x)= \mathcal{Q}_n(n^{\frac{2}{3}}x)$ for large $n$ have no zeros in the same region of the complex $x$-plane.
\et

We point out that while the final result overlaps (see Remark \ref{remBM1}) with the result of \cite{BM}, the method is substantially different since we start from the new determinantal expression of $\mathcal Q_n(x)$ obtained in Theorem \ref{theo1}.\smallskip

At this point we decided to perform asymptotic analysis only in the interior and exterior of the region $\Delta$; hence we shall not address issues related to the asymptotic behavior when $x$ is on the boundary (or vicinity) of $\Delta$, presumably the result would only confirm those of the forthcoming paper \cite{BM2}.
We are also focusing on the location of the zeros of $\wh Q_n(x)= \mathcal Q_n( n^\frac 23 x)$ inside of $\Delta$ (hence, location of the poles of $u(x;n)$) rather than the asymptotic behavior of the rational solution $u(x;n)$ itself, not to unnecessarily duplicate the results.\smallskip

There are interesting differences in the methods of our analysis inside $\Delta$, compared to the one in \cite{BM}\footnote{In loc. cit. the region $\Delta$ is termed the ``elliptic region''.} although the end result is the same. 
In \cite{BM} the author need to introduce an elliptic curve (of genus $1$) dependent on the value of $x$ in $\Delta$; in contrast, we need to introduce a {\em hyperelliptic curve} of genus $2$ of the form 
\be
X=\Big\{(z,w):\ w^2 =P_3\left(z^2\right)\Big\}
\label{Rg2} 
\ee
where $P_3(\z)=(\z+a^2)(\z+b^2)(\z+c^2)$ is a polynomial of degree $3$ with distinct roots given implicitly in \eqref{gen2con1} and \eqref{gen2con3}.
In \cite{BM}, the authors introduce an exceptional set of discrete points in order to complete the Riemann-Hilbert analysis inside the star shaped region $\Delta$, compare equations (4-96) and (4-97) in the aforementioned text. 
In our case the corresponding exceptional set is first defined in terms of the vanishing of a Riemann Theta function of genus $2$ (see App. \eqref{app2}); however, 
 given the high symmetry of our curve $X$, we will eventually reduce the appearance of $\Theta(z|\,{\boldsymbol \tau})$ in the definition of the corresponding exceptional set \eqref{cond0} to a condition which involves only a theta function $\vartheta(\rho)=\vartheta(\rho|\,\varkappa)$ associated to an elliptic curve. In order to explain in detail the condition, let   us set 
\be
\d \phi(z)= \frac {\sqrt{P_3(z^2)} }{z^4}\, \d z
\ee
and recall that the parameters $a,b,c$ (i.e. the branchpoints of $X$)  all depend on $x$ implicitly via \eqref{gen2con1} and \eqref{gen2con3}. Our analog to (4-96), (4-97) in \cite{BM} reads as follows.
\bth
\label{theo3} 
Let $\mathcal Z_{n}\subset \Delta$ be the discrete collection of points $\{x_{n,k}\}$ defined
via
\be\label{th3zero}
 \vartheta\left(\frac{n}{2\pi i}\left[\oint_{\mathcal{B}_1}\d\phi+\varkappa_2\oint_{\mathcal{A}_1}\d\phi\right]
  +\frac{1}{2}\left[\int_{a_1}^0\frac{\eta_2}{\mathbb{A}_{22}}+\frac{\varkappa_2}{2}\right]\right)=0
\ee
where $\vartheta(\rho) = \vartheta_3(\rho|\,\varkappa_2)= \sum_{m\in \Z}\exp[i\pi m^2 \varkappa_2 + 2\pi i m \rho]$ is the Jacobi theta function and we put 
\be
\eta_2 = \frac {z\,\d z}{w}\ ,\ \ \mathbb A_{22} = \oint_{\mathcal A_2} \eta_2\ ,\ \ \ \ \ \varkappa_2= \frac {  \oint_{\mathcal B_2} \eta_2}{ \oint_{\mathcal A_2} \eta_2}
\ee
for a specific choice of homology basis $\{\mathcal{A}_j,\mathcal{B}_j\}_{j=1}^2$ shown in Figure \ref{modRHP1}. Uniformly for $x$ belonging to any compact subset of $\Delta \setminus \mathcal Z_{n}$ the polynomial $\psi_{n}(\zeta;n^\frac 23 x),\zeta\in S^1$ exists for $n$ sufficiently large. Moreover, for $x$ in the same compact set,  $\wh Q_n(x)\neq 0$ for $n$ large enough.
\et

The condition \eqref{th3zero} is equivalently formulated as 
\be
\label{th3zeroprime}
\frac{n}{2\pi i}\left[\oint_{\mathcal{B}_1}\d\phi+\varkappa_2\oint_{\mathcal{A}_1}\d\phi\right]
  +\frac{1}{2}\left[\int_{a_1}^0\frac{\eta_2}{\mathbb{A}_{22}}+\frac{\varkappa_2}{2}\right] =
\frac {1+ \varkappa_2}2  + k + \ell \varkappa_2,\ \ \ k, \ell \in \Z
\ee
The integrals $\oint_{\mathcal{B}_1, \mathcal A_1}\d\phi$ are purely imaginary (see \eqref{gen2con3}) and since $\Im \varkappa_2>0$,  any complex number $\rho$ can be uniquely expressed as $\sigma+ \varkappa_2 \xi \ \ \sigma, \xi  \in \R$. Thus the condition \eqref{th3zeroprime} can be expressed as the pair of quantization conditions 
\be
\frac{n}{2\pi i} \oint_{\mathcal{B}_1}\d\phi  = \frac 12 + k + \sigma \ ,\qquad
\frac{n}{2\pi i} \oint_{\mathcal{A}_1}\d\phi  = \frac 1 4 + \ell  + \xi\ ,
\label{quantcond}
\ee
where $k,\ell \in \Z$ and $\sigma + \varkappa_2 \xi = \frac{1}{2}\int_{a_1}^0\frac{\eta_2}{\mathbb{A}_{22}}$.
The lines of the quantization conditions \eqref{quantcond} are shown in Figure \ref{quant} for different values of $n$. We note that the agreement is remarkably much better - even for very small values of $n$ - than what Theorem \ref{theo4} below leads to expect.

\begin{figure}
\includegraphics[width=0.23\textwidth]{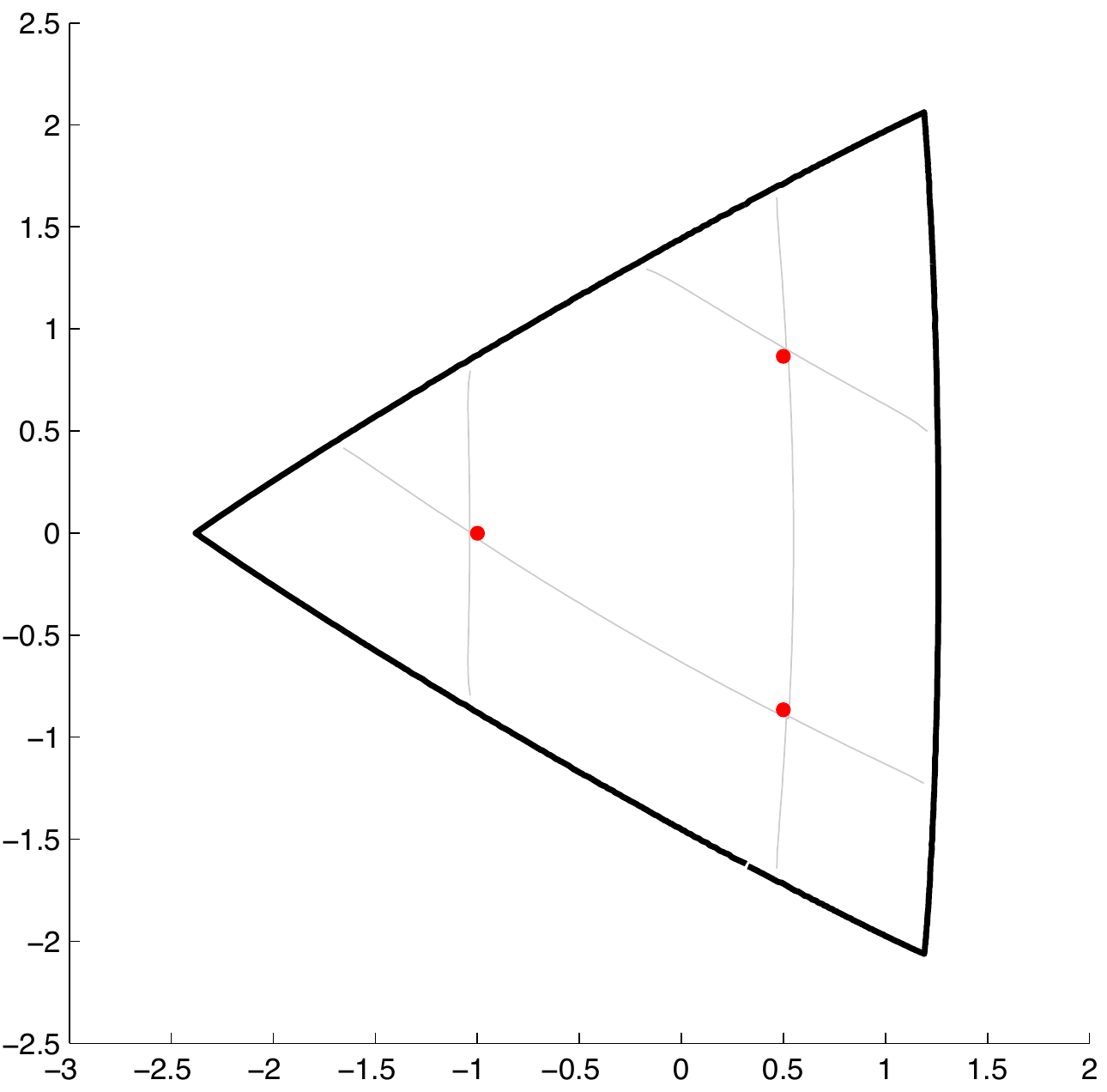}
\includegraphics[width=0.23\textwidth]{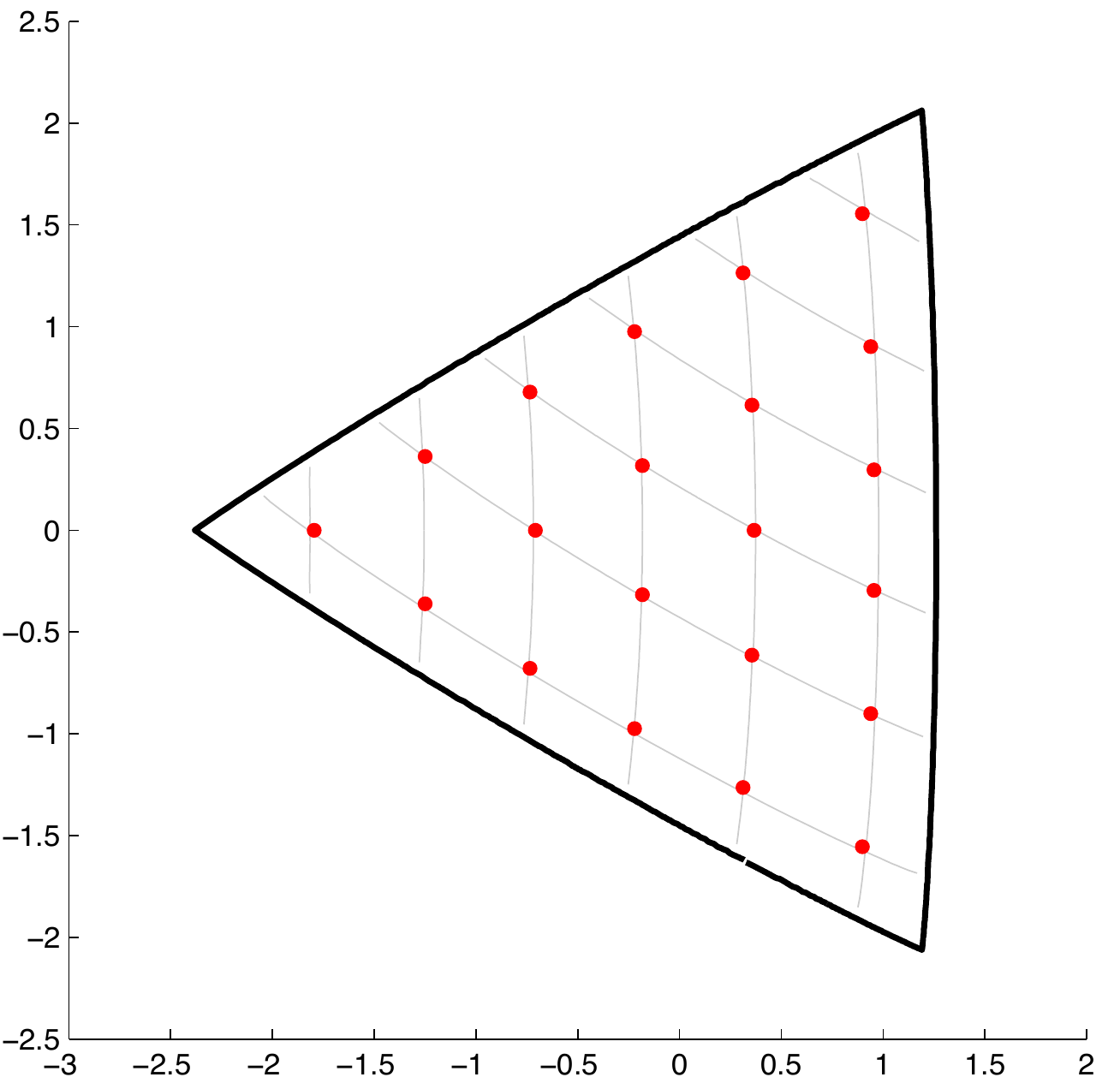}
\includegraphics[width=0.23\textwidth]{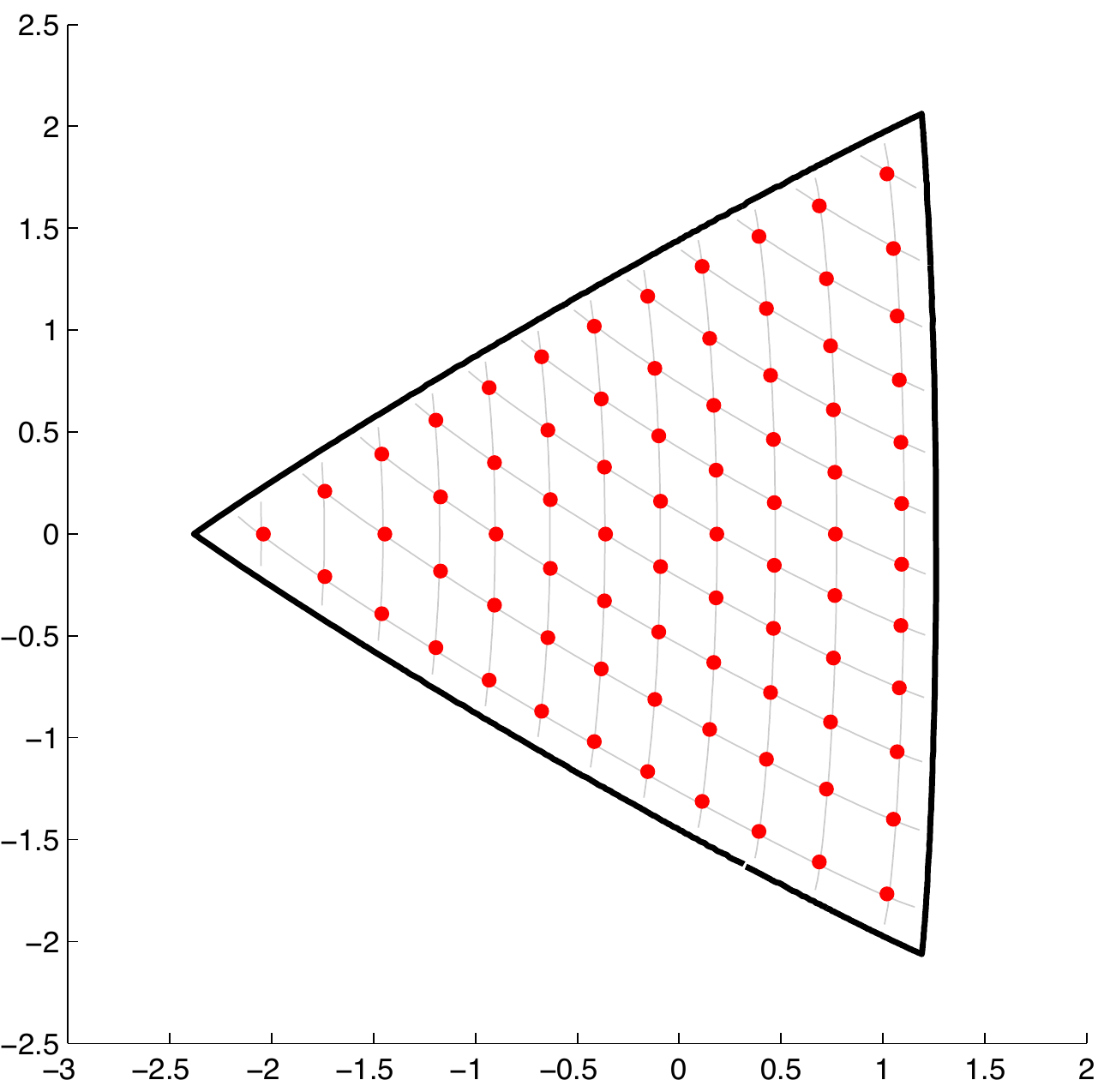}
\includegraphics[width=0.23\textwidth]{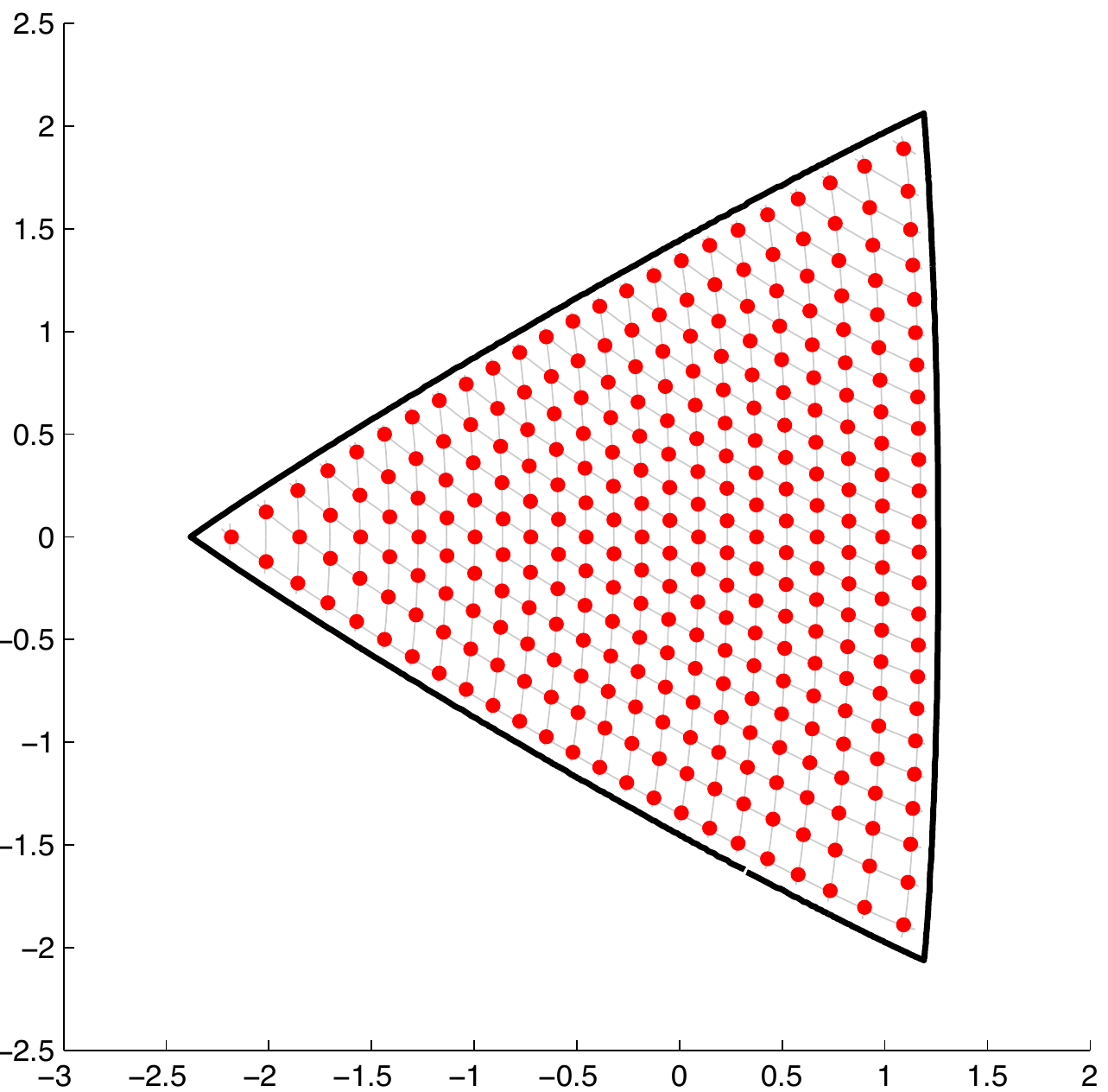}
\caption{The lines expressing the quantization conditions \eqref{quantcond} and the zeros of the polynomial $\wh Q_n(x)$ computed numerically, for $n = 2,6,12,24$ (from left to right). As can be seen, the zeros of $\wh Q_n(x)$ form a regular pattern, a feature which was first observed in \cite{CM}.}
\label{quant}
\end{figure}\smallskip

The outlined reduction of the appearing Riemann theta function to a Jacobi theta function combined with an application of the argument principle to smooth functions yields the following Theorem that localizes the zeros of $\wh Q_n(x)$ within disks of radius $\mathcal O(n^{-1})$.
\bth\label{theo4} 
For each compact subset  $K$ of the interior of $\Delta$, and for any arbitrarily small $r_0>0$ there exists $n_0= n_0(K, r_0)$  such that the zeros of  $ \wh Q_n(x) = \mathcal{Q}_n(n^{\frac{2}{3}}x),n\geq n_0$ that fall within $K$ are inside disks of radius $r_0/n$ centered around the points of the exceptional set $\mathcal Z_n$.
\et
The paper is organised as follows: we first prove Theorem \ref{theo1} in Section \ref{sec1} by applying identity \eqref{JT}. After that preliminary steps for the 
Riemann-Hilbert analysis of the (pseudo) orthogonal polynomials $\{\psi_n(\zeta;x)\}$ are taken in Section \ref{sec2}. This includes a rescaling of the weight and the construction of the relevant $g$-functions which are used outside and inside the star. The $g$-functions reduce the initial RHP to the solution of model problems and we state their explicit construction in Section \ref{sec3}. Section \ref{sec4} completes the proofs of Theorems \ref{theo2} and \ref{theo3}. In the end we compare our results obtained outside and inside the star to \cite{BM}, this is done in Section \ref{sec5} which also gives the proof of Theorem \ref{theo4}.


\section{Proof of Theorem \ref{theo1}}\label{sec1}
The identity \eqref{the1} follows from several equivalence transformations. First we go back to \eqref{JT} and notice that $q_0(x)=1$, the convention $q_k(x)\equiv 0$ for $k<0$ as well as the empty product imply
\begin{equation*}
  \mathcal{Q}_n(x) = \prod_{k=1}^n(2k+1)^{n-k}\,\det\Big[q_{n-2\ell+j}(x)\Big]_{\ell,j=0}^n,\hspace{0.5cm}n\geq 0.
\end{equation*}
Thus \eqref{the1} is in fact equivalent to the identity
\be\label{eq1}
  \left\{\det\Big[q_{n-2\ell+j}(x)\Big]_{\ell,j=1}^n\right\}^2 =(-)^{\lfloor{\frac{n}{2}\rfloor}}2^{(n-1)^2}\,\det\Big[\mu_{\ell+j-2}(x)\Big]_{\ell,j=1}^n,\hspace{0.5cm}n\geq 1.
\ee
Since
\begin{equation*}
  F_1(t;x)\big(F_1(t;x)\pm F_1(-t;x)\big)=F_2(2t;x)\pm 1
\end{equation*}
holds identically in $t$ and $x$, we have from comparison
\bea
  2^{k-1}\mu_k(x)+\frac{\delta_{k0}}{2} &=& \sum_{m=0}^kq_{k-2m}(x)q_{2m}(x),\hspace{1.2cm}k\geq 0,\label{co1}\\
  2^{k-1}\mu_k(x)-\frac{\delta_{k0}}{2} &=& \sum_{m=0}^kq_{k-2m-1}(x)q_{2m+1}(x),\hspace{0.5cm}k\geq 0.\label{co2}
\eea
Now back to the left hand side of \eqref{eq1} with $n\geq 1$. We first shift indices, then permute columns and rows in the second factor, transpose the first matrix and then evaluate
the product
\begin{align*}
  \bigg\{\det\Big[q_{n-2\ell+j}&(x)\Big]_{\ell,j=1}^n\bigg\}^2 = \left\{\det\Big[q_{n-1-2\ell+j}(x)\Big]_{\ell,j=0}^{n-1}\right\}^2=
  \det\Big[q_{n-1-2\ell+j}(x)\Big]_{\ell,j=0}^{n-1}\,\det\Big[q_{2\ell-j}(x)\Big]_{\ell,j=0}^{n-1}\\
&=\det\Big[q_{n-1-2j+\ell}(x)\Big]_{\ell,j=0}^{n-1}\,\det\Big[q_{2\ell-j}(x)\Big]_{\ell,j=0}^{n-1}=
  \det\left[\,\sum_{m=0}^{n-1}q_{n-1-2m+\ell}(x)q_{2m-j}(x)\right]_{\ell,j=0}^{n-1}.
\end{align*}
Now use \eqref{co1} and \eqref{co2} to evaluate the entries. In the first row
\begin{align*}
  &\sum_{m=0}^{n-1}q_{n-1-2m}(x)q_{2m-j}(x) \hspace{0.65cm}= 2^{n-2-j}\mu_{n-1-j}(x),\hspace{0.5cm}j=0,\ldots,n-2\\
&\sum_{m=0}^{n-1}q_{n-1-2m}(x)q_{2m-(n-1)}(x) = \begin{cases}
               \frac{\mu_0(x)}{2}-\frac{1}{2}, &n\equiv 0\mod 2\\
	\frac{\mu_0(x)}{2}+\frac{1}{2},&n\equiv 1\mod 2.
              \end{cases}
\end{align*}
For the second and subsequent rows
\begin{equation*}
  \sum_{m=0}^{n-1}q_{n-1-2m+\ell}(x)q_{2m-j}(x) = 2^{n-2-j+\ell}\mu_{n-1-j+\ell}(x),\hspace{0.5cm}j=0,\ldots,n-1,\ \ \ell=1,\ldots,n-1
\end{equation*}
which shows that
\begin{equation*}
  \left\{\det\Big[q_{n-2\ell+j}(x)\Big]_{\ell,j=1}^n\right\}^2 =\det\Big[2^{n-2-j+l}\tilde{\mu}_{n-1-j+\ell}(x)\Big]_{\ell,j=0}^{n-1} =
  (-)^{\lfloor\frac{n}{2}\rfloor}\det\Big[2^{\ell+j-1}\tilde{\mu}_{\ell+j}(x)\Big]_{\ell,j=0}^{n-1}
\end{equation*}
where we permuted only the columns ($j\mapsto n-1-j$) in the last step and introduced
\begin{equation*}
  \tilde{\mu}_k(x) = \mu_k(x),\ \ k\geq 1;\hspace{0.5cm} \tilde{\mu}_0(x) = \begin{cases}
                                                                             0, & n\equiv 0\mod 2,\\
2, & n\equiv 1\mod 2.
                                                                            \end{cases}
\end{equation*}
Notice that we have suppressed the dependency on $n$ in the notation of $\tilde{\mu}_k(x)$. Factoring out common factors we continue
\begin{equation*}
  \left\{\det\Big[q_{n-2\ell+j}(x)\Big]_{\ell,j=1}^n\right\}^2=(-)^{\lfloor\frac{n}{2}\rfloor}2^{(n-1)^2}\frac{1}{2}
  \det\Big[\tilde{\mu}_{\ell+j}(x)\Big]_{\ell,j=0}^{n-1} = (-)^{\lfloor\frac{n}{2}\rfloor}2^{(n-1)^2}\frac{1}{2}\det\Big[\tilde{\mu}_{\ell+j-2}(x)\Big]_{\ell,j=1}^n
\end{equation*}
and therefore, compare \eqref{eq1}, are left to show that
\begin{equation}\label{eq2}
  \det\Big[\tilde{\mu}_{\ell+j-2}(x)\Big]_{\ell,j=1}^{n} = 2\det\Big[\mu_{\ell+j-2}(x)\Big]_{\ell,j=1}^{n},\hspace{0.5cm}n\geq 1.
\end{equation}
This identity is definitely satisfied for $n=1$, hence let us assume that $n\geq 2$. By multilinearity 
\be\nonumber
\det\Big[\tilde{\mu}_{\ell+j-2}(x)\Big]_{\ell,j=1}^n=\det\Big[\mu_{\ell+j-2}(x)\Big]_{\ell,j=1}^n+(-)^{n-1}\det\Big[\mu_{\ell+j}(x)\Big]_{\ell,j=1}^{n-1},
\ee
thus we need to verify that
\begin{equation}\label{eq3}
  (-)^{n-1}\det\Big[\mu_{\ell+j}(x)\Big]_{\ell,j=1}^{n-1} = \det\Big[\mu_{\ell+j-2}(x)\Big]_{\ell,j=1}^{n},\hspace{0.5cm}n\geq 2.
\end{equation}
Both sides in the latter equation are polynomials in $x\in\mathbb{C}$, hence if we manage to establish equality in \eqref{eq3} outside a set $E\subset\mathbb{C}$ of 
measure zero, it follows by continuation for all $x\in\mathbb{C}$. In our case, we will verify \eqref{eq3} for $x\in\mathbb{C}\backslash E$ with
\begin{equation*}
  E=\left\{x\in\mathbb{C}:\ \det\Big[\mu_{j+k}(x)\Big]_{j,k=1}^m=0,\ \ m=1,\ldots,n-2\right\}
\end{equation*}
using the following algorithm: we start $(1)$ on the right hand side of \eqref{eq3} and add appropriate combinations of rows to subsequent rows, starting from row $n$ and continuing with row $n-1$, etc. Formally 
with $\mu_k(x)\equiv 0$ for $k<0$
\begin{eqnarray*}
  \mu_{\ell+j-2}&\mapsto& \mu^{(1)}_{\ell,j} = \mu_{\ell+j-2}-\left\{\frac{x\mu_{(\ell-1)+j-2}}{\ell-1}-\frac{\mu_{(\ell-3)+j-2}}{\ell-1}\right\},\hspace{0.5cm}\ell=4,\ldots,n\\
  \mu_{\ell+j-2}&\mapsto& \mu^{(1)}_{\ell,j} = \mu_{\ell+j-2}-\left\{\frac{x\mu_{(\ell-1)+j-2}}{\ell-1}\right\},\hspace{0.5cm}\ell=2,3
\end{eqnarray*}
for any $j\in\{1,\ldots,n\}$. Recalling \eqref{rec1} this step implies
\begin{equation*}
  \mu^{(1)}_{\ell,1}= \mu^{(1)}_{1,\ell}=0,\hspace{0.25cm} \ell=2,\ldots,n;\hspace{0.5cm}\mu^{(1)}_{\ell,\ell}= -\mu_{2\ell-2},
  \hspace{0.25cm}\ell=2,\ldots,n,\ \ell\neq 3;\hspace{0.5cm}\mu^{(1)}_{3,3}=-\mu_4-\frac{\mu_1}{2}.
\end{equation*}
In the next step $(2)$ we add an $\alpha_1$-multiple of the second row to the third row and then an $\alpha_2$-multiple of the second column to the third column, where
\begin{equation*}
  \alpha_1 = \frac{\mu^{(1)}_{23}+\mu_3}{\mu_2},\hspace{0.5cm}\alpha_2 = \frac{\mu^{(1)}_{32}+\mu_3}{\mu_2},
\end{equation*}
provided $\mu_2\neq 0$, which is satisfied for $x\in\mathbb{C}\backslash E$. Using again the recursion \eqref{rec1}, this move leads to the replacement
\begin{eqnarray*}
  \mu^{(1)}_{\ell,j}&\mapsto& \mu^{(2)}_{\ell,j} = \mu^{(1)}_{\ell,j},\hspace{0.5cm}\ell,j= 4,\ldots,n\\
  \mu^{(1)}_{2,j}&\mapsto& \mu^{(2)}_{2,j} = \mu^{(1)}_{2,j},\hspace{0.5cm}j=4,\ldots,n\\
  \mu^{(1)}_{\ell,2}&\mapsto& \mu^{(2)}_{\ell,2} = \mu^{(1)}_{\ell,2},\hspace{0.5cm}\ell=4,\ldots,n
\end{eqnarray*}
as well as
\begin{eqnarray*}
  \mu^{(1)}_{3,j}&\mapsto& \mu^{(2)}_{3,j} = \mu^{(1)}_{3,j}+\alpha_1\mu^{(1)}_{2,j},\hspace{0.5cm}j=4,\ldots,n\\
  \mu^{(1)}_{\ell,3}&\mapsto& \mu^{(2)}_{\ell,3} = \mu^{(1)}_{\ell,3}+\alpha_2\mu^{(1)}_{\ell,2},\hspace{0.5cm}\ell=4,\ldots,n
\end{eqnarray*}
and most importantly
\begin{equation}
  \mu^{(1)}_{2,2}\mapsto \mu^{(2)}_{2,2} = -\mu_2,\hspace{0.5cm} \mu^{(1)}_{2,3}\mapsto \mu^{(2)}_{2,3}=-\mu_3,\hspace{0.5cm}
  \mu^{(1)}_{3,2}\mapsto\mu^{(2)}_{3,2}=-\mu_3,\hspace{0.5cm}\mu^{(1)}_{3,3}\mapsto\mu^{(2)}_{3,3}=-\mu_4.
\end{equation}
Hence step $(2)$ shows that
\begin{equation*}
  \det\Big[\mu_{\ell+j-2}(x)\Big]_{\ell,j=1}^{n} = \det\begin{bmatrix}
                                                        \mu_0 & 0 & 0 & 0 &\cdots & 0\\
0 & -\mu_2 & -\mu_3 & \mu_{24}^{(2)} & \cdots & \mu_{2n}^{(2)}\\
0 & -\mu_3 & -\mu_4 & \mu_{34}^{(2)} & \cdots & \mu_{3n}^{(2)}\\
0 & \mu_{42}^{(2)} & \mu_{43}^{(2)} & -\mu_6 & & \vdots\\
\vdots& \vdots & \vdots & &\ddots & \\
0 & \mu_{n2}^{(2)} & \mu_{n3}^{(2)} & \cdots & & -\mu_{2n-2}\\
                                                       \end{bmatrix},\hspace{0.5cm}\mu_2\neq 0.
\end{equation*}
In step $(3)$ we add a $\beta_{11}$-multiple of the second column and a $\beta_{21}$-multiple of the third column to the fourth column, followed by
then adding a $\beta_{12}$-multiple of the second row and a $\beta_{22}$-multiple of the third row to the fourth row. Here $\{\beta_{jk}\}$ are 
determined from the linear system
\begin{equation*}
  \begin{bmatrix}
   \mu_2 & \mu_3\\
  \mu_3 & \mu_4 \\
  \end{bmatrix}\begin{bmatrix}
  \beta_{11}& \beta_{12}\\
  \beta_{21}& \beta_{22}\\
\end{bmatrix} = \begin{bmatrix}
  \mu_{24}^{(2)} +\mu_4 &\mu_{42}^{(2)}+\mu_4\\
  \mu_{34}^{(2)} +\mu_5 & \mu_{43}^{(2)}+\mu_5\\
\end{bmatrix}.
\end{equation*}
provided the determinant of its coefficients matrix, i.e. $\det\big[\mu_{j+k}\big]_{j,k=1}^2$ does not vanish, which again is guaranteed for 
$x\in\mathbb{C}\backslash E$. In terms of the recursion \eqref{rec1}, this leads us to
\begin{equation*}
  \det\Big[\mu_{\ell+j-2}(x)\Big]_{\ell,j=1}^{n} =\det\begin{bmatrix}
                                                       \mu_0 & 0 & 0 & 0 & 0 &\cdots & 0\\
0 & -\mu_2 & -\mu_3 & -\mu_4 & \mu_{25}^{(3)} & \cdots & \mu_{2n}^{(3)}\\
0 & -\mu_3 & -\mu_4 & -\mu_5 & \mu_{35}^{(3)} &\cdots & \mu_{3n}^{(3)}\\
0 & -\mu_4 & -\mu_5 & -\mu_6 & \mu_{45}^{(3)} & \cdots & \mu_{4n}^{(3)}\\
0 & \mu_{52}^{(3)} & \mu_{53}^{(3)} & \mu_{54}^{(3)} & -\mu_8 & & \vdots\\
\vdots & \vdots & \vdots & \vdots & &\ddots &\\
0 & \mu_{n2}^{(3)} & \mu_{n3}^{(3)} & \mu_{n4}^{(3)} & \cdots & & -\mu_{2n-2}\\
                                                      \end{bmatrix},\hspace{0.5cm}\det\big[\mu_{j+k}\big]_{j,k=1}^2\neq 0.
\end{equation*}
Step $(3)$ is then followed by step $(4)$ in which we add appropriate combinations of the second, third and fourth column/row to the fifth column/row, and so forth.
After $(n-1)$ steps in this algorithm, we end up with the identity
\begin{equation*}
  \det\Big[\mu_{\ell+j-2}(x)\Big]_{\ell,j=1}^{n} =\det\begin{bmatrix}
                                                       \mu_0 & 0 & 0 & \cdots &0 & 0\\
0 & -\mu_2 & -\mu_3 & \cdots & -\mu_{n-1} &  \mu_{2n}^{(n-1)}\\
0 & -\mu_3 & -\mu_4 & \cdots & -\mu_n &  \mu_{3n}^{(n-1)}\\
\vdots & \vdots & \vdots & \ddots &   & \vdots\\
0 & -\mu_{n-1} & -\mu_n &  & -\mu_{2n-4} & \mu_{n-1,n}^{(n-1)}\\
& & & & & \\
0 & \mu_{n2}^{(n-1)} & \mu_{n3}^{(n-1)} & \cdots & \mu_{n,n-1}^{(n-1)}  & -\mu_{2n-2}\\
                                                      \end{bmatrix},\hspace{0.5cm}\det\big[\mu_{j+k}\big]_{j,k=1}^{n-3}\neq 0.
\end{equation*}
In the final step $(n)$ we add combinations of the second, third, fourth,$\ldots$,$(n-1)^{\textnormal{st}}$ row/column to the $n^{\textnormal{th}}$ row/column according to the
system
\begin{equation*}
  \begin{bmatrix}
   \mu_2 & \mu_3 & \cdots & \mu_{n-1}\\
  \vdots & & & \vdots\\
  \mu_{n-1} & \mu_{n} &\cdots & \mu_{2n-4}\\
  \end{bmatrix}\begin{bmatrix}
  \gamma_{11} & \gamma_{12}\\
  \vdots & \vdots\\
  \gamma_{n-2,1} & \gamma_{n-2,2} \\
  \end{bmatrix} = \begin{bmatrix}
  \mu_{2n}^{(n-1)}+\mu_n & \mu_{n2}^{(n-1)}+\mu_n\\
  \vdots & \vdots\\
  \mu_{n-1,n}^{(n-1)} +\mu_{2n-3} & \mu_{n,n-1}^{(n-1)}+\mu_{2n-3}\\
  \end{bmatrix}
\end{equation*}
and establish \eqref{eq3} from the recursion \eqref{rec1} after extracting $(n-1)$ signs, provided that $\det\big[\mu_{j+k}\big]_{j,k=1}^{n-2}\neq 0$, which holds
for $x\in\mathbb{C}\backslash E$. This verifies \eqref{eq3} by analytic continuation and tracing back
all equivalence transformations completes therefore the proof of Theorem \ref{theo1}.

\section{Riemann-Hilbert analysis - preliminary steps}\label{sec2}
It is well known that orthogonal polynomials can be characterized in terms of the solution of a Riemann-Hilbert problem (RHP), first introduced by Fokas, Its and Kitaev \cite{FIK}. 
In present context of \eqref{meas}, the relevant RHP is defined as follows:
\begin{definition} 
Let $\gamma$ be a simple, smooth Jordan curve encircling the origin in clockwise orientation. 
Determine the $2\times 2$ matrix-valued piecewise analytic function $\Gamma(z)\equiv\Gamma(z;x,n)$ such that
 \begin{itemize}
  \item $\Gamma(z)$ is analytic for $z\in\mathbb{C}\backslash\gamma$
  \item The boundary values on $\gamma$ are related via
  \be\label{j1}
    \Gamma_+(z) = \Gamma_-(z)\begin{bmatrix}
                                          1 & w(z;x)\\
0 & 1\\
                                         \end{bmatrix},\hspace{0.5cm}z\in\gamma;\hspace{1cm} w(z;x) =\frac{1}{2\pi i}e^{-\theta(z;x)}\frac{1}{z}
  \ee
  with $\theta$ as in \eqref{meas}.
  \item As $z\rightarrow\infty$, we have
  \be\label{asy1}
    \Gamma(z) = \left(I+\mathcal O\left(z^{-1}\right)\right)z^{n\sigma_3}
  \ee
 \end{itemize}
\end{definition}
The solvability of the $\Gamma$-RHP is equivalent to the existence of the orthogonal polynomial $\psi_{n}(\zeta;x)$, in fact \cite{D}
\begin{equation}\label{Drel}
 \psi_n(\zeta;x) = \Gamma_{11}(\zeta;x,n),
\end{equation}
and in addition
\begin{equation*}
 h_n(x) = -2\pi i\lim_{z\rightarrow\infty}z\Big(\Gamma(z;x,n)z^{-n\sigma_3}-I\Big)_{12},\hspace{0.5cm}
 \big(h_{n-1}(x)\big)^{-1}=\frac{i}{2\pi}\lim_{z\rightarrow\infty}z\Big(\Gamma(z;x,n)z^{-n\sigma_3}-I\Big)_{21}.
\end{equation*}
We will solve the latter RHP as $n\rightarrow\infty$ for rescaled $x\in\mathbb{C}$  outside and inside (there subject to an additional constraint) the star shaped region described in Definition \eqref{defstar}. Our approach uses standard methods from
the Deift-Zhou nonlinear steepest descent framework (cf. \cite{DZ},\cite{DKM},\cite{DKMVZ}) and consists of a series of explicit and invertible transformations.
\subsection{Rescaling and the abstract g-function} 
In order to study the polynomials $\wh Q_n(x) = \mathcal Q_n(n^\frac 23 x)$ we consider the following change of variables 
\begin{equation}\label{orel}
 \psi_n^o(z;x) = N^{\frac{n}{3}}\psi_n\left(N^{-\frac{1}{3}}z;N^{\frac{2}{3}}x\right),\hspace{1cm} h_n^o(x) = N^{\frac{2n}{3}}h_n\left(N^{\frac{2}{3}}x\right).
\end{equation}
Consequently, the measure of orthogonality of these new orthogonal polynomials is 
\begin{equation}\label{resc}
  \d\nu(z;x)\mapsto \d\nu^o(z;x) = \frac{1}{2\pi i} e^{-N\theta(z;x)}\frac{\d z}{z},\hspace{0.5cm}N\in\mathbb{N}
\end{equation}
Under the scaling \eqref{resc}, the initial $\Gamma$-RHP is replaced by a RHP for the function $\Gamma^o(z)\equiv\Gamma^o(z;x,n,N)$ with jump
\begin{equation*}
    \Gamma^o_+(z)=\Gamma^o_-(z)\begin{bmatrix}
					     1 & w^o(z;x)\\
					    0 & 1\\
                                           \end{bmatrix},\hspace{0.5cm}z\in \gamma;\hspace{1cm}w^o(z;x) = \frac{1}{2\pi i}e^{-N\theta(z;x)}\frac{1}{z}
\end{equation*}
and asymptotical behavior \eqref{asy1}. As we are interested in the large $n$ asymptotics of the normalizing coefficients $h_n(x)$, we will solve the $\Gamma^o$-RHP for 
$\Gamma^o(z)=\Gamma^o(z;x,n,n)$.\smallskip

\paragraph{\bf Construction of the g-function.} The purpose of the so--called $g$-function is to normalize the RHP at infinity. This function is analytic off  $\mathcal{B}\subset\mathbb{C}$ which consists of a finite union of oriented smooth arcs, whose endpoints and shape  depend on $x\in\mathbb{C}$. 
We shall present a {\em heuristic} derivation of the $g$ function; this will be used as an Ansatz whose validity is confirmed a posteriori. Much of the underlying logic is well known and has been used repeatedly in the literature.

Suppose that there is a {\em positive} density $\rho(z) \d z$ on $\mathcal B$ such that  (the parametric dependence on $x$ is understood):
\begin{equation}\label{con1}
  g(z) = \int_{\mathcal{B}}\ln(z-w)\rho(w)|\d w|,\hspace{0.25cm}z\in\mathbb{C}\backslash\mathcal{B}\hspace{1cm} g_+(z)+g_-(z) =\theta(z;x)+\ell+i\alpha_j,\hspace{0.5cm}z\in\mathcal{B}_j
\end{equation}
where $\mathcal B_j$ denote the connected components of $\mathcal B$ and  $\ell\in\mathbb{C},\alpha_j\in\mathbb{R}$ can only depend on $x$ and furthermore
\begin{equation}\label{gfuncasy}
  g(z)=\ln z+\mathcal O\left(z^{-1}\right),\hspace{0.5cm}z\rightarrow\infty.
\end{equation}
(The conditions implicitly require that $g(z)$ has a jump $g_+(z) - g_-(z) =2\pi i$ on a contour that extends to infinity and that $\int_{\mathcal B} \rho(z)\d z =1$.)
Assuming temporarily the existence of $g(z)$, differentiating in \eqref{con1} with respect to $z$ and applying the Plemelj formula, we have
\begin{equation*}
  \big(g'(z)\big)^2_+=\big(g'(z)\big)_-+2\pi i\theta_{z}(z;x)\rho(z),\hspace{0.5cm}z\in\mathcal{B},\hspace{1cm} (')=\frac{\partial}{\partial z}
\end{equation*}
which is solved as
\begin{equation}\label{gcurve}
  \big(g'(z)\big)^2 = \int_{\mathcal{B}}\frac{\theta_{w}(w;x)\rho(w)}{w-z}\d w = \theta_{z}(z;x)g'(z)+
  \int_{\mathcal{B}}\frac{\theta_{w}(w;x)-\theta_{z}(z;x)}{w-z}\rho(w)\d w.
\end{equation}
The last integral defines a meromorphic function in $z\in\mathbb{C}$ with its only singularity being a fourth order pole at the origin, thus
\eqref{gcurve} implies for $y(z) = g'(z)-\frac{1}{2}\theta_{z}(z;x)$ that
\begin{equation}	
\label{yfun}
  y^2 = \left(\frac{\theta_{z}}{2}\right)^2 + \int_{\mathcal{B}}\frac{\theta_{w}(w;x)-\theta_{z}(z;x)}{w-z}\rho(w)\d w = \frac{P_6(z;x)}{z^8}
\end{equation}
with a polynomial $P_6(z;x) = z^6+\mathcal O\left(z^5\right),z\rightarrow\infty$. All together
\begin{equation}\label{gfunction}
  g(z) = \frac{1}{2}\theta(z;x)+\int_{z_0}^zy(\lambda)\d\lambda+\frac{\ell}{2},\hspace{0.5cm}z\in\mathbb{C}\backslash\mathcal{B}
\end{equation}
where the choice of the initial point $\lambda=z_0$ in the line integral is related to the topology of the branchcut $\mathcal{B}$. 
Additional properties of the real part of $g(z)$ follow from the requirement that $\rho(z)$ is a positive density, but these will be verified en route.

At this stage of the construction of
$g(z)$, the choice of $x\in\mathbb{C}$ is important. To this end let us from now on treat the $g$-function \eqref{gfunction} as defined on a Riemann surface $X$ of genus
$g\geq 0$. The distinction according to the genus places constraints on the topology of $\mathcal{B}$ or equivalently on the form of $P_6(z;x)$.
\subsection{The concrete g-function for genus zero}\label{ggen0} 
We now assume that $\mathcal B$ consists of a single connected component and thus the genus of the Riemann surface where $y(z)$ \eqref{yfun} is defined is zero.
 Since $g=0$, we must have $P_6(z;x) = \left(P_2(z;x)\right)^2(z-c_1)(z-c_2),c_1\neq c_2$ with 
$P_2(z;x)$ a monic polynomial of $\deg P_2 =2$. We are thus left with four unknowns which are determined by the requirements
\begin{equation}\label{cond}
  y(z) = -\frac{1}{2}\theta_{z}(z;x)+\mathcal O(1),\hspace{0.5cm}z\rightarrow 0;\hspace{1cm}y(z) = \frac{1}{z}+\mathcal O\left(z^{-2}\right),
  \hspace{0.5cm}z\rightarrow\infty.
\end{equation}
One solution to the resulting system is given by
\begin{equation}\label{y1}
  y(z) = \frac{1}{z^4}\left(z^2-\frac{1}{2a}\right)\left(z^2+a^2\right)^{\frac{1}{2}},\hspace{0.5cm}z\in\mathbb{C}\backslash\mathcal{B},
\hspace{0.5cm}\mathcal{B}=[-ia,ia]
\end{equation}
where $a=a(x)$ solves the cubic equation (see \eqref{cubic})
\begin{equation}\label{c1}
  1+2xa^2-4a^3=0,
\end{equation}
subject to the condition 
\begin{equation}\label{c2}
 a= \frac{x}{2}+\mathcal{O}\left(x^{-2}\right)\ \ \textnormal{as}\  x\rightarrow\infty.
\end{equation}
Here, $y(z)$ is analytic on $\mathbb{C}$ with a branchcut $\mathcal{B}$ that extends between
the branchpoints $\pm ia$.  As we need to require that $a=a(x)$ is analytic for sufficiently large $|x|$, identity \eqref{y1} with the latter choice \eqref{c2} of $a$ is in fact the
unique solution to the system with the aforementioned characteristica. All other candidates for $a$ and $y$, in particular the non-symmetric ones corresponding to $c_1+c_2\neq 0$, 
have to be excluded - otherwise the replacement $x\mapsto x e^{2\pi i},|x|>R$, which does not affect the polynomials $\{\psi_n^o(z)\}_{n\geq 0}$, would change the $g$-function below and therefore the large degree asymptotics.\smallskip

Substituting \eqref{y1} into \eqref{gfunction} with $z_0=ia$ yields for $z\in\mathbb{C}\backslash[-ia,ia]$
\begin{equation}\label{gfunczero}
  g(z)=\frac{1}{2}\theta(z;x) +\ln\left(z+\sqrt{z^2+a^2}\right)+\sqrt{z^2+a^2}\left(\frac{z^2(1-6a^3)+a^2}{6a^3\lambda^3}\right)
-\ln(ia)+\frac{\ell}{2}
\end{equation}
where we choose the principal branch for the logarithm and all fractional power exponents and the value of the Lagrange multiplier
\begin{equation}\label{Lagra}
  \ell = 2-\frac{1}{3a^3}+\ln\left(\frac{-a^2}{4}\right)
\end{equation}
follows from comparison of \eqref{gfunczero} with \eqref{gfuncasy} as $z\rightarrow\infty$. We now have to discuss the dependency of \eqref{y1} on the choice of $x\in\mathbb{C}$.
Notice that the only branch points of the cubic equation \eqref{c1} are given by the three points 
\begin{equation}\label{xcrit}
  x_k = -\left(\frac{3}{\sqrt[3]{2}}\right) e^{\frac{2\pi i}{3}k},\hspace{0.5cm}k=0,1,2
\end{equation}
or equivalently, these points (in the complex $x$-plane) correspond (via \eqref{c1}) to the critical situation (in the complex $z$-plane) when the branchpoints $z=\pm ia$ collide with one of the
saddle points $z=\pm \frac{1}{\sqrt{2a}}$. Define the complex effective potential
\begin{eqnarray*}
  \varphi(z) &=& \theta(z;x)-2g(z)+\ell = -2\int_{ia}^{z}y(\lambda)\d\lambda,\hspace{0.5cm}z\in\mathbb{C}\backslash[-ia,ia]\\
&=&-2\ln\left(\frac{z+\sqrt{z^2+a^2}}{ia}\right)+\sqrt{z^2+a^2}\left(\frac{z^2(6a^3-1)-a^2}{3a^3z^3}\right).
\end{eqnarray*}
In terms of this potential, the connecting edges $\partial\Delta$ of the star shaped region $\overline{\Delta}$ as introduced in Definition \ref{defstar} are determined
by the condition that the real part of $\varphi(z)$ at one of the saddle-points vanishes, i.e. 
\begin{equation}\label{b1}
  \Re\,\varphi(z)\Big|_{z=\pm\frac{1}{\sqrt{2a}}} = \pm \Re\left\{-2\ln\left(\frac{1+\sqrt{1+2a^3}}{ia\sqrt{2a}}\right)+
\sqrt{1+2a^3}\left(\frac{4a^3-1}{3a^3}\right)\right\}=0.
\end{equation}
Our subsequent analysis will (a posteriori) show that these three curves determine precisely the transition between the genus zero and genus two situation in the $z$-plane,
respectively the transition from the zero-free region $\mathbb{C}\backslash\overline{\Delta}$ to the zero containing region $\overline{\Delta}$ in the (rescaled) $x$-plane.
\br\label{rem1} In \cite{BM}, the conditions for the boundary edges in the complex $\xi$-plane are stated as
\begin{eqnarray*}
\Re \left\{ 
-\ln  \left( -S+\sqrt {{\frac {3{S}^{3}-4}{3S}}} \right) 
-\frac 1{4}\,{S}^{2}\sqrt {{\frac {3\,{S}^{3}-4}{3S}}}-
\frac 2 {3S} \sqrt {{\frac {3\,{S}^{3}-4}{3S}}}+\ln  \left( {\frac {2}{\sqrt {3S}}} \right)\right\}&=&0 \\
3S^3 + 4 \xi S + 8&=&0
\end{eqnarray*}
and the latter system, under the identifications
\be\nonumber
\xi = (12)^{\frac 13} x\ ,\ \  S =  -\le(\frac 23 \ri)^\frac 1 3 \frac 1 a,
\ee
is identical to \eqref{b1},\eqref{c1}. Also in the notation of \cite{BM} with the latter identifications
\begin{equation*}
  \mathfrak c \le(-\le(\frac 23 \ri)^\frac 1 3 \frac 1 a\ri) =2 \varphi\le( \frac{1}{\sqrt{2a}}\ri),
\end{equation*}
hence the star shaped region of Figure \ref{star} is, up to a rescaling, identical to the one shown in Figure 16 in \cite{BM}.
\er
We finish our discussion of the genus zero case by depicting the branchcut $\mathcal{B}$ and various sign properties of $\varphi(z)$: To this end assume that 
$x\in\mathbb{C}:\textnormal{dist}(x,\overline{\Delta}=\Delta\cup\partial\Delta)\geq\delta>0$, i.e. $x$ is chosen from the unbounded domain and we stay away from
the edges and vertices. For such $x$
the support $\mathcal{B}$ is determined implicitly via \eqref{con1} and \eqref{gfunczero}. In Figure \ref{gen0carousel} the branch cut $\mathcal{B}$ is indicated in red for several choices
$x\in\mathbb{C}:\textnormal{dist}(x,\overline{\Delta})\geq\delta>0$ in the complex $z$-plane. The orientation is such that the $(-)$ side extends to the unbounded component:
\begin{figure}
\begin{center}
\resizebox{0.6\textwidth}{!}{\includegraphics{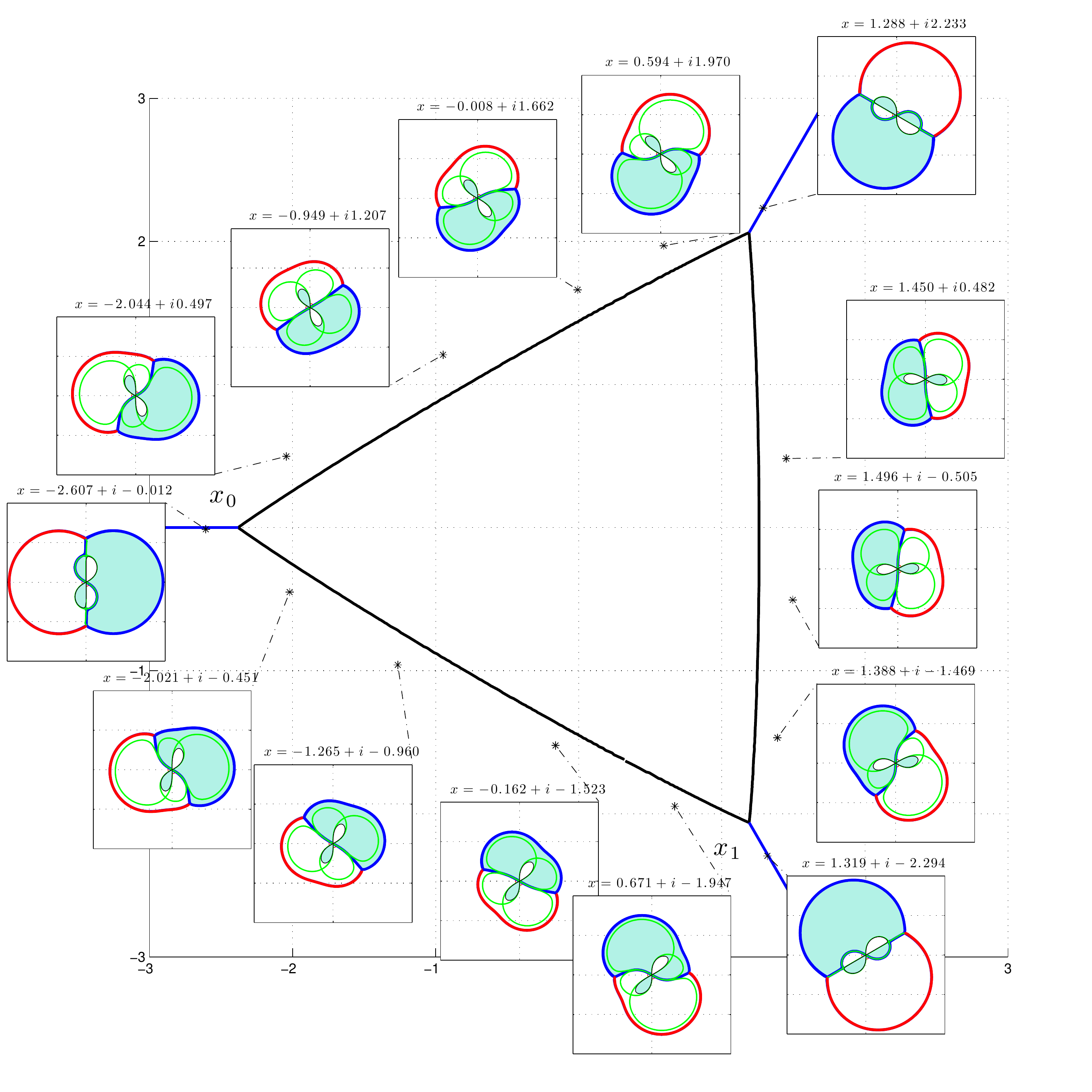}}
\caption{We plot the branch cut $\mathcal{B}$ in red for several choices $x\in\mathbb{C}:\textnormal{dist}(x,\overline{\Delta})\geq\delta>0$. The level sets $\Re\varphi(z)=0$
are shown as solid blue lines and the  shaded regions resemble the components were $\Re\varphi(z)>0$. In the white shaded regions we have $\Re\varphi(z)<0$ and along the green
lines $\Re\varphi(z)\equiv\Re\varphi(\pm(2a)^{-\frac{1}{2}})$.}
\label{gen0carousel}
\end{center}
\end{figure}
\subsection{The concrete g-function for genus two}\label{ggen2} If $g=2$, we have $P_6(z;x)\equiv R(z) = \prod_{k=1}^6(z-a_k)$ where $a_j\neq a_k$ for $j\neq k$. This means we are
working with the hyperelliptic curve
\begin{equation}\label{hyperell}
 X=\big\{(z,w):\,w^2=R(z)\big\};\hspace{0.5cm}\mathcal{B}=\bigcup_{k=1}^3[a_{2k-1},a_{2k}]
\end{equation}
for which we use the representation as two-sheeted covering of the Riemann sphere $\mathbb{C}\mathbb{P}^1$, obtained by glueing together two copies of $\mathbb{C}\backslash\mathcal{B}$ along
$\mathcal{B}$ in the standard way. For future purposes, we let $\sqrt{R(z)}\sim z^3$ as $z\rightarrow\infty^+$ on the first sheet, and $\sqrt{R(z)}\sim -z^3$ as $z\rightarrow
\infty^-$ on the second sheet. As our subsequent analysis shows, we can consider the symmetric choice
\begin{equation}\label{ygen2}
 y(z)=\frac{\sqrt{R(z)}}{z^4},\hspace{0.5cm}z\in\mathbb{C}\backslash\mathcal{B}
\end{equation}
with
\begin{equation*}
  a_1=ia,\ \ \ a_2=ib,\ \ \ a_3=ic;\hspace{0.5cm}a_{k+3}=-a_k,\ \ k=1,2,3.
\end{equation*}
Here, the points $a=a(x),b=b(x),c=c(x)\in\mathbb{C}$ are determined implicitly from \eqref{cond}, i.e. they satisfy
\begin{equation}\label{gen2con1}
  abc=-\frac{1}{2},\hspace{0.5cm}a^2b^2+a^2c^2+b^2c^2=-\frac{x}{2},
\end{equation}
and in addition from the requirement that the level curves
\begin{equation}\label{gen2con2}
 \Re\left(\int_{a_1}^zy(\lambda)\d\lambda\right)\equiv 0
\end{equation}
are connecting the branchpoints where integration is always carried out on the first sheet of $X$ without crossing the branchcut $\mathcal{B}$. We notice that \eqref{gen2con1}
yields two complex equations for the three (complex) unknowns $a,b$ and $c$. However \eqref{gen2con2}, the Boutroux condition, gives another set of two real conditions: by symmetry and since
the residue of the meromorphic differential $\d\phi=y(z)\d z$ at the origin as well as at the two copies of infinity is already real-valued, we can state \eqref{gen2con2} equivalently
as
\begin{equation}\label{gen2con3}
  \Re\left(\oint_{\mathcal{A}_1}\d\phi\right)=0,\hspace{1cm}\Re\left(\oint_{\mathcal{B}_1}\d\phi\right)=0.
\end{equation}
Here we use the cycles $\{\mathcal{A}_j,\mathcal{B}_j\}_{j=1}^2$ which form a basis of the homology group of $X$, see Figure \ref{modRHP1}. Imposing \eqref{gen2con1}, we obtain
\begin{equation}
 R(z)=z^6+z^4\big(a^2+b^2+c^2\big)-z^2\left(\frac{x}{2}\right)+\frac{1}{4}
 \label{R(z)}
\end{equation}
in which the coefficient of $\mathcal{O}(z^4)$ is still undetermined. We write this coefficient as
\begin{equation*}
 s+it = a^2+b^2+c^2=-\sum_{j=1}^3a_j^2,\hspace{0.5cm}s,t\in\mathbb{R}
\end{equation*}
and study the mapping
\begin{equation*}
 (s,t)\stackrel{\phi}{\mapsto}\big(I_1(s,t),I_2(s,t)\big);\hspace{0.5cm}I_1(s,t)=\Re\left(\oint_{\mathcal{A}_1}\d\phi\right),\
 \ I_2(s,t)=\Re\left(\oint_{\mathcal{B}_1}\d\phi\right).
\end{equation*}
Since the functions $I_1,I_2$ only depend on the homology classes, we can compute the Jacobian of the mapping $\phi:\mathbb{R}^2\rightarrow\mathbb{R}^2$ as (cf. \cite{FK})
\begin{equation}\label{Jac}
  \det\begin{bmatrix}
       \partial_s I_1 &\partial_t I_1\\
\partial_s I_2 &\partial_t I_2\\
      \end{bmatrix}=-\frac{1}{4}\det\begin{bmatrix}
\Re\left(\oint_{\mathcal{A}_1}\eta_1\right) & \Im\left(\oint_{\mathcal{A}_1}\eta_1\right)\\
\Re\left(\oint_{\mathcal{B}_1}\eta_1\right) & \Im\left(\oint_{\mathcal{B}_1}\eta_1\right)\\
\end{bmatrix}=-\frac{1}{4}\Im\left(\,\overline{\oint_{\mathcal{A}_1}\eta_1}\cdot\oint_{\mathcal{B}_1}\eta_1\,\right)\neq 0
\end{equation}
which is valid for all $x\in\mathbb{C}:\textnormal{dist}(x,\mathbb{C}\backslash\overline{\Delta})\geq\delta>0$ and were we used the holomorphic differential 
$\eta_1$ written in \eqref{forms}. But \eqref{gen2con1},\eqref{gen2con3} can be solved for $x=0$ as
\begin{equation*}
  a_{1,0}=ia_0=\frac{1}{\sqrt[3]{2}}e^{-i\frac{5\pi}{6}},\hspace{0.5cm} a_{2,0}=ib_0=\frac{1}{\sqrt[3]{2}}e^{i\frac{5\pi}{6}},\hspace{0.5cm} a_{3,0}=ic_0=\frac{1}{\sqrt[3]{2}}e^{i\frac{\pi}{2}},
\end{equation*}
and we have
\begin{equation*}
  s+it\Big|_{a_j=a_{j,0}}=-\sum_{j=1}^3a_{j,0}^2 =0.
\end{equation*}
Hence the non-vanishing of the Jacobian \eqref{Jac} implies (by implicit function theorem) that this solution can be extended uniquely to nonzero $x$ inside the star. Now given
$(s,t)$ corresponding to $x\neq 0$ inside the star, we determine the branchpoints $\pm ia,\pm ib,\pm ic$ from the system
\begin{equation*}
  E_1\equiv abc=-\frac{1}{2},\hspace{0.75cm}E_2\equiv a^2b^2+a^2c^2+b^2c^2=-\frac{x}{2},\hspace{0.75cm}E_3\equiv a^2+b^2+c^2=s+it.
\end{equation*}
These are three equations for the three unknowns, with underlying Jacobian
\begin{equation*}
 \det\frac{\partial(E_1,E_2,E_3)}{\partial(a,b,c)} = 4\big(b^2-c^2\big)\big(a^2-c^2\big)\big(a^2-b^2\big)=4\!\!\!\!\!\prod_{1\leq j<k\leq 3}\!\!\!\big(a_k^2-a_j^2\big)
\end{equation*}
which does not vanish in the genus two case. Thus \eqref{gen2con1},\eqref{gen2con3} determine the branchpoints uniquely as long as we impose the genus two validity. The branchpoints at hand, the $g$-function is now given as in \eqref{gfunction}, i.e.
\begin{equation}\label{gfunctwo}
 g(z) = \frac{1}{2}\theta(z;x)+\int_{a_1}^zy(\lambda)\d\lambda+\frac{\ell}{2},\hspace{0.5cm}z\in\mathbb{C}\backslash\mathcal{B}
\end{equation}
with the Lagrange multiplier equal to
\begin{equation}\label{Lagra2}
  \ell = 2\ln a_1-2\int_{a_1}^{\infty^+}\left(y(\lambda)-\frac{1}{\lambda}\right)\d\lambda.
\end{equation}
The nonlinear steepest descent analysis carried out in Sections \ref{secgen2} and \ref{secgen22} below shows that \eqref{gfunctwo} is precisely the correct $g$-function for the analysis inside the
star shaped region, i.e. for $x\in\mathbb{C}:\textnormal{dist}(x,\mathbb{C}\backslash\overline\Delta)\geq\delta>0$. For such $x$ several level curves of the effective potential
\begin{equation*}
  \varphi(z) = \theta(z;x)-2g(z)-\ell = -2\int_{a_1}^z\d\phi,\hspace{0.5cm}z\in\mathbb{C}\backslash\mathcal{B}
\end{equation*}
are shown in Figure \ref{gen2carousel}.
\begin{figure}[tbh]
\begin{center}
\resizebox{0.5\textwidth}{!}{\includegraphics{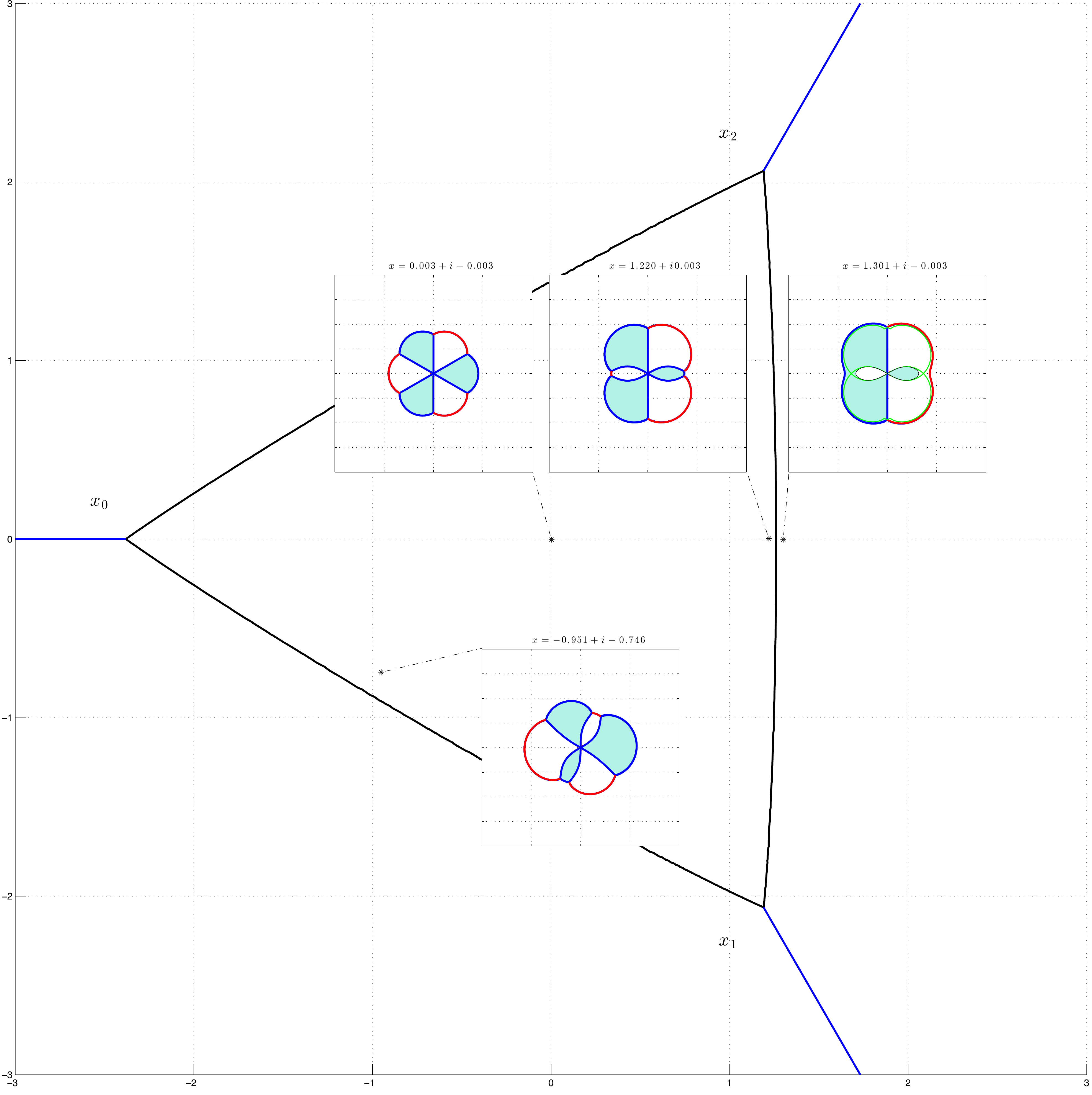}}
\caption{We plot the branch cut $\mathcal{B}$ in red for several choices $x\in\mathbb{C}:\textnormal{dist}(x,\mathbb{C}\backslash\overline{\Delta})\geq\delta>0$. The level sets $\Re\varphi(z)=0$
are shown as solid blue lines and the  shaded regions resemble the components were $\Re\varphi(z)>0$. In the white shaded regions we have $\Re\varphi(z)<0$.}
\label{gen2carousel}
\end{center}
\end{figure}

 At this point we have enough information to move on to the next transformation in the nonlinear steepest descent analysis.
\section{Riemann-Hilbert analysis - construction of parametrices}\label{sec3}
The $g$-functions derived in Subsections \ref{ggen0} and \ref{ggen2} are used to normalize the RHP for $\Gamma^o(z;x,n,n)$ in the spectral variable $z$ at infinity, depending
on whether $x$ lies outside the star shaped region or inside. This eventually reduces the global solution of the RHPs to the construction of local model functions (parametrices)
which are standard near the branchpoints. We emphasize the existence or non-existence of the outer parametrix.
\subsection{Genus zero parametrices}\label{RHP0} Let $x\in\mathbb{C}:\textnormal{dist}(x,\overline{\Delta})\geq\delta>0$, i.e. away from the edges and vertices of the star shaped region. 
Before we employ the $g$-function transformation, we first deform the original jump contour $\gamma$ to a contour which passes through
the branchpoints $\pm ia$, which on one side follows $\mathcal{B}$ and on the other side lies inside the shaded region and again connects the two branch points. We denote the latter part of the
jump contour with $\mathcal{L}$, see Figure \ref{deform1} below for one possible choice. Such a contour deformation is always possible since $w^o(z;x)$ is analytic away from
the origin.
\begin{figure}[tbh]
\begin{center}
\resizebox{0.3\textwidth}{!}{\input{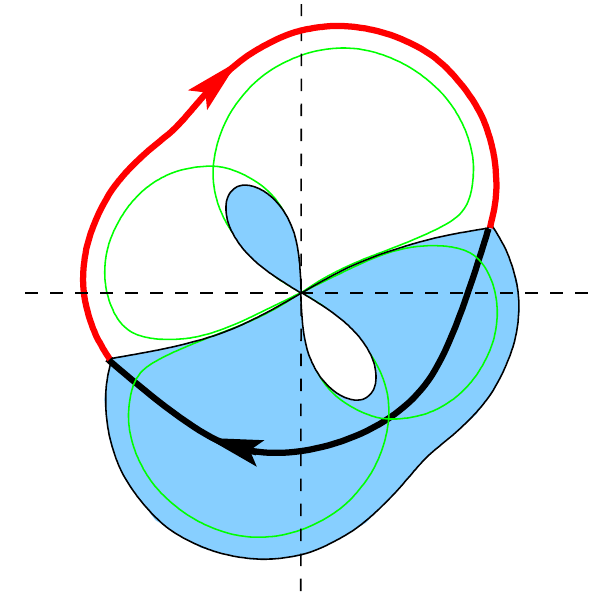_t}}\hspace{2cm} \resizebox{0.3\textwidth}{!}{\input{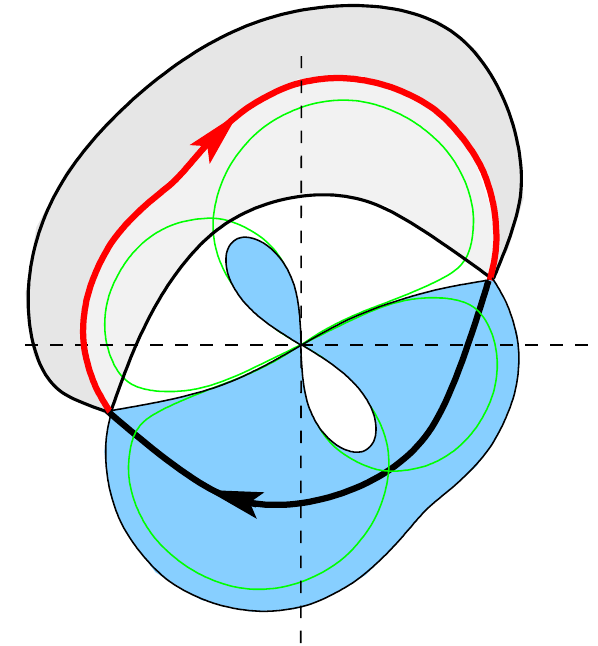_t}}
\begin{minipage}{0.5\textwidth}
\caption{Deformation of the jump contour $\gamma$ to the union of $\mathcal{B}\cup\mathcal{L}$. 
The branchcut $\mathcal{B}$ is indicated in red and $\mathcal{L}$ in black. The picture corresponds to one possible choice of $x\in\mathbb{C}:\textnormal{dist}(x,\overline{\Delta})\geq\delta>0$
with $\Re x<0$ and $\Im x>0$.}
\label{deform1}
\end{minipage}
\begin{minipage}{0.49\textwidth}
\caption{Opening of lenses in genus zero. The contours $\mathcal{B}^{\pm}$ are given the same orientation as $\mathcal{B}$.}
\label{open1}
\end{minipage}
\end{center}
\end{figure}

Now introduce
\begin{equation*}
  Y(z) = \exp\left[-\frac{n\ell}{2}\sigma_3\right]\Gamma^o(z)\exp\left[-n\left(g(z)-\frac{\ell}{2}\right)\sigma_3\right],\hspace{0.5cm}z\in\mathbb{C}\backslash\mathcal{B}
\end{equation*}
where $g(z)$ is given in \eqref{gfunczero} and the Lagrange multiplier in \eqref{Lagra}. Recalling \eqref{con1} (here in genus zero case with $\alpha=0$) we are lead to the
following RHP
\begin{itemize}
 \item $Y(z)$ is analytic for $z\in\mathbb{C}\backslash\left(\mathcal{B}\cup\mathcal{L}\right)$
 \item On the clockwise oriented contour $\mathcal{L}\cup\mathcal{B}$ as shown in Figure \ref{deform1}
  \begin{eqnarray*}
    Y_+(z) &=& Y_-(z)\begin{bmatrix}
				 e^{-n(g_+(z)-g_-(z))} & (2\pi i z)^{-1}\\
  0 & e^{n(g_+(z)-g_-(z))}
                               \end{bmatrix},\hspace{0.5cm}z\in\mathcal{B}\\
  Y_+(z) &=& Y_-(z)\begin{bmatrix}
                                1 & (2\pi iz)^{-1}e^{-n\varphi(z)}\\
  0 & 1\\
                               \end{bmatrix},\hspace{0.5cm}z\in\mathcal{L}
  \end{eqnarray*}
  \item As $z\rightarrow\infty$, we see from \eqref{gfuncasy} that
  \begin{equation*}
    Y(z) = I+\mathcal O\left(z^{-1}\right)
  \end{equation*}
\end{itemize}
As we have $\Re\varphi(z)>0$ in the  shaded regions, one concludes
\begin{equation}\label{esti1}
 \begin{bmatrix}
                                1 & (2\pi iz)^{-1}e^{-n\varphi(z)}\\
  0 & 1\\
                               \end{bmatrix}\longrightarrow I,\hspace{0.5cm}n\rightarrow\infty
\end{equation}
where the convergence is exponentially fast for $z\in\mathcal{L}$ away from the branchpoints $z=\pm ia$. On the other hand
\begin{equation}\label{capitalG}
 G(z)=g_+(z)-g_-(z),\hspace{0.5cm}z\in\mathcal{B}
\end{equation}
admits local analytical continuation into the bounded and unbounded white shaded regions (compare Figure \ref{deform1}). In fact with \eqref{con1} on the $(-)$ side
\begin{equation*}
  G(z) = -2g_-(z)+\theta(z;x)+\ell=\varphi_-(z),\hspace{0.5cm}z\in\mathcal{B}
\end{equation*}
and on the $(+)$ side
\begin{equation*}
 G(z) = 2g_+(z)-\theta(z;x)-\ell=-\varphi_+(z),\hspace{0.5cm}z\in\mathcal{B}.
\end{equation*}
These continuations allow us to factorize the jump on $\mathcal{B}$
\begin{eqnarray*}
  \begin{bmatrix}
    e^{-nG(z)} & (2\pi iz)^{-1}\\
  0 & e^{nG(z)}\\
  \end{bmatrix}&=&\begin{bmatrix}
1 & 0\\
2\pi iz e^{n\varphi_-(z)} & 1\\
\end{bmatrix}\begin{bmatrix}
  0 & (2\pi iz)^{-1}\\
-2\pi iz & 0\\
\end{bmatrix}\begin{bmatrix}
1 & 0\\
2\pi iz e^{n\varphi_+(z)}&1\\
\end{bmatrix}\\
&=&S_{L_1}(z)S_P(z)S_{L_2}(z)
\end{eqnarray*}
and open lenses: We depicted the contours $\mathcal{B}^{\pm}$ in Figure \ref{open1} and introduce
\begin{equation}\label{openup}
  S(z) = \begin{cases}
                Y(z) S_{L_1}(z), &\ z\in\mathcal{L}_1\\
		Y(z) S_{L_2}^{-1}(z),&\ z\in\mathcal{L}_2\\
		Y(z),&\ \textnormal{else}.
               \end{cases}
\end{equation}

This opening leads to jumps on the lense boundaries $\mathcal{B}^{\pm}$
\begin{equation*}
 S_+(z)=S_-(z)\begin{bmatrix}
               1 & 0\\
2\pi ize^{n\varphi(z)} & 1\\
              \end{bmatrix},\hspace{0.5cm}z\in\mathcal{B}^{\pm}
\end{equation*}
as well as on the contours $\mathcal{B}\cup\mathcal{L}$
\begin{equation*}
  S_+(z) = S_-(z)\begin{bmatrix}
                              0 & (2\pi iz)^{-1}\\
-2\pi iz & 0\\
                             \end{bmatrix},\ \ z\in\mathcal{B};\hspace{0.5cm} S_+(z) = S_-(z)\begin{bmatrix}
1 & (2\pi iz)^{-1}e^{-n\varphi(z)}\\
0 & 1\\
\end{bmatrix},\ \ z\in\mathcal{L}.
\end{equation*}
However $\Re\varphi(z)<0$ in the white shaded regions, thus
\begin{equation}\label{esti2}
 \begin{bmatrix}
  1 & 0\\
2\pi i z e^{n\varphi(z)}& 1\\
 \end{bmatrix}\longrightarrow I,\hspace{0.5cm}n\rightarrow\infty
\end{equation}
again exponentially fast for $z\in\mathcal{B}^{\pm}$ away from the branchpoints $z=\pm ia$. The latter \eqref{esti2} combined with \eqref{esti1}, we therefore have to focus
on the local contributions arising from the contour $\mathcal{B}$ and the neighborhood of the branchpoints $z=\pm ia$:\bigskip

Define the outer parametrix $M=M(z;x)$ as
\begin{equation}\label{outer}
  M(z) = (2\pi i)^{-\frac{1}{2}\sigma_3}\left(\frac{a}{2}\,\right)^{-\frac{1}{2}\sigma_3}\big(\delta(z)\big)^{-\sigma_2}\mathcal{D}(z)^{\sigma_3}(2\pi i)^{\frac{1}{2}\sigma_3},\hspace{0.5cm}z\in\mathbb{C}\backslash\mathcal{B}
\end{equation}
where the scalar Szeg\"o function is given by
\begin{equation*}
  \mathcal{D}(z) = \exp\left[\frac{\sqrt{z^2+a^2}}{2\pi i}\int_{ia}^{-ia}\frac{\ln(w)}{\sqrt{w^2+a^2}_+}\frac{\d w}{w-z}\right]=\sqrt{a}
\left(\frac{\sqrt{z^2+a^2}-a}{\sqrt{z^2+a^2}+z}\right)^{\frac{1}{2}}
\end{equation*}
with principal branches for all fractional power functions and
\begin{equation*}
  \delta(z) = \left(\frac{z-ia}{z+ia}\right)^{\frac{1}{4}}\longrightarrow 1,\hspace{0.5cm}z\rightarrow\infty
\end{equation*}
is analytic on $\mathbb{C}\backslash\mathcal{B}$. One checks readily that \eqref{outer} is analytic on $\mathbb{C}\backslash\mathcal{B}$, square integrable up to the boundary and
\begin{equation*}
  M_+(z)=M_-(z)\begin{bmatrix}
                            0 & (2\pi iz)^{-1}\\
-2\pi iz& 0
                           \end{bmatrix},\ \ z\in\mathcal{B};\hspace{1cm}M(z)\longrightarrow I,\hspace{0.5cm}z\rightarrow\infty.
\end{equation*}
Hence the outer parametrix $M=M(z;x),z\in\mathbb{C}\backslash\mathcal{B}$ exists for all $x\in\mathbb{C}:\textnormal{dist}(x,\overline{\Delta})\geq\delta>0$.\bigskip

The inner parametrices near the branchpoints are standard objects in the Deift-Zhou framework since they are constructed out of Airy-functions, see e.g. \cite{DKMVZ}.
We briefly state the final formulae in this subsection and summarize other necessary details in Appendix \ref{app1}. All constructions are motived from the local expansions 
\begin{eqnarray}
  \varphi(z) &=& c_0(z-ia)^{\frac{3}{2}}\big(1+\mathcal O(z-ia)\big),\hspace{0.5cm}z\rightarrow ia,\ \ z\in\mathcal{B}^+\cup\mathcal{B}^-\label{an}\\
  \varphi(z) &=&-2\pi i+\hat{c}_0(z+ia)^{\frac{3}{2}}\big(1+\mathcal O(z+ia)\big),\hspace{0.5cm}z\rightarrow -ia,\ \ z\in\mathcal{B}^+\cup\mathcal{B}^-\label{man}
\end{eqnarray}
where the function $(z+ia)^{\frac{3}{2}}$ is defined for $z\in\mathbb{C}\backslash(-\infty,-ia]$, i.e. with a branchcut to the left of $-ia$ and
$(z-ia)^{\frac{3}{2}}$ for $z\in\mathbb{C}\backslash[ia,\infty)$, i.e. with a branchcut to the right of $ia$. Specifically the parametrix $U(z)$ near $z=-ia$ is given as
\begin{equation}\label{mapara}
  U(z) = B_U(z)\big(-i\sqrt{\pi}\big)A^{RH}\big(\zeta(z)\big)e^{\frac{2}{3}\zeta^{3/2}(z)\sigma_3}(2\pi iz)^{\frac{1}{2}\sigma_3},
  \hspace{0.5cm}|z+ia|<r
\end{equation}
where $A^{RH}(\zeta)$ is defined in \eqref{pl:3}, we use the locally analytic (compare \eqref{man}) change of variables
\begin{equation*}
  \zeta(z) = \left(\frac{3N}{4}\right)^{\frac{2}{3}}\Big(-2g(z)+\theta(z;x)+\ell+2\pi i\Big)^{\frac{2}{3}},\hspace{0.5cm}|z+ia|<r
\end{equation*}
and the multiplier $B_U(z)$ equals
\begin{equation*}
  B_U(z) = M(z)(2\pi iz)^{-\frac{1}{2}\sigma_3}\begin{bmatrix}
                                                          -i & i\\
1 & 1\\
                                                         \end{bmatrix}\zeta^{-\frac{1}{4}\sigma_3}(z).
\end{equation*}
By construction, $B_U(z)$ can have at worst a singularity of square root type at $z=-ia$, however for $z\in\mathcal{B}$ close to $z=-ia$,
\begin{equation*}
  \big(B_U(z)\big)_+ = M_-(z)\begin{bmatrix}
                                          0 & (2\pi iz)^{-1}\\
-2\pi iz & 0\\
                                         \end{bmatrix}(2\pi iz)^{-\frac{1}{2}\sigma_3}\begin{bmatrix}
-i & i\\
1 & 1\\
\end{bmatrix}\zeta_-^{-\frac{1}{4}\sigma_3}(z)e^{-i\frac{\pi}{2}\sigma_3}=\big(B_r(z)\big)_-.
\end{equation*}
Thus the singularity has to be removable and $B_U(z)$ is in fact analytic in a neighborhood of $z=-ia$. We now easily check that the behavior of $A^{RH}(\zeta)$, see Figure
\ref{Airy1}, implies jumps as depicted in Figure \ref{Upara} for $U(z)$. Here the jump contours can always be locally deformed to match the local contours in the $S$-RHP
near the branchpoints.
\begin{figure}[tbh]
\begin{flushleft}
\hspace{2.5cm}\resizebox{0.4\textwidth}{!}{\input{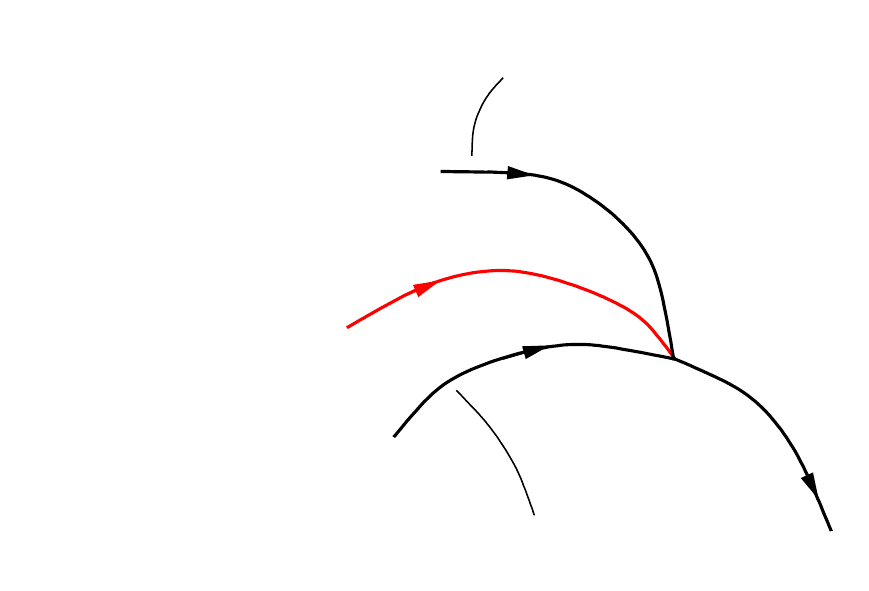_t}}
\caption{Jump behavior of $U(z)$ near $z=-ia$}
\label{Upara}
\end{flushleft}
\end{figure}

Also, as $n\rightarrow\infty$ (hence $|\zeta|\rightarrow\infty$), the two model functions $M(z)$ and $U(z)$ satisfy the desired matching condition, i.e.
\begin{eqnarray}
  U(z) &=& M(z)(2\pi iz)^{-\frac{1}{2}\sigma_3}\Bigg\{I+\frac{1}{48\zeta^{3/2}}\begin{bmatrix}
	1 & 6i\\
	6i & -1\\
	\end{bmatrix}+\mathcal O\left(\zeta^{-6/2}\right)\Bigg\}(2\pi iz)^{\frac{1}{2}\sigma_3}\nonumber\\
& =& \left(I+\mathcal O\left(n^{-1}\right)\right)M(z),\hspace{0.5cm}n\rightarrow\infty\label{match1}
\end{eqnarray}
valid for $x\in\mathbb{C}:\textnormal{dist}(x,\overline{\Delta})\geq\delta>0$ and for all $z\in\mathbb{C}$ such that $0<r_1\leq|z+ia|\leq r_2<\frac{\delta}{2}$.\bigskip

The remaining parametrix near $z=ia$ is introduced along the same lines. We take
\begin{equation}\label{apara}
  V(z) = B_V(z)i\sqrt{\pi}\tilde{A}^{RH}\big(\zeta(z)\big)e^{\frac{2}{3}i\zeta^{3/2}(z)\sigma_3}(2\pi iz)^{\frac{1}{2}\sigma_3},
  \hspace{0.5cm}|z-ia|<r
\end{equation}
with the multiplier
\begin{equation*}
  B_V(z)=M(z)(2\pi iz)^{-\frac{1}{2}\sigma_3}\begin{bmatrix}
                                                        -i & -i\\
1 & -1\\
                                                       \end{bmatrix}\left(e^{-i\pi}\zeta(z)\right)^{\frac{1}{4}\sigma_3},
\end{equation*}
the change of variables
\begin{equation*}
  \zeta(z) = e^{i\pi}\left(\frac{3N}{4}\right)^{\frac{2}{3}}\Big(-2g(z)+\theta(z;x)+\ell\Big)^{\frac{2}{3}},\hspace{0.5cm}|z-ia|<r
\end{equation*}
and the function $\tilde{A}^{RH}$ is given in \eqref{tildeARH}. Also here $B_V(z)$ is analytic near $z=ia$ since
\begin{equation*}
  \big(B_V(z)\big)_+ = M_-(z)\begin{bmatrix}
                                          0 & (2\pi iz)^{-1}\\
-2\pi iz& 0\\
                                         \end{bmatrix}(2\pi iz)^{-\frac{1}{2}\sigma_3}\begin{bmatrix}
-i & -i\\
1 & -1\\
\end{bmatrix}\big(e^{-i\pi}\zeta(z)\big)^{\frac{1}{4}\sigma_3}_-e^{-i\frac{\pi}{2}\sigma_3} = \big(B_V(z)\big)_-
\end{equation*}
but the singularity can be at worst of square root type. Thus $V(z)$ has jumps as in Figure \ref{Vpara} and we have the matching relation
\begin{eqnarray}
  V(z)&=&M(z)(2\pi iz)^{-\frac{1}{2}\sigma_3}\Bigg\{I+\frac{i}{48\zeta^{3/2}}\begin{bmatrix}
	-1 & 6i\\
	6i & 1\\
	\end{bmatrix}+\mathcal O\left(\zeta^{-6/2}\right)\Bigg\}(2\pi iz)^{\frac{1}{2}\sigma_3}\nonumber\\
&=&\left(I+\mathcal O\left(n^{-1}\right)\right)M(z),\hspace{0.5cm}n\rightarrow\infty\label{match2}
\end{eqnarray}
valid for $x\in\mathbb{C}:\textnormal{dist}(x,\overline{\Delta})\geq\delta>0$ and for all $z$ such that $0<r_1\leq|z+ia|\leq r_2<\frac{\delta}{2}$. This completes the 
construction of all relevant parametrices in the genus zero case.
\begin{figure}[tbh]
\begin{flushleft}
\hspace{3cm}\resizebox{0.4\textwidth}{!}{\input{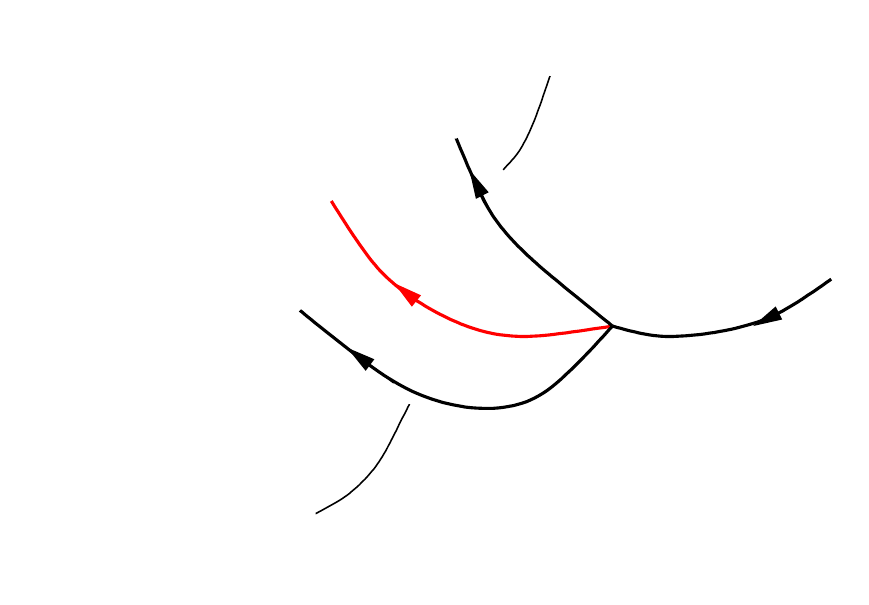_t}}
\caption{Jump behavior of $V(z)$ near $z=ia$}
\label{Vpara}
\end{flushleft}
\end{figure}
\subsection{Genus two parametrices}\label{secgen2} 
Let $x\in\mathbb{C}:\textnormal{dist}(x,\mathbb{C}\backslash\overline{\Delta})\geq\delta>0$ throughout, i.e. we are inside the star shaped
region but stay away from the edges and vertices. Again, we first deform the original jump contour $\gamma$ to a contour which passes through all branchpoints $z=a_j,j=1,\ldots,6$, which on one side follows along the branchcut $\mathcal{B}$ and on the other side lies inside the  shaded region, see Figure \ref{deform2} for a possible choice
\begin{figure}[tbh]
\begin{center}
\resizebox{0.3\textwidth}{!}{\input{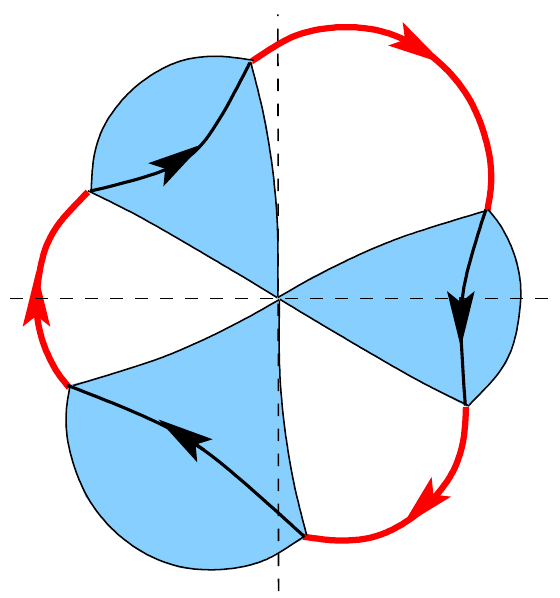_t}}\hspace{2cm} \resizebox{0.3\textwidth}{!}{\input{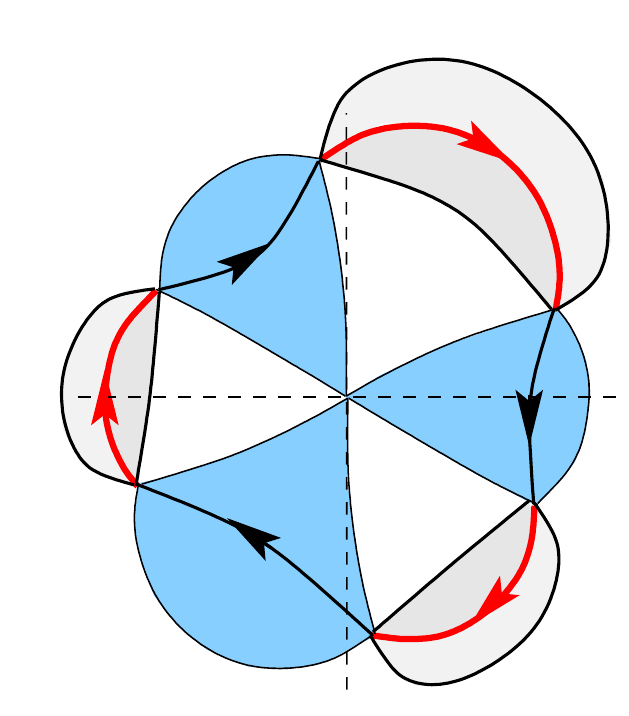_t}}
\begin{minipage}{0.5\textwidth}
\caption{Deformation of the jump contour $\gamma$ to the union of $\mathcal{B}\cup\mathcal{L}$. 
The branchcuts $\mathcal{B}$ are indicated in red and $\mathcal{L}$ in black. The picture corresponds to one possible choice of $x\in\mathbb{C}:
\textnormal{dist}(x,\mathbb{C}\backslash\overline{\Delta})\geq\delta>0$ with $\Re x>0,\Im x>0$.}
\label{deform2}
\end{minipage}
\begin{minipage}{0.45\textwidth}

\caption{Opening of lenses in genus two. We give $\mathcal{B}_j^{\pm}$ the same orientation as $\gamma_j$.}
\label{open2}

\end{minipage}
\end{center}
\end{figure}

We will denote the segments of the deformed contour as follows
\begin{enumerate}
 \item 	The branchcuts $(a_{2j-1},a_{2j}),j=1,2,3$ whose union equals $\mathcal{B}$ are denoted by $\gamma_j$
  \item The gaps $(a_{2j},a_{2j+1}),j=1,2$ are denoted by $\epsilon_j$
  \item The gap $(a_6,a_1)$ is denoted by $\epsilon_0$
\end{enumerate}
With these, the $g$-function transformation
\begin{equation*}
  Y(z) = \exp\left[-\frac{n\ell}{2}\sigma_3\right]\Gamma^o(z)\exp\left[-n\left(g(z)-\frac{\ell}{2}\right)\sigma_3\right],\hspace{0.5cm}z\in\mathbb{C}\backslash\mathcal{B}
\end{equation*}
with \eqref{gfunctwo} and \eqref{Lagra2} transforms the initial RHP to the following one 
\begin{problem}
\label{RHPY} 
Find a $2\times 2$ matrix valued function $Y(z;x)$ such that 
\begin{itemize}
 \item $Y(z)$ is analytic for $z\in\mathbb{C}\backslash(\mathcal{B}\cup \mathcal{L})$
  \item We have jumps
 \begin{eqnarray*}
    Y_+(z) &=& Y_-(z) \begin{bmatrix}
e^{-nG(z)} & (2\pi iz)^{-1}e^{in\alpha_{j-1}} \\
0 & e^{nG(z)} \\
\end{bmatrix},\hspace{0.5cm}z\in\gamma_j,\ j=1,2,3\\
Y_+(z)&=&Y_-(z)\begin{bmatrix}
                            e^{-nG(z)} & (2\pi iz)^{-1}e^{-n\varphi(z)}\\
0 & e^{nG(z)}
                           \end{bmatrix},\hspace{0.5cm}z\in \epsilon_j,\ j=0,1,2
 \end{eqnarray*}
where we use once more
\begin{equation*}
 G(z) = g_+(z)-g_-(z),\hspace{0.5cm}z\in\mathcal{B}\cup \epsilon_0\cup\epsilon_1\cup \epsilon_2;\hspace{1cm}G(z)=0,\ \ z\in\epsilon_0
\end{equation*}
and $\alpha_0=0,\alpha_1,\alpha_2\in\mathbb{R}$
  \item As $z\rightarrow\infty$,
\begin{equation*}
  Y(z) = I+\mathcal O\left(z^{-1}\right)
\end{equation*}
\end{itemize}
\end{problem}
Since in all  shaded regions $\Re\varphi(z)>0$, we obtain for the jump matrix $G_Y(z)$ in the latter problem 
\begin{equation}\label{esti3}
  G_Y(z)e^{nG(z)\sigma_3}\longrightarrow I,\hspace{0.5cm}z\in \epsilon_j,\ j=1,2
\end{equation}
as $n\rightarrow\infty$ and the convergence is exponentially fast away from the branchpoints $z=a_j,j=1,\ldots,6$. In the white shaded regions one uses again
the analytical continuation of $G(z)$ combined with matrix factorizations. These techniques allow us to split the original contours $\gamma_1,\gamma_2,\gamma_3$ as shown
in Figure \ref{open2}. Without listing all formal steps, compare \eqref{openup} in genus zero case, we are lead to a RHP for a function $S(z)$ with jumps
\begin{equation*}
  S_+(z)=S_-(z)\begin{bmatrix}
                            0 & (2\pi iz)^{-1}e^{in\alpha_{j-1}}\\
-2\pi iz e^{-in\alpha_{j-1}}& 0\\
                           \end{bmatrix},\hspace{0.5cm}z\in \gamma_j,\ \ j=1,2,3
\end{equation*}
on the branchcuts. The jumps on the corresponding lense boundaries are again exponentially close to the unit matrix in the limit $n\rightarrow\infty$, hence we need to focus
on the construction of the parametrices.\bigskip

In order to formulate the model RHP we neglect the entries in the jumps of $S(z)$ that are exponentially suppressed and 
use that $G(z)=g_+(z) - g_-(z)$ for $z\in \epsilon_j$ is piecewise constant 
\begin{equation*}
  G(z) = -i\pi\Omega_j,\hspace{0.5cm} j=1,2.
\end{equation*}
We then are lead to the following model RHP
\begin{problem}
\label{RHPM}
Find a $2\times 2$ matrix valued piecewise analytic function $M(z)=M(z;x)$ such that
\begin{itemize}
 \item $M(z)$ is analytic for $z\in\mathbb{C}\backslash(\mathcal{B}\cup \epsilon_1\cup \epsilon_2)$
  \item The boundary values are connected via the jump relations
\begin{eqnarray*}
  M_+(z)&=&M_-(z)\begin{bmatrix}
                              0 & (2\pi iz)^{-1}e^{in\alpha_{j-1}}\\
-2\pi iz e^{-in\alpha_{j-1}}& 0\\
                             \end{bmatrix},\hspace{0.5cm}z\in\gamma_j,\ \ j=1,2,3\\
M_+(\lambda)&=&M_-(\lambda)e^{i n \pi \Omega_j\sigma_3},\hspace{0.5cm}z\in \epsilon_j,\ \ j=1,2
\end{eqnarray*}
  \item $M(z)$ is square integrable at the branchpoints, more precisely for $j\in\{1,\ldots,6\}$
\begin{equation*}
  M(z) = \mathcal O\left(|z-a_j|^{-1/4}\right),\hspace{0.5cm}z\rightarrow a_j,\ z\not\in\mathcal{B}\cup \epsilon_1\cup \epsilon_2
\end{equation*}
  \item We have the normalization
\begin{equation*}
  M(z) = I+\mathcal O\left(z^{-1}\right),\hspace{0.5cm}z\rightarrow\infty
\end{equation*}
\end{itemize}
\end{problem}
In Figure \ref{modRHP1} we depict schematically the jump matrices of the RHP \ref{RHPM}.
\begin{figure}[tbh]
\begin{center}
\resizebox{1\textwidth}{!}{\input{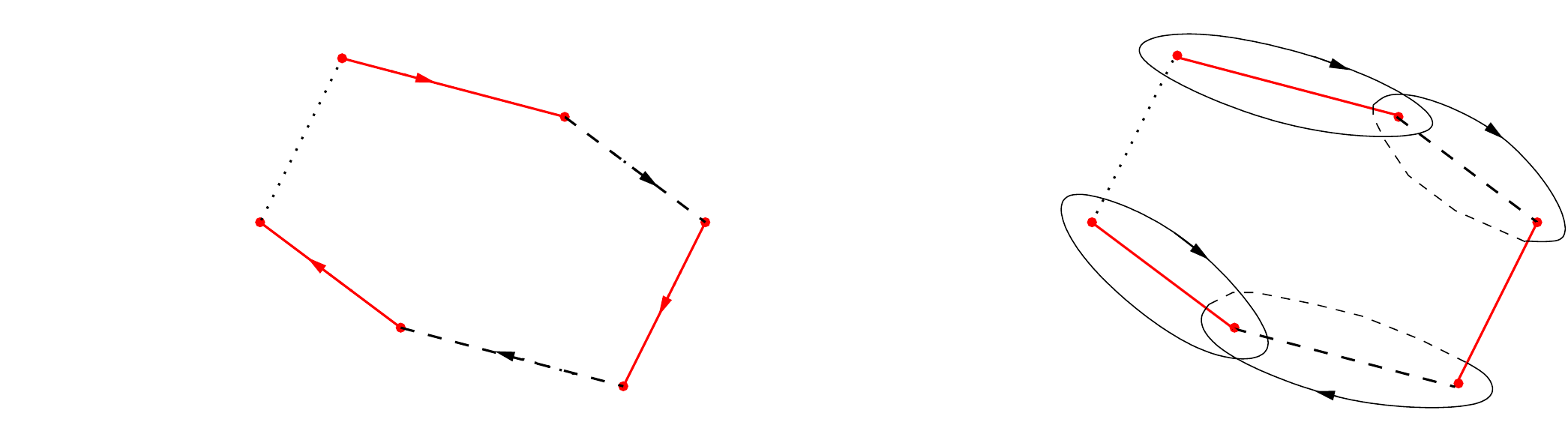_t}}
\caption{The jump contour for $M(z)$ on the left and on the right the homology basis for $X$}
\label{modRHP1}
\end{center}
\end{figure}
Next we introduce the cycles $\{\mathcal{A}_j,\mathcal{B}_j\}_{j=1}^2$ as indicated in the same Figure \ref{modRHP1} on the right: these cycles form a homology 
basis for $X$ (cf. \cite{FK}). The values of $\Omega_j = \frac{1}{i\pi}(g_+(z) - g_-(z)),\ \ z\in \epsilon_j$ and $\alpha_{j-1} = \frac{1}{i}(g_+(z) + g_-(z) - \theta(z)- \ell) ,\ \ z\in \mathcal \gamma_j$ (cf. \eqref{con1}, \eqref{gfunction}) can then be expressed in terms of the periods of the meromorphic differential $\d\phi=y(z)\d z$ as follows
\begin{equation}\label{gj}
  \alpha_1 = \frac{1}{i}\oint_{\mathcal{B}_1}\d\phi,\hspace{0.25cm}\alpha_2=\frac{1}{i}\left(\oint_{\mathcal{B}_1}\d\phi+\oint_{\mathcal{A}_2}\d\phi\right);\hspace{0.75cm}
  \Omega_1 = \frac{1}{i\pi}\oint_{\mathcal{A}_1}\d\phi,\hspace{0.25cm}\Omega_2=-\frac{1}{i\pi}\oint_{\mathcal{B}_2}\d\phi.
\end{equation}

\subsection{Period matrices and normalized differentials}
\label{permat}

We are now going to construct an explicit solution to the RHP \ref{RHPM} in terms of theta functions, however this requires some preparation. Recall the 
homology basis $\{\mathcal{A}_j,\mathcal{B}_j\}_{j=1}^2$ as shown in Figure \ref{modRHP1} on the right. Introduce two holomorphic one forms on $X$ and respective periods
\begin{equation}\label{forms}
  \eta_1 = \frac{\d z}{\sqrt{R(z)}},\hspace{0.5cm}\eta_2 = \frac{z\,\d z}{\sqrt{R(z)}}\ ;\qquad
 \mathbb{A}_{jk} = \oint_{\mathcal A_k}\eta_j,\hspace{1cm}\mathbb{B}_{jk}=\oint_{\mathcal{B}_k}\eta_j.
\end{equation}
Recalling the symmetry of the branchpoints  $a_{k+3} =-a_{k}, \ k=1,2,3$ the reader verifies immediately that 
\begin{equation}\label{formrel}
  \oint_{\mathcal{A}_1}\eta_1 = \oint_{\mathcal{A}_2}\eta_1,\hspace{0.5cm}\oint_{\mathcal{B}_1}\eta_1 = \oint_{\mathcal{B}_2}\eta_1;\hspace{1cm}
\oint_{\mathcal{A}_1}\eta_2 =-\oint_{\mathcal{A}_2}\eta_2,\hspace{0.5cm}\oint_{\mathcal{B}_1}\eta_2 = -\oint_{\mathcal{B}_2}\eta_2.
\end{equation}
It is well-known (cf. \cite{FK}) that the $A$-period matrices  $\mathbb{A}=[\mathbb{A}_{jk}]_{j,k=1}^2$, resp. $B$-period matrix $\mathbb{B}=[\mathbb{B}_{jk}]_{j,k=1}^2$ are non-singular, in particular from \eqref{formrel}
\begin{equation*}
  \mathbb{A} = \begin{bmatrix}
                \mathbb{A}_{11} & \mathbb{A}_{11}\\
-\mathbb{A}_{22} & \mathbb{A}_{22}\\
               \end{bmatrix},\hspace{1cm}\mathbb{A}_{jj} = \oint_{\mathcal{A}_j}\eta_j,\ \ j=1,2.
\end{equation*}
This allows us to introduce the {\em normalized (first kind)} differentials $\{\omega_j\}_{j=1}^2$
\begin{equation}\label{cano}
  \omega_1 = \frac{1}{2}\left(\frac{\eta_1}{\mathbb{A}_{11}}-\frac{\eta_2}{\mathbb{A}_{22}}\right),\hspace{0.5cm} 
  \omega_2 = \frac{1}{2}\left(\frac{\eta_1}{\mathbb{A}_{11}}+\frac{\eta_2}{\mathbb{A}_{22}}\right)
\end{equation}
which satisfy the standard normalization
\begin{equation*}
  \oint_{\mathcal A_k}\omega_j = \delta_{jk},\hspace{0.5cm}j,k=1,2.
\end{equation*}
The corresponding matrix of $B$-periods, ${\boldsymbol \tau}=[\tau_{jk}]_{j,k=1}^2$ with $\tau_{jk} = \oint_{\mathcal B_j}\omega_k$, is computed as
\begin{equation}\label{Bmat}
  {\boldsymbol \tau}=\frac{1}{2}\begin{bmatrix}
        \varkappa_1+\varkappa_2 & \varkappa_1-\varkappa_2\\
\varkappa_1-\varkappa_2 & \varkappa_1+\varkappa_2\\
       \end{bmatrix},\hspace{1cm} \varkappa_j = \frac{1}{\mathbb{A}_{jj}}\oint_{\mathcal{B}_j}\eta_j=\frac{\mathbb{B}_{jj}}{\mathbb{A}_{jj}},\ \ j=1,2.
\end{equation}

Finally we define the Abel map\footnote{To be precise, we are defining the Abel map only of one sheet of the Riemann surface. In the present setting, the Abel map of the other sheet is obtained by simply changing the overall sign $\u(z) \mapsto -\u(z)$.} by 
\begin{equation*}
 	\u: \C\mathbb P^1 \setminus (\mathcal B\cup \epsilon_1\cup \epsilon_2)\rightarrow \C^2,\hspace{0.5cm}z  \mapsto  \ds \u(z) = \int_{a_1}^z \vec \omega
\end{equation*}
where the integration contour is the same for both components and it is chosen in the simply connected domain $\C\mathbb P^1 \setminus (\mathcal B\cup \epsilon_1\cup \epsilon_2)$. We summarize the following properties
\bp\label{Abprop} The Abelian integral $\mathfrak{u}(z)$ is single-valued and analytic in $\mathbb{CP}^1\backslash(\mathcal{B}\cup\epsilon_1\cup\epsilon_2)$. Moreover
\begin{equation*}
  \mathfrak{u}_+(z)+\mathfrak{u}_-(z) = \begin{cases}
                                                     0,&z\in \gamma_1\\
{\boldsymbol \tau}_1,&z\in \gamma_2\\
{\bf e}_2+{\boldsymbol \tau}_1,&z\in \gamma_3\\
                                                    \end{cases},\hspace{1cm}\mathfrak{u}_+(z)-\mathfrak{u}_-(z) = \begin{cases}
0,&z\in \epsilon_0\\
{\bf e}_1,&z\in \epsilon_1\\
-{\boldsymbol \tau}_2,&z\in \epsilon_2
\end{cases}
\end{equation*}
where ${\bf e}_j$ denotes again the standard basis vector in $\mathbb{C}^2$ and ${\boldsymbol \tau}_j={\boldsymbol \tau}{\bf e}_j$. Also $\mathfrak{u}(a_1)=0$ and
\begin{equation*}
 \mathfrak{u}(a_2) = \frac{1}{2}{\bf e}_1,\hspace{0.25cm}\mathfrak{u}(a_3)=\frac{1}{2}({\bf e}_1+{\boldsymbol \tau}_1),
  \hspace{0.25cm}\mathfrak{u}(a_4) = \frac{1}{2}({\boldsymbol \tau}_1-{\boldsymbol \tau}_2),\hspace{0.25cm}\mathfrak{u}(a_5) =\frac{1}{2}({\boldsymbol \tau}_1-{\boldsymbol \tau}_2+{\bf e}_2),
  \hspace{0.25cm}\mathfrak{u}(a_6) = \frac{1}{2}({\boldsymbol \tau}_1+{\bf e}_2)
\end{equation*}
where all values are taken from the $(+)$ side.
\ep

\subsection{Szeg\"o function}
\label{Szegofun}

 Next we define a scalar Szeg\"o function $\mathcal{D}(z)$ for $z\in\mathbb{C}\mathbb P^1\backslash (\mathcal B\cup \epsilon_1\cup \epsilon_2)$
\begin{equation}\label{Szdef}
  \mathcal{D}(z) = \exp\left[\frac{\sqrt{R(z)}}{2\pi i}\left\{\sum_{j=1}^3\int_{a_{2j-1}}^{a_{2j}}
  \frac{\ln w}{\sqrt{R(w)}_+}\frac{\d w}{w-z}-\sum_{j=1}^2\int_{a_{2j}}^{a_{2j+1}}\frac{i\pi\delta_j}{\sqrt{R(w)}_+}
  \frac{\d w}{w-z}\right\}\right]
\end{equation}
where
\begin{equation}\label{dchoice}
  \vec{\delta}=(\delta_1,\delta_2)^t = 2\big[{\boldsymbol \tau}_1,{\bf e}_2\big]^{-1}\big(\mathfrak{u}(\infty)-\mathfrak{u}(0)\big).
\end{equation}
 One checks directly that $\mathcal{D}(z)$ has the following analytical properties
\begin{itemize}
 \item $\mathcal{D}(z)$ is analytic for $z\in\mathbb{C}\backslash[a_1,a_6]$
  \item The following jumps hold, with orientation as indicated in Figure \ref{modRHP1}
\begin{eqnarray*}
  \mathcal{D}_+(z)\mathcal{D}_-(z) &=& z,\hspace{0.5cm}z\in\gamma_j,\ \ j=1,2,3\\
  \mathcal{D}_+(z)&=&\mathcal{D}_-(z)e^{-i\pi\delta_j},\hspace{0.5cm}\lambda\in \epsilon_j,\ \ j=1,2
\end{eqnarray*}
  \item The function is bounded at infinity 
  thanks to the following identities 
  \begin{equation*}
  \sum_{j=1}^2\int_{a_{2j}}^{a_{2j+1}}\frac{w^{k-1}i\delta_j}{\sqrt{R(w)}_+}\d w = \sum_{j=1}^3\int_{a_{2j-1}}^{a_{2j}}\frac{w^{k-1}}{\sqrt{R(w)}_+}
  \ln(w)\d w = i\pi \int_0^{\infty}\frac{w^{k-1}}{\sqrt{R(w)}}\,\d w\ ,\ \ k=1,2,
\end{equation*} 
which we can rewrite as a system
\begin{equation*}
  \left(\delta_1\int_{a_2}^{a_3}+\delta_2\int_{a_4}^{a_5}\right)\begin{bmatrix}
                                                               1\\
w\\
                                                              \end{bmatrix}\frac{\d w}{\sqrt{R(w)}_+} = 
\frac{1}{2}\left(\delta_1\oint_{\mathcal{B}_1}+\delta_2\oint_{\mathcal{A}_2}\right)\begin{bmatrix}
                                                                                    1\\
w\\
                                                                                   \end{bmatrix}\frac{\d w}{\sqrt{R(w)}}
=\int_0^{\infty}\begin{bmatrix}
                 1\\
w\\
                \end{bmatrix}\frac{\d w}{\sqrt{R(w)}}.
\end{equation*}
Indeed, multiplying the above by $\mathbb A^{-1}$  we obtain
\begin{equation*}
 \delta_1\oint_{\mathcal{B}_1}\vec{\omega}+\delta_2\oint_{\mathcal{A}_2}\vec{\omega}=2\int_0^{\infty}\vec{\omega} = 2\big(\mathfrak{u}(\infty)-\mathfrak{u}(0)\big)
\end{equation*}
and therefore 
\begin{equation}
  \delta_1{\boldsymbol \tau}_1+\delta_2{\bf e}_2 = \big[{\boldsymbol \tau}_1,{\bf e}_2\big]\vec{\delta} = 2\big(\mathfrak{u}(\infty)-\mathfrak{u}(0)\big)
  \label{deltabel}
\end{equation}
where ${\bf e}_j$ denotes the standard basis vector in $\mathbb{C}^2$ and $\boldsymbol \tau_j=\boldsymbol \tau{\bf e}_j$. Hence \eqref{dchoice} ensures the 
required normalization $\mathcal{D}(\infty)<\infty$.
\end{itemize}
\subsection{Intermediate Step}
Keeping the properties of $\mathcal{D}(z)$ in mind, introduce
\begin{equation*}
  \Psi(z) = e^{i\frac{\pi}{4}\sigma_3}(2\pi i)^{\frac{1}{2}\sigma_3}\big(\mathcal{D}(\infty)\big)^{\sigma_3}M(z)
  \big(\mathcal{D}(z)\big)^{-\sigma_3}(2\pi i)^{-\frac{1}{2}\sigma_3}e^{-i\frac{\pi}{4}\sigma_3},\hspace{0.5cm}z\in\mathbb{C}\backslash(\mathcal{B}\cup \epsilon_1\cup \epsilon_2)
\end{equation*}
and obtain the following RHP with the jumps schematically depicted in Figure \ref{szego}.
\begin{figure}[tbh]
\begin{center}
\resizebox{0.3\textwidth}{!}{\input{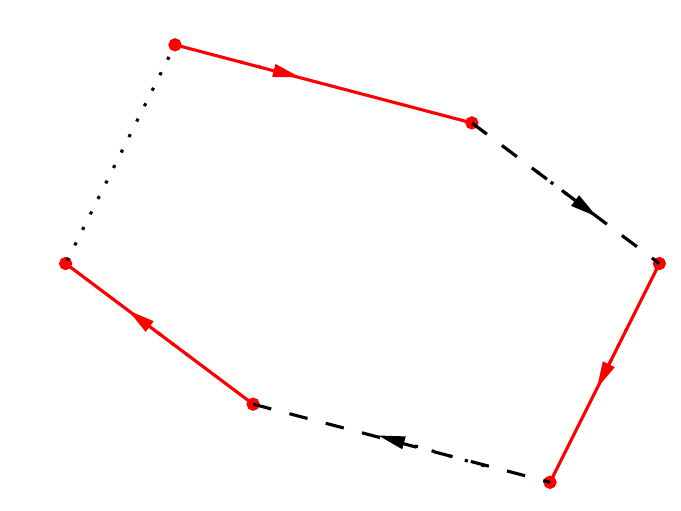_t}}
\caption{The jump contour for $\Psi(z)$.}
\label{szego}
\end{center}
\end{figure}
\begin{problem}
\label{RHPPsi}
Find the $2\times 2$ matrix valued function $\Psi(z)$ such that 
\begin{itemize}
 \item 	$\Psi(z)$ is analytic for $z\in\mathbb{CP}^1\backslash(\mathcal{B}\cup \epsilon_1\cup \epsilon_2)$
  \item The jumps are as follows
\begin{eqnarray*}
  \Psi_+(z) &=& \Psi_-(z)e^{i\pi d_j\sigma_3}i\sigma_1,\hspace{0.5cm}z\in\gamma_j,\ \ \ j=0,1,2\\
  \Psi_+(z) &=& \Psi_-(z)e^{i\pi c_j\sigma_3},\hspace{0.5cm}z\in \epsilon_j,\ \ \ j=1,2
\end{eqnarray*}
where we introduced the abbreviations
\begin{equation}\label{abbrevi}
 c_j=n\Omega_j+\delta_j,\ \ j=1,2;\hspace{1cm}d_j = \frac{n}{\pi}\alpha_j,\ \ j=1,2;\hspace{0.5cm}d_0=0
\end{equation}
  \item As $z\rightarrow\infty$,
\begin{equation*}
  \Psi(z)= I+\mathcal O\left(z^{-1}\right).
\end{equation*}
\end{itemize}
\end{problem}

The construction of $\Psi$ is the last step in the construction of $M(z)$. To this end we introduce the function 
\begin{equation*}
  h(z) = \sqrt[4]{\frac{z-a_6}{\prod_1^5(z-a_j)}},\hspace{0.5cm}z\in\mathbb{C}\big\backslash(\mathcal{B}\cup\epsilon_1\cup\epsilon_2)
\end{equation*}
with the branch fixed by the requirement $h(z)\sim \frac{1}{z}$ as $z\rightarrow\infty$. The boundary values of $h(z)$ satisfy
\begin{equation}\label{hj1}
  h_+(z)=h_-(z),\ \ z\in \epsilon_0;\hspace{0.5cm}h_+(z)=-h_-(z),\ \ z\in\epsilon_1;\hspace{0.5cm}h_+(z)=h_-(z),\ \ z\in\epsilon_2
\end{equation}
\begin{equation}\label{hj2}
 h_+(z)=ih_-(z),\ \ z\in\gamma_1;\hspace{0.5cm}h_+(z)=-ih_-(z),\ \ z\in\gamma_2;\hspace{0.5cm}h_+(z)=ih_-(z),\ \ z\in\gamma_3.
\end{equation}

We now construct the solution to the model problem in terms of the Riemann theta function
\begin{equation*}
  \Theta(\vec{z}\,)\equiv \Theta(\vec{z}\,|{\boldsymbol \tau}) = \sum_{\vec{k}\in\mathbb{Z}^2}\exp\Big[\pi\langle\vec{k}{\boldsymbol \tau},\vec{k}\rangle+2\pi i\langle \vec{k},\vec{z}\,\rangle\Big],
\ \ \vec{z}\in\mathbb{C}^2;\hspace{0.5cm} \langle\vec{a},\vec{c}\,\rangle = \sum_{j=1}^2a_jc_j.
\end{equation*}
It is convenient also to introduce the {\em theta function with characteristics} $\vec{\boldsymbol \alpha},\vec{\boldsymbol \beta}\in\mathbb{C}^2$
\begin{equation*}
  \Theta\begin{bmatrix}
         \vec{\boldsymbol \alpha}\,\\
\vec{\boldsymbol \beta}\,
        \end{bmatrix}(\vec{z}\,|{\boldsymbol \tau}) = \exp\left[2\pi i\left(\frac{1}{8}\langle\vec{\boldsymbol \alpha}{\boldsymbol \tau},\vec{\boldsymbol \alpha}\rangle+\frac{1}{2}\langle\vec{\boldsymbol \alpha},\vec{z}\,\rangle
+\frac{1}{4}\langle \vec{\boldsymbol \alpha},\vec{\boldsymbol \beta}\,\rangle\right)\right]\Theta\left(\vec{z}+\frac{1}{2}\vec{\boldsymbol \beta}+\frac{1}{2}{\boldsymbol \tau}\vec{\boldsymbol \alpha}\,\bigg|{\boldsymbol \tau}\right).
\end{equation*}
The reader will find in Appendix \ref{app2} all the main properties that are used below. Since we are dealing with a hyperelliptic Riemann surface $X$, the vector of Riemann constants $\mathcal{K}$ (cf. \cite{FK}) is given by
\begin{equation}
  \mathcal{K} = \sum_{j=1}^2\mathfrak{u}(a_{2j+1}) \equiv \frac{1}{2}({\bf e}_1+{\bf e}_2-{\boldsymbol \tau}_2) \mod \Lambda
  \label{RiemannK}
\end{equation}
where $\Lambda = \mathbb{Z}^2+{\boldsymbol \tau}\mathbb{Z}^2$ is the period lattice. Recall also (cf. \cite{FK}) that 
\begin{equation}\label{ass1}
  {\bf f}^{(\pm)}(z)=\Theta\left(\mathfrak{u}(z)\mp\mathfrak{u}(\infty)-\mathfrak{u}(a_6)-\mathcal{K}\right)
\end{equation}
does not vanish identically, since the divisor of the points $\infty^{\pm},a_6$ is nonspecial on the hyperelliptic Riemann surface $X$ (compare again Appendix \ref{app2} for a
short summary of the relevant theory). This observation allows us to introduce the functions $P^{(\pm)}(z) = P^{(\pm)}(z;\vec{\boldsymbol \alpha},\vec{\boldsymbol \beta})$ with
\begin{equation*}
  P^{(\pm)}(z) = \left(\frac{\Theta\begin{bmatrix}
         \vec{\boldsymbol \alpha}\,\\
\vec{\boldsymbol \beta}\,
        \end{bmatrix}\big(\mathfrak{u}(z)\mp\mathfrak{u}(\infty)-\mathcal{K}\big)}{\Theta\big(\mathfrak{u}(z)\mp\mathfrak{u}(\infty)-\mathfrak{u}(a_6)-\mathcal{K}\big)},
\frac{\Theta\begin{bmatrix}
         \vec{\boldsymbol \alpha}\,\\
\vec{\boldsymbol \beta}\,
        \end{bmatrix}\big(-\mathfrak{u}(z)\mp\mathfrak{u}(\infty)-\mathcal{K}\big)}{\Theta\big(-\mathfrak{u}(z)\mp\mathfrak{u}(\infty)-\mathfrak{u}(a_6)-\mathcal{K}\big)}
\right)h(z)e^{i\pi  u_1(z)\sigma_3}.
\end{equation*}
where we use $\mathfrak{u}(z)=\big(u_1(z),u_2(z)\big)^t$. The following Proposition is crucial in the construction of the outer parametrix.
\bp Both functions, $P^{(+)}(z)$ and $P^{(-)}(z)$, are single-valued and analytic in $\mathbb{C}\backslash(\mathcal{B}\cup\epsilon_1\cup\epsilon_2)$ with
\begin{eqnarray*}
  P^{(\pm)}_+(z) &=&P^{(\pm)}_-(z)(i\sigma_1),\ \ z\in \gamma_1\\
  P^{(\pm)}_+(z) &=&P^{(\pm)}_-(z)\exp\Big[i\pi\langle\vec{\boldsymbol \alpha},{\bf e}_1\rangle\sigma_3\Big],\ \ z\in \epsilon_1\\
  P^{(\pm)}_+(z) &=&P^{(\pm)}_-(z)\exp\left[i\pi\langle{\bf e}_1,\vec{\boldsymbol \beta}\,\rangle\sigma_3\right](-i\sigma_1),\ \ z\in \gamma_2\\
  P^{(\pm)}_+(z) &=&P^{(\pm)}_-(z)\exp\left[i\pi\big(1+\langle{\bf e}_2,\vec{\boldsymbol \beta}\,\rangle\big)\sigma_3\right],\ \ z\in \epsilon_2\\
  P^{(\pm)}_+(z) &=&P^{(\pm)}_-(z)\exp\left[-i\pi\big(\langle\vec{\boldsymbol \alpha},{\bf e}_2\rangle-\langle{\bf e}_1,\vec{\boldsymbol \beta}\,\rangle\big)\sigma_3\right](i\sigma_1),\ \ z\in\gamma_3.
\end{eqnarray*}
\ep
\begin{proof}
As the Abelian integral $\mathfrak{u}(z)$ is single-valued and analytic on $\mathbb{C}\backslash(\mathcal{B}\cup\epsilon_1\cup\epsilon_2)$ and ${\bf f}^{(\pm)}(z)$ does not vanish identically, we first obtain
(cf. \cite{FK}) that $P^{(\pm)}(z)$ is single-valued and meromorphic on $\mathbb{C}\backslash(\mathcal{B}\cup\epsilon_1\cup\epsilon_2)$. Moreover, general theory (see Theorem \ref{generalTheta}) asserts, that ${\bf f}^{(+)}(z)$
has precisely two zeros on $X$, both on the first sheet at $z=\infty^+$ and at $z=a_6$. However $h(z)$ has zeros at the same points and its local behavior
matches the vanishing behavior of ${\bf f}^{(+)}(z)$, hence we obtain analyticity of the first column in $P^{(\pm)}(z)$ for $z\in\mathbb{C}\backslash(\mathcal{B}\cup\epsilon_1\cup\epsilon_2)$.
The second column can be treated similarly using the parity of the theta-function. The stated jumps follow now directly from Proposition \ref{Abprop} and \eqref{hj1},\eqref{hj2} using that
\begin{equation*}
  F(\vec{z}\,) = \frac{\Theta\begin{bmatrix}
         \vec{\boldsymbol \alpha}\,\\
\vec{\boldsymbol \beta}\,
        \end{bmatrix}\big(\vec{z}\mp\mathfrak{u}(\infty)-\mathcal{K}\,|{\boldsymbol \tau}\big)}{\Theta(\vec{z}\mp\mathfrak{u}(\infty)-\mathfrak{u}(a_6)-\mathcal{K}\,|{\boldsymbol \tau})}
e^{i\pi \langle\vec{z},{\bf e}_1\rangle}
\end{equation*}
formally satisfies
\begin{equation*}
  F(\vec{z}+\vec{\mu}+{\boldsymbol \tau}\vec{\lambda}\,|{\boldsymbol \tau}) = \exp\left[i\pi \left(\langle\vec{\mu},{\bf e}_1\rangle-\langle\vec{\lambda},{\bf e}_2\rangle+\langle\vec{\boldsymbol \alpha},\vec{\mu}\,\rangle-\langle\vec{\lambda},\vec{\boldsymbol \beta}\,\rangle\right)\right]F(\vec{z}\,),
\hspace{0.5cm}\vec{\mu},\vec{\lambda}\in\mathbb{Z}^2.
\end{equation*}
\end{proof}
We now compare the jumps of $P^{(\pm)}(z)$ to the ones stated in Figure \ref{szego} for $\Psi(z)$. This in turn leads to the following system in
 $\mathbb{Z}/2\mathbb{Z}$ for the yet unknowns $\vec{\boldsymbol \alpha},\vec{\boldsymbol \beta}$
\begin{equation*}
 \langle\vec{\boldsymbol \alpha},{\bf e}_1\rangle \equiv c_1,\hspace{0.5cm}\langle{\bf e}_1,\vec{\boldsymbol \beta}\,\rangle+1\equiv d_1,\hspace{0.5cm}\langle{\bf e}_2,\vec{\boldsymbol \beta}\,\rangle+1\equiv c_2,
\hspace{0.5cm}\langle{\bf e}_1,\vec{\boldsymbol \beta}\,\rangle-\langle{\bf e}_2,\vec{\boldsymbol \alpha}\,\rangle \equiv d_2
\end{equation*}
and we take as solution in $\mathbb{C}^2$
\begin{equation}\label{albe}
  \vec{\boldsymbol \alpha}=\begin{bmatrix}
                c_1\\
d_1-d_2-1\\
               \end{bmatrix},\hspace{0.5cm}\vec{\boldsymbol \beta}=\begin{bmatrix}
d_1+1\\
c_2+1\\
\end{bmatrix}.
\end{equation}
With the latter choice \eqref{albe} and $P^{(\pm)}(z)=\big(P_1^{(\pm)}(z),P_2^{(\pm)}(z)\big)$
\bp 
\label{propQ}
{\bf [1]} The function
\begin{equation}\label{Qdef}
  Q(z) = Q(z;\vec{\boldsymbol \alpha},\vec{\boldsymbol \beta}) =\begin{bmatrix}
                                                           P^{(+)}_1(z) & P^{(+)}_2(z)\\
P^{(-)}_1(z) & P^{(-)}_2(z)
                                                         \end{bmatrix},\hspace{0.5cm}z\in\mathbb{C}\backslash(\mathcal{B}\cup\epsilon_1\cup\epsilon_2)
\end{equation}
with $\vec{\boldsymbol \alpha},\vec{\boldsymbol \beta}$ as in \eqref{albe} is single-valued and analytic in $\mathbb{C}\backslash(\mathcal{B}\cup\epsilon_1\cup\epsilon_2)$. Its jump behavior is depicted in Figure \ref{szego}.
Moreover, as $z\rightarrow\infty$,
\begin{equation}\label{psiexp}
  Q(z) = C_0\sigma_3e^{i\pi u_1(\infty)\sigma_3}\Theta\begin{bmatrix}
         \vec{\boldsymbol \alpha}\,\\
\vec{\boldsymbol \beta}\,
        \end{bmatrix}\big(-\mathcal{K}\big)\bigg\{I+\frac{Q_1}{z}+\mathcal O\left(z^{-2}\right)\bigg\},\ \ \ Q_1=\big(Q_1^{jk}\big)_{j,k=1}^2
\end{equation}
where
\begin{equation*}
  C_0^{-1} = \langle\nabla\Theta(-\mathfrak{u}(a_6)-\mathcal{K})\mathbb{A}^{-1},{\bf e}_2\rangle\neq 0
\end{equation*}
and
\begin{equation}\label{coeff}
  Q_1^{21}=-\frac{C_0^{-1}e^{2\pi iu_1(\infty)}}{\Theta\big(2\mathfrak{u}(\infty)-\mathfrak{u}(a_6)-\mathcal{K}\big)}\frac{\Theta\begin{bmatrix}
                                                                                                                                  \vec{\boldsymbol \alpha}\,\\
\vec{\boldsymbol \beta}\,\\
                                                                                                                                 \end{bmatrix}
\big(2\mathfrak{u}(\infty)-\mathcal{K}\big)}{\Theta\begin{bmatrix}
                                                    \vec{\boldsymbol \alpha}\,\\
\vec{\boldsymbol \beta}\,\\
                                                   \end{bmatrix}\big(-\mathcal{K}\big)}
\end{equation}
{\bf [2]} As a function of the characteristics $\vec{\boldsymbol \alpha}, \vec{\boldsymbol \beta}$, the matrix $Q(z)$ is periodic
\be
Q(z; \vec{\boldsymbol \alpha}, \vec{\boldsymbol \beta}) =Q(z; \vec{\boldsymbol \alpha} + 2\vec \nu, \vec{\boldsymbol \beta} + 2 \vec \nu')\ ,\ \ \ \forall \vec \nu ,\vec \nu' \in \Z^2. 
\ee
\ep
The property {\bf [2]} in Prop. \ref{propQ} follows from Prop. \ref{propB2}.
Note that the dependency on $n$ is only in the linear dependency of the  characteristics $\vec{\boldsymbol \alpha},\vec{\boldsymbol \beta}$ \eqref{abbrevi}.
Collecting the results we have completed the construction of $\Psi$ which we summarize hereafter for reference.
\bc
\label{solPsi}
{\bf [1]} The solution of the RHP \ref{RHPPsi} is given by 
\be
\Psi(z) := Q^{-1}(\infty) Q(z)
\ee
with $Q(z)$ as in Prop. \ref{propQ} and 
\be
Q(\infty) =              C_0\sigma_3e^{i\pi u_1(\infty)\sigma_3}\Theta\begin{bmatrix}
         \vec{\boldsymbol \alpha}\,\\
\vec{\boldsymbol \beta}\,
        \end{bmatrix}\big(-\mathcal{K}\big)
\ee
and the solution exists if and only if $\Theta\begin{bmatrix}
         \vec{\boldsymbol \alpha}\,\\
\vec{\boldsymbol \beta}\,
        \end{bmatrix}\big(-\mathcal{K}\big)\neq 0$.\\
{\bf [2]} For each compact subset of its domain of analyticity in $z$, the entries of $\Psi(z)$ are uniformly bounded with respect to the characteristics  in any closed subset of the domain 
 \be
 \vec{\boldsymbol \alpha}, \vec {\boldsymbol \beta} \in \R^2:  \ \ \le| \Theta\begin{bmatrix}
         \vec{\boldsymbol \alpha}\,\\
\vec{\boldsymbol \beta}\,
        \end{bmatrix}\big(-\mathcal{K}\big) \ri| > 0
        \label{cond0}
 \ee 
\ec
Note that the condition \eqref{cond0} is well defined because the absolute value of the Theta function involved is a periodic function of the characteristics (compare with the second property in Prop. \ref{propB2}).
The condition \eqref{cond0} can be made more transparent in terms of the data of our problem 
(we use  \eqref{RiemannK})
\begin{equation*}
  \Theta\begin{bmatrix}
         \vec{\boldsymbol \alpha}\,\\
\vec{\boldsymbol \beta}\,
        \end{bmatrix}\big(-\mathcal{K}\big)\propto \Theta\left(\frac{1}{2}\vec{\boldsymbol \beta}+\frac{1}{2}{\boldsymbol \tau}\vec{\boldsymbol \alpha}-\mathcal{K}\right) 
\end{equation*}
where the proportionality is by a never-vanishing term.
Replacing the expressions \eqref{abbrevi}, \eqref{gj}, \eqref{Bmat}, \eqref{deltabel} in the above formula yields
\begin{equation*}
  \frac{1}{2}\vec{\boldsymbol \beta}+\frac{1}{2}{\boldsymbol \tau}\vec{\boldsymbol \alpha} -\mathcal{K}=\frac{1}{2}\begin{bmatrix}
                                                                              d_1\\
c_2\\
                                                                             \end{bmatrix}+\frac{{\boldsymbol \tau}}{2}\begin{bmatrix}
c_1\\
d_1-d_2\\
\end{bmatrix} = \frac{n}{2\pi i}\begin{bmatrix}
1\\
-1\\
\end{bmatrix}\left(\oint_{\mathcal{B}_1}\d\phi+\varkappa_2\oint_{\mathcal{A}_1}\d\phi\right)+\mathfrak{u}(\infty)-\mathfrak{u}(0).
\end{equation*}

We can further simplify the expression (all values are taken from the (+) side of the branchcuts):
\begin{eqnarray*}
  \mathfrak{u}(0) &\ds\mathop{=}^{\eqref{cano}}& \frac{1}{2}\begin{bmatrix}
                                1\\
1\\
                               \end{bmatrix}\int_{a_1}^0\frac{\eta_1}{\mathbb{A}_{11}}-\frac{1}{2}\begin{bmatrix}
1\\
-1\\
\end{bmatrix}\int_{a_1}^0\frac{\eta_2}{\mathbb{A}_{22}}=\frac{1}{2}\begin{bmatrix}
1\\
1\\
\end{bmatrix}\frac{\varkappa_1}{2}-\frac{1}{2}\begin{bmatrix}
1\\
-1\\
\end{bmatrix}\int_{a_1}^0\frac{\eta_2}{\mathbb{A}_{22}}\\
&=&-\frac{1}{2}\left(\int_{a_1}^0\frac{\eta_2}{\mathbb{A}_{22}}-\frac{\varkappa_2}{2}\right)\begin{bmatrix}
                                                                                          1\\
-1\\
                                                                                         \end{bmatrix}-\mathcal{K}+\frac{1}{2}\begin{bmatrix}
1\\
1\\
\end{bmatrix}.
\end{eqnarray*}
Thus we get
\begin{eqnarray*}
  \mathfrak{u}(\infty)-\mathfrak{u}(0)
&=&\frac{1}{2}\left(\int_{a_1}^0\frac{\eta_2}{\mathbb{A}_{22}}+\frac{\varkappa_2}{2}\right)\begin{bmatrix}
                                                                                            1\\
-1\\
                                                                                           \end{bmatrix}+\mathfrak{u}(\infty)+\mathcal{K}-\frac{1}{2}(\varkappa_2+1)\begin{bmatrix}
1\\
-1\\
\end{bmatrix}-{\bf e}_2
\end{eqnarray*}
which implies all together
\begin{eqnarray}
 \frac{1}{2}\vec{\boldsymbol \beta}+\frac{1}{2}{\boldsymbol \tau}\vec{\boldsymbol \alpha} -\mathcal{K}&=&\rho_n\begin{bmatrix}
                                                                   1\\
-1\\
                                                                  \end{bmatrix}+\mathfrak{u}(\infty)+\mathcal{K}-\frac{1}{2}(\varkappa_2+1)\begin{bmatrix}
1\\
-1\\
\end{bmatrix}-{\bf e}_2;\label{simple}\\
\rho_n&=&\frac{n}{2\pi i}\left(\oint_{\mathcal{B}_1}\d\phi+\varkappa_2\oint_{\mathcal{A}_1}\d\phi\right)
  +\frac{1}{2}\left(\int_{a_1}^0\frac{\eta_2}{\mathbb{A}_{22}}+\frac{\varkappa_2}{2}\right)\nonumber.
\end{eqnarray}

Thus the non-solvability condition of the RHP \ref{RHPPsi} can be written in any of the following equivalent forms
\begin{equation}\label{zeroset}
 \Theta\left(
 \frac {n}{2\pi i} \le(\oint_{\mathcal B_1} + \varkappa_2 \oint_{\mathcal A_1}\ri)\d\phi \le[{1 \atop -1}\ri] + \u(\infty)-\u(0)
 \right)\mathop{=}^{\eqref{simple}}\Theta\left(
 \le(\rho_n - \frac {\varkappa_2+1}2\ri) \le[{1\atop -1}\ri]+\mathfrak{u}(\infty)+\mathcal{K}
 \right)=0.
\end{equation}
This in turn defines implicitly a discrete set $\mathcal Z_{n} = \{x_{n,k}\}$ of points inside the star shaped region $\overline{\Delta}$ which eventually shall be identified with  the zero set of the Vorob'ev-Yablonski polynomial
$\mathcal{Q}_n(x)$ for sufficiently large $n$ (compare Corollary $1$ on page $65$ in \cite{BM} in the setting of the poles of rational PII solutions). From now on we stipulate to stay away from the points of $\mathcal Z_n$ \eqref{zeroset}. Once
this additional constraint on $x\in\mathbb{C}:\textnormal{dist}(x,\mathbb{C}\backslash\overline{\Delta})\geq\delta>0$ is in place we complete the construction of the outer
parametrix.
\bp 
Let $x\not \in \mathcal Z_n$ and $x\in\mathbb{C}\backslash\overline{\Delta}$; then the model problem for the outer parametrix $M(z)=M(z;x)$ depicted in \ref{modRHP1} is solvable. The solution is given explicitly by
\begin{equation*}
 M(z) = e^{-i\frac{\pi}{4}\sigma_3}(2\pi i)^{-\frac{1}{2}\sigma_3}\big(\mathcal{D}(\infty)\big)^{-\sigma_3}
  \Psi(z)
  \big(\mathcal{D}(z)\big)^{\sigma_3}(2\pi i)^{\frac{1}{2}\sigma_3}e^{i\frac{\pi}{4}\sigma_3},\hspace{0.5cm}z\in\mathbb{C}\backslash(\mathcal{D}\cup \epsilon_1\cup \epsilon_2).
\end{equation*}
with $\mathcal{D}(z)$ as in \eqref{Szdef} and $\Psi(z)$ as in Cor. \ref{solPsi}.
For any closed subset of the domain of analyticity in $z$ the entries of $M(z)$ are uniformly bounded in any compact subsets of $(n,x)\in \R\times  \Delta$ where \eqref{cond0} holds. 
\ep

The remaining six local parametrices near the branchpoints are defined in the disks 
\begin{equation*}
  D(a_j,r) = \{z\in\mathbb{C}\,|\, |z-a_j|<r\},\hspace{0.5cm} j=1,\ldots,6
\end{equation*}
with $r>0$ sufficiently small. The construction follows the standard lines using again Airy functions and we will not give details here. We only list the relevant matching relations
between parametrices $P_j(z)$ and the outer model function $M(z)$, in fact
\begin{equation}\label{matchgen2}
  P_j(z) = \left(I+\mathcal O\left(n^{-1}\right)\right)M(z),\hspace{0.5cm}n\rightarrow\infty
\end{equation}
which holds for $x\in\mathbb{C}:\textnormal{dist}(x,\mathbb{C}\backslash\overline{\Delta})\geq\delta>0$ away from the zero set $\mathcal{Z}_n$ and uniformly 
for $z\in\bigcup_{j=1}^6\partial D(a_j,r)$. This completes the construction of the parametrices in the genus two situation.
\subsection{Reduction to Jacobi theta function of genus $1$}
We now show that the expression on the left hand side of \eqref{zeroset} is expressible as a square of the ordinary Jacobi theta function
\begin{equation}\label{Jacell}
  \vartheta(z)\equiv\vartheta(z|\varkappa_2)=\sum_{k\in\mathbb{Z}}\exp\left[i\pi k^2\varkappa_2+2\pi ikz\right],\hspace{0.5cm}\varkappa_2 = 
  \frac{\mathbb{B}_{22}}{\mathbb{A}_{22}}=\frac{\oint_{\mathcal{B}_2}\eta_2}{\oint_{\mathcal{A}_2}\eta_2}.
\end{equation}
\bl
\label{cute} 
{\bf [1]} Let  $\mathcal{K}=\frac{1}{2}({\bf e_1}+{\bf e_2}-{\boldsymbol \tau}_2)$ be the vector of Riemann constants \eqref{RiemannK} and $ {\boldsymbol \tau}$ as in  \eqref{Bmat}. Then  we have
\begin{equation}\label{neat}
  \Theta\left(\lambda\begin{bmatrix}
                                                       1\\
-1\\
                                                      \end{bmatrix}+\mathfrak{u}(z)+\mathcal{K}-\frac{1}{2}(\varkappa_2+1)\begin{bmatrix}
                                                                                                                           1\\
-1\\
                                                                                                                          \end{bmatrix}\right)=C(z)
\vartheta(\lambda)\vartheta\left(\lambda-\frac{1}{\mathbb{A}_{22}}\int_{a_1}^{z}\eta_2-\frac{\varkappa_2}{2}\right)
\end{equation}
identically for $\lambda \in \C$ and $z\in X$, where $C(z)$ is independent of $\lambda$ and is a nowhere zero function of $z$ on the universal cover of $X$. 
{\bf [2]} In particular if $z=\infty^{\pm}$, we obtain
\begin{equation*}
  \Theta\left(\lambda\begin{bmatrix}
                                                       1\\
-1\\
                                                      \end{bmatrix}+\mathfrak{u}(\infty^{\pm})+\mathcal{K}-\frac{1}{2}(\varkappa_2+1)\begin{bmatrix}
                                                                                                                           1\\
-1\\
                                                                                                                          \end{bmatrix}\right)=C(\infty^{\pm})e^{2\pi i\lambda}
\vartheta^2(\lambda).\ \ 
\end{equation*}
\el
\begin{proof}
{\bf [1]} 
Define for $\lambda\in\mathbb{C}$ and $z\in X$ 
\begin{equation*}
  f(\lambda) = f(\lambda;z|\varkappa_2) = \frac{\Theta\left(\lambda\begin{bmatrix}
                                                       1\\
-1\\
                                                      \end{bmatrix}+\mathfrak{u}(z)+\mathcal{K}-\frac{1}{2}(\varkappa_2+1)\begin{bmatrix}
                                                                                                                           1\\
-1\\
                                                                                                                          \end{bmatrix}\right)}
{\vartheta(\lambda)\vartheta(\lambda+c(z))}.
\end{equation*}
Using the  periodicity properties of the Theta functions involved the reader may verify that 
\begin{equation*}
 f(\lambda+1+\varkappa_2)=f(\lambda)\exp\left[2\pi i\Big(c(z)+\frac{1}{\mathbb{A}_{22}}\int_{a_1}^z\eta_2+\frac{\varkappa_2}{2}\Big)\right]
\end{equation*}
and therefore with $c(z) = -\frac{1}{\mathbb{A}_{22}}\int_{a_1}^z\eta_2-\frac{\varkappa_2}{2}$ the function $f=f(\lambda)$ is elliptic. The Jacobi elliptic function 
$\vartheta(\lambda)$ as in \eqref{Jacell} has a simple zero at $\lambda=\frac{1}{2}(1+\varkappa_2)$, hence $f(\lambda)$ can have at most two simple poles in the fundamental
region $\mathcal R$ of the quotient $\C/(\Z + \varkappa_2 \Z)$
\begin{figure}[tbh]
\begin{center}
\resizebox{0.3\textwidth}{!}{\input{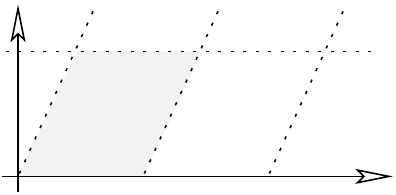_t}}
\caption{The fundamental region $\mathcal{R}$}
\end{center}
\end{figure}

If we substitute $\lambda=\frac{1}{2}(1+\varkappa_2)$ then the numerator of $f(\lambda)$ 
becomes $ \Theta\big(\mathfrak{u}(z)+\mathcal{K}\big)$,
which vanishes since the argument  is the image of a divisor of degree $g-1=1$ (see Corollary \ref{generalThetadiv}). Hence $f(\lambda)$ can have at most one simple pole in $\mathcal{R}$, i.e.
$f$ is an elliptic function of order one, and therefore a constant. We have thus established \eqref{neat} with a $\lambda$ independent term $C=C(z)$. We now have to show that
$C(z)$ does not vanish. To this end consider the behavior of
\begin{equation*}
 C(z)=\frac{\Theta\left(\lambda\begin{bmatrix}
                                                       1\\
-1\\
                                                      \end{bmatrix}+\mathfrak{u}(z)+\mathcal{K}-\frac{1}{2}(\varkappa_2+1)\begin{bmatrix}
                                                                                                                           1\\
-1\\
                                                                                                                          \end{bmatrix}\right)}
{\vartheta(\lambda)\vartheta\left(\lambda-\frac{1}{\mathbb{A}_{22}}\int_{a_1}^{z}\eta_2-\frac{\varkappa_2}{2}\right)}
\end{equation*}
as $z$ varies over $X$. Once more, the periodicity properties of the Theta functions involved give the following behavior under analytic continuation along a closed contour $\gamma$:
\begin{align*}
 z\mapsto z_{\gamma}\ \ \textnormal{along}\ \mathcal{A}_1:\ \ & C(z_{\gamma})=C(z);&& z\mapsto z_{\gamma}\ \ \textnormal{along}\ \mathcal{A}_2: && C(z_{\gamma}) = C(z)\\
 z\mapsto z_{\gamma}\ \ \textnormal{along}\ \mathcal{B}_1:\ \ & C(z_{\gamma})= C(z)e^{-2\pi iu_2(z)};&&
  z\mapsto z_{\gamma}\ \ \textnormal{along}\ \mathcal{B}_2: && C(z_{\gamma})=C(z)e^{-2\pi iu_1(z)+i\pi\varkappa_2}.
\end{align*}
If we assume that $C(z)$ is not identically zero (we shall prove this later), we can count the zeros of $C=C(P)$ on $X$ by integrating $\d\ln C(P)$ along the boundary of the canonical dissection $\hat{X}$,
i.e. we compute
\begin{eqnarray*}
  \frac{1}{2\pi i}\oint_{\partial\hat{X}}\d\ln C(P) &=&\frac{1}{2\pi i}\sum_{j=1}^2\left(\int_{P_0}^{P_0+\mathcal{A}_j}+\int_{P_0+\mathcal{A}_j}^{P_0+\mathcal{A}_j+\mathcal{B}_j}
+\int_{P_0+\mathcal{A}_j+\mathcal{B}_j}^{P_0+\mathcal{B}_j}+\int_{P_0+\mathcal{B}_j}^{P_0}\right)\d\ln C(P)\\
 &=&\frac{1}{2\pi i}\sum_{j=1}^2\left(\int_{P_0}^{P_0+\mathcal{A}_j}-\int_{P_0+\mathcal{B}_j}^{P_0+\mathcal{A}_j+\mathcal{B}_j}+\int_{P_0+\mathcal{B}_j}^{P_0}-
\int_{P_0+\mathcal{A}_j+\mathcal{B}_j}^{P_0+\mathcal{A}_j}\right)\d\ln C(P)\\
&=&\int_{P_0}^{P_0+\mathcal{A}_1}\omega_2+\int_{P_0}^{P_0+\mathcal{A}_2}\omega_1 =0
\end{eqnarray*}
and the last equality follows from the normalization $\oint_{\mathcal{A}_k}\omega_j=\delta_{jk}$ of the canonical differentials. Hence $C(z)$ is either identically zero or 
it has no zeros at all. We now show that it cannot be identically zero; if this were the case,  we would have
\begin{equation}\label{as1}
 \Theta\left(a\begin{bmatrix}
                                                       1\\
-1\\
                                                      \end{bmatrix}+\mathfrak{u}(z)+\mathcal{K}\right)\equiv 0,\hspace{0.5cm}\forall a\in\mathbb{C},\ \forall z\in X.
\end{equation}
The Riemann surface under consideration has an involution $\mathfrak{j}:X\rightarrow X$ with
\begin{equation*}
  \mathfrak{j}(z,w) = \big(-z,\mathfrak{s}(z)w\big)
\end{equation*}
where $\mathfrak{s}(z):= \frac {\sqrt{R(z)}}{\sqrt{R(-z)}} \in\{\pm 1\}$; specifically  $\mathfrak s(z) =-1$ on the outside of the region bounded by $\bigcup_{j=1}^3 \gamma_j \cup \bigcup_{j=1}^3 (-1)\gamma_j$, and $\mathfrak s(z) =1$ inside (see Fig. \ref{fraks}). The function $\mathfrak{s}(z)$ accounts for the fact that $\sqrt{R(z)}:\mathbb{C}\backslash\mathcal{B}\rightarrow\mathbb{C}$, with $R(z)$ as in \eqref{R(z)} and the cuts of the square root as stipulated, is 
neither an even nor odd function. For any point $P\in X$
\begin{equation*}
  \mathfrak{u}(P)=\int_{a_1}^P\vec{\omega} = \int_{\mathfrak{j}a_1}^{\mathfrak{j}P}\mathfrak{j}\vec{\omega}=-\sigma_1\int_{a_4}^{\mathfrak{j}P}\vec{\omega}=
  -\sigma_1\mathfrak{u}(\mathfrak{j}P)+\sigma_1\mathfrak{u}(a_4),
\end{equation*}
and therefore
\begin{equation}\label{in1}
  \mathfrak{u}(P)+\mathfrak{u}(\mathfrak{j}P)=(I-\sigma_1)\mathfrak{u}(P)+\mathfrak{u}(a_4)\equiv b\begin{bmatrix}
                                                                                              1\\
-1\\
                                                                                             \end{bmatrix},\hspace{0.5cm}b\in\mathbb{C}.
\end{equation}
\begin{figure}
\resizebox{0.3\textwidth}{!}{\input{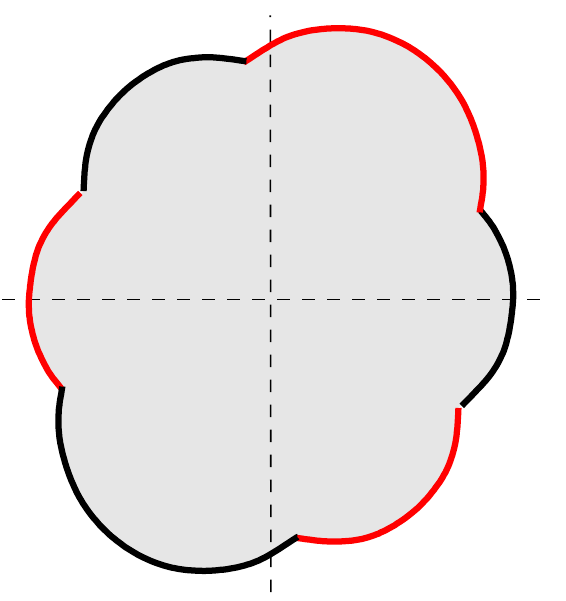_t}}
\caption{The signature function $\mathfrak s(z) = \frac {\sqrt{R(-z)}}{\sqrt{R(z)}}$. }
\label{fraks}
\end{figure}
Now back to \eqref{as1} choose $z=a_1$ so that $\mathfrak{u}(z)=0$. Equation \eqref{in1} shows that vectors of the form $[a,-a]^t$ are images of symmetric 
divisors of degree $2$. However (compare Definition \ref{defspecial}) the special divisors of degree $2$ on $X$ are those that are invariant under the hyperelliptic involution and the only one that is also
invariant under $\mathfrak{j}$ is the divisor of the two points above $z=0$. Thus generically vectors of the form $[a,-a]^t$ are images of nonspecial divisors and the
theta function therefore not identically zero. Combined with the previous argument principle computation this shows that $C(z)\neq 0$ for any $z\in X$. The second statement {\bf [2]} follows from
$
  -\frac{1}{\mathbb{A}_{22}}\int_{a_1}^{\infty}\eta_2 = -\frac{\varkappa_2}{2}
$
and the periodicity of the Jacobi theta function.
\end{proof}

Now we go back to \eqref{zeroset} and obtain with Lemma \ref{cute} 
\begin{equation}
  \Theta\begin{bmatrix}
         \vec{\boldsymbol \alpha}\,\\
\vec{\boldsymbol \beta}\,\\
        \end{bmatrix}\left(-\mathcal{K}\right)=C(\infty^+)e^{2\pi i(\rho_n+\frac{1}{8}\langle\vec{\boldsymbol \alpha}{\boldsymbol \tau},\vec{\boldsymbol \alpha}\rangle-\frac{1}{2}
\langle\vec{\boldsymbol \alpha},\mathcal{K}\,\rangle+\frac{1}{4}\langle\vec{\boldsymbol \alpha},\vec{\boldsymbol \beta}\rangle)}\vartheta^2(\rho_n)
\label{g22g1}
\end{equation}
in other words the zeroset $\mathcal Z_n$ \eqref{zeroset} is equivalently determined by the requirement
\begin{equation}\label{zerosetneat}
 \vartheta\left(\frac{n}{2\pi i}\left[\oint_{\mathcal{B}_1}\d\phi+\varkappa_2\oint_{\mathcal{A}_1}\d\phi\right]
  +\frac{1}{2}\left[\int_{a_1}^0\frac{\eta_2}{\mathbb{A}_{22}}+\frac{\varkappa_2}{2}\right]\right)=0
\end{equation}
which only involves a Jacobi theta function corresponding to a Riemann surface of genus one.

\section{Completion of Riemann-Hilbert analysis - proof of theorems \ref{theo2} and \ref{theo3}}\label{sec4}
We combine the local parametrices and move on to the ratio problems. These are solved by standard small norm arguments and Neumann series expansions.
\subsection{Proof of Theorem \ref{theo2}} Recall the explicit construction of $M(z),U(z)$ and $V(z)$ in genus zero and define
\begin{equation}\label{ratio}
  {\mathcal E}(z) = S(z)\begin{cases}
                          \big(U(z)\big)^{-1},&|z+ia|<r\\
\big(V(z)\big)^{-1},&|z-ia|<r\\
\big(M(z)\big)^{-1},&|z\pm ia|>r.
                         \end{cases}
\end{equation}
This function has jumps on the contour $\Sigma_{\mathcal E}$ shown in Figure \ref{ratio1}, jumps which are given as ratios of parametrices 
\begin{equation*}
  {\mathcal E}_+(z)={\mathcal E}_-(z)U(z)\big(M(z)\big)^{-1},\ \ z\in C_1;\hspace{1cm}{\mathcal E}_+(z)={\mathcal E}_-(z)V(z)\big(M(z)\big)^{-1},\ \ z\in C_2
\end{equation*}
as well as conjugations with the outer model function
\begin{eqnarray*}
  {\mathcal E}_+(z)&=&{\mathcal E}_-(z)M(z)\begin{bmatrix}
                                      1 & 0\\
2\pi iz e^{n\varphi(z)}& 1\\
                                     \end{bmatrix}\big(M(z)\big)^{-1},\ \ z\in\hat{\mathcal{B}}^{\pm};\\
 {\mathcal E}_+(z)&=&{\mathcal E}_-(z)M(z)\begin{bmatrix}
1 & (2\pi iz)^{-1}e^{-n\varphi(z)}\\
0 & 1\\
\end{bmatrix}\big(M(z)\big)^{-1},\ \ z\in\hat{\mathcal{L}}.
\end{eqnarray*}

\begin{figure}[tbh]
\begin{center}
\resizebox{0.4\textwidth}{!}{\input{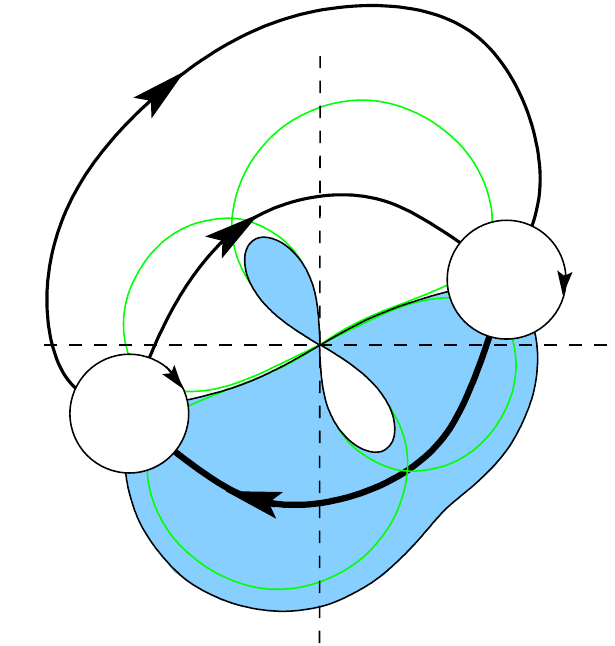_t}}\hspace{2cm}\resizebox{0.4\textwidth}{!}{\input{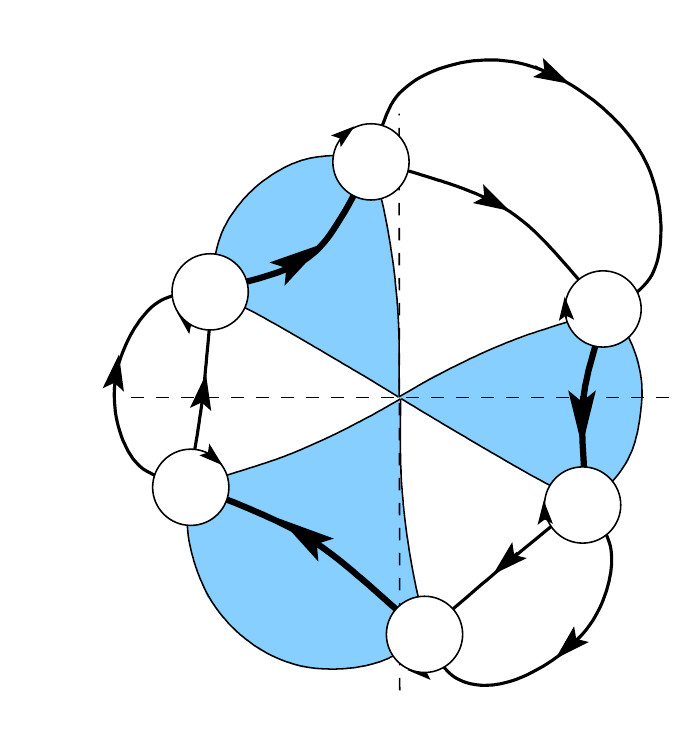_t}}
\begin{minipage}{0.49\textwidth}
\caption{Jump contours in the ratio problem for ${\mathcal E}(z)$ as solid black lines - genus zero situation}
\label{ratio1}
\end{minipage}
\begin{minipage}{0.49\textwidth}
\caption{Jump contours in the ratio problem for ${\mathcal E}(z)$ as solid black lines - genus two situation}
\label{ratio2}
\end{minipage}
\end{center}
\end{figure}

Also the function ${\mathcal E}(z)$ is normalized as
\begin{equation*}
  {\mathcal E}(z)=I+\mathcal O\left(z^{-1}\right),\hspace{0.5cm}z\rightarrow\infty.
\end{equation*}
In terms of the previously derived estimates \eqref{esti1},\eqref{esti2},\eqref{match1} and \eqref{match2}, we conclude for the jump matrix $G_{\mathcal E}(z)$ in the latter ratio problem,
\begin{equation}\label{DZ1}
  \|G_{\mathcal E}-I\|_{L^2\cap L^{\infty}(\Sigma_{\mathcal E})}\leq \frac{c}{n},\hspace{0.5cm}n\rightarrow\infty,\ c>0
\end{equation}
which is uniform with respect to $x\in\mathbb{C}:\textnormal{dist}(x,\overline{\Delta})\geq\delta>0$. Hence (cf. \cite{DZ}) we can iteratively solve the singular integral equation 
\begin{equation*}
  {\mathcal E}_-(z) = I+\frac{1}{2\pi i}\int_{\Sigma_{\mathcal E}}{\mathcal E}_-(w)\big(G_{\mathcal E}(w)-I\big)\frac{\d w}{w-z_-},\hspace{0.5cm}z\in\Sigma_{\mathcal E}
\end{equation*}
in $L^2(\Sigma_{\mathcal E})$ which is in fact equivalent to the ${\mathcal E}$-RHP. Moreover its unique solution satisfies 
\begin{equation}\label{DZ2}
  \|{\mathcal E}_--I\|_{L^2(\Sigma_{\mathcal E})}\leq \frac{c}{n},\hspace{0.5cm}n\rightarrow\infty,\ c>0.
\end{equation}
As we have employed a series of invertible transformations
\begin{equation}\label{trace}
  \Gamma(z)\mapsto\Gamma^o(z)\mapsto Y(z)\mapsto S(z)\mapsto {\mathcal E}(z),
\end{equation}
the unique solvability of the ${\mathcal E}$-RHP as $n\rightarrow\infty$ for $x\in\mathbb{C}:\textnormal{dist}(x,\overline{\Delta})\geq\delta>0$ implies Theorem \ref{theo2} through \eqref{Drel}.
\subsection{Proof of Theorem \ref{theo3}}\label{secgen22} Fix $x\in\mathbb{C}:\textnormal{dist}(x,\mathbb{C}\backslash\overline{\Delta})\geq\delta>0$
away from the zeroset $\{x_k\}$ defined in \eqref{zeroset}. Now combine the outer parametrix $M(z)$ and the local ones $P_j(z)$ into the ratio function
\begin{equation*}
{\mathcal E}(z)= S(z)\begin{cases}
                         \big(P_j(z)\big)^{-1},& z\in D(a_j,r),\ j=1,\ldots,6\\
\big(M(z)\big)^{-1},&|z-a_j|>r.
                        \end{cases}
\end{equation*}
with $r>0$ sufficiently small. The ratio solves a Riemann-Hilbert problem with jumps on a contour as shown in Figure \ref{ratio2} below and is normalized as
\begin{equation*}
  {\mathcal E}(z)=I+\mathcal O\left(z^{-1}\right),\hspace{0.5cm}z\rightarrow\infty.
\end{equation*}
Since $M(z;x)$ is bounded on $\partial D(a_j,r)$ we use \eqref{esti3} and \eqref{matchgen2} to conclude
\begin{equation*}
  \|G_{\mathcal E}-I\|_{L^2\cap L^{\infty}(\Sigma_{\mathcal E})}\leq\frac{c}{n},\hspace{0.5cm}n\rightarrow\infty,\ c>0
\end{equation*}
which once more leads to the unique solvability of the ratio problem in the given situation. Tracing back the invertible transformations we get Theorem \ref{theo3}.

\section{Asymptotics for normalizing coefficients: proof of Theorem \ref{theo4}}\label{sec5}
In this section we extract expansions for $h_n(x)$ as $n\rightarrow\infty$ and compare the results to \cite{BM}.
\subsection{Expansions outside the star} We go back to \eqref{orel} and trace back the transformations
\begin{equation*}
  h_n^o(x)=-2\pi i\lim_{z\rightarrow\infty}z\Big(\Gamma^o(z)z^{-n\sigma_3}-I\Big)_{12},\hspace{0.5cm}\Gamma^o(z)z^{-n\sigma_3}
 = e^{\frac{n\ell}{2}\sigma_3}{\mathcal E}(z)M(z)e^{n(g(z)-\frac{\ell}{2}-\ln z)\sigma_3},\hspace{0.5cm}z\rightarrow\infty.
\end{equation*}
For $x\in\mathbb{C}:\textnormal{dist}(x,\overline{\Delta})\geq\delta>0$ we have
\begin{equation*}
 g(z) =\ln z-\frac{x}{2z}+\mathcal O\left(z^{-2}\right),\hspace{1cm}
  M(z)=I+\frac{a}{2z}\begin{bmatrix}
                                   -1 & (a\pi i)^{-1}\\
-a\pi i& 1\\
                                  \end{bmatrix}+\mathcal O\left(z^{-2}\right)
\end{equation*}
as $z\rightarrow\infty$ and this combined with
\begin{equation*}
  {\mathcal E}(z)=I+\frac{i}{2\pi z}\int_{\Sigma_{\mathcal E}}{\mathcal E}_-(w)\big(G_{\mathcal E}(w)-I\big)\,\d w+\mathcal O\left(z^{-2}\right)
\end{equation*}
leads us to
\begin{eqnarray*}
  \Gamma^o(z)z^{-n\sigma_3}-I &=& \frac{e^{\frac{n\ell}{2}\sigma_3}}{z}\Bigg\{\frac{a}{2}\begin{bmatrix}
                                   -1 & (a\pi i)^{-1}\\
-a\pi i& 1\\
                                  \end{bmatrix}-\frac{x\sigma_3}{2}\\
&&+\frac{i}{2\pi}\int_{\Sigma_{\mathcal E}}{\mathcal E}_-(w)\big(G_{\mathcal E}(w)-I\big)\,\d w+\mathcal O\left(z^{-1}\right)\Bigg\}
e^{-\frac{n\ell}{2}\sigma_3}.
\end{eqnarray*}
Thus
\begin{equation}\label{hnexp1}
  h_n^o(x) = -e^{n\ell}\Big\{1+\mathcal O\left(n^{-1}\right)\Big\},\hspace{0.5cm}n\rightarrow\infty
\end{equation}
where we used \eqref{DZ1} and \eqref{DZ2} and which is uniform with respect to $x\in\mathbb{C}:\textnormal{dist}(x,\overline{\Delta})\geq\delta>0$. Now combine \eqref{hnexp1} with
\eqref{Lagra},\eqref{orel} and \eqref{PIIconnec} and obtain
\begin{corollary}
Let $x\in\mathbb{C}:\textnormal{dist}(x,\overline{\Delta})\geq\delta>0$. Then for sufficiently large $n$ the rational solutions $u(x;n)$ to PII equation
\eqref{PII} satisfy
\begin{equation}\label{Pexp}
  u(x;n) = -n^{\frac{1}{3}}\frac{a'\big(n^{-\frac{2}{3}}x\big)}{a\big(n^{-\frac{2}{3}}x\big)}\left(1+\frac{1}{2a^3\big(n^{-\frac{2}{3}}x\big)}\right)+\mathcal O\left(n^{-1}\right),\hspace{0.5cm}n\rightarrow\infty
\end{equation}
where $a=a(x)$ is the unique solution to the cubic equation \eqref{c1} subject to the condition \eqref{c2}.
\end{corollary}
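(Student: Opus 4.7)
The plan is to combine the rescaling \eqref{orel}, the logarithmic-derivative formula \eqref{PIIconnec} and the just-derived asymptotic \eqref{hnexp1} for $h_n^o$, then differentiate. Starting from $h_n^o(x)=n^{2n/3}h_n(n^{2/3}x)$ one inverts to get $h_n(x)=n^{-2n/3}h_n^o(n^{-2/3}x)$, so the chain rule yields
\begin{equation*}
  u(x;n)=-\frac{1}{2}\frac{h_n'(x)}{h_n(x)}=-\frac{n^{-2/3}}{2}\,\frac{\d}{\d y}\ln h_n^o(y)\bigg|_{y=n^{-2/3}x}.
\end{equation*}
By \eqref{hnexp1}, uniformly on every compact $K\subset\{x:\mathrm{dist}(x,\overline{\Delta})\geq\delta\}$, one has $\ln h_n^o(y)=i\pi+n\,\ell(y)+\varepsilon_n(y)$ with $\varepsilon_n(y)=\ln\bigl(1+\mathcal{O}(n^{-1})\bigr)=\mathcal{O}(n^{-1})$.

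The key point I would then address is that the error $\varepsilon_n(y)$ is \emph{analytic} in $y$ on a neighborhood of $K$: this is inherited from the Deift--Zhou small-norm step, since all jump matrices in the $\mathcal{E}$-RHP depend analytically on the external parameter $x$ (equivalently on $y$) as long as $y$ stays bounded away from $\overline{\Delta}$, and the Neumann series for $\mathcal{E}$ converges uniformly. Consequently a Cauchy estimate on a slightly shrunk compact set gives $\varepsilon_n'(y)=\mathcal{O}(n^{-1})$ uniformly on $K$. Therefore
\begin{equation*}
  \frac{\d}{\d y}\ln h_n^o(y)=n\,\ell'(y)+\mathcal{O}(n^{-1}).
\end{equation*}

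Next I compute $\ell'(y)$ from \eqref{Lagra}, $\ell(y)=2-\tfrac{1}{3a^3(y)}+\ln(-a^2(y)/4)$. Straightforward differentiation yields
\begin{equation*}
  \ell'(y)=\frac{a'(y)}{a^4(y)}+\frac{2a'(y)}{a(y)}=\frac{2a'(y)}{a(y)}\left(1+\frac{1}{2a^3(y)}\right).
\end{equation*}
Substituting and evaluating at $y=n^{-2/3}x$ gives
\begin{equation*}
  u(x;n)=-\frac{n^{-2/3}}{2}\Bigl[n\,\ell'(n^{-2/3}x)+\mathcal{O}(n^{-1})\Bigr]
  =-n^{1/3}\frac{a'(n^{-2/3}x)}{a(n^{-2/3}x)}\!\left(1+\frac{1}{2a^{3}(n^{-2/3}x)}\right)+\mathcal{O}(n^{-5/3}),
\end{equation*}
which is stronger than the claimed $\mathcal{O}(n^{-1})$ remainder.

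The main (and essentially only) obstacle is the step promoting ``error $\mathcal{O}(n^{-1})$'' to ``derivative of the error $\mathcal{O}(n^{-1})$''. This is not automatic from \eqref{hnexp1} as stated, but follows from two facts that I would verify carefully: (i) the map $x\mapsto\mathcal{E}(z;x)$ is jointly analytic in $(z,x)$ on a neighborhood of any compact $K\subset\mathbb{C}\setminus\overline{\Delta}$, since all contours can be fixed locally and the jump matrices in \eqref{ratio} are analytic in $x$; and (ii) the small-norm bounds \eqref{DZ1}--\eqref{DZ2} then depend uniformly on $x\in K$, so the $\mathcal{O}(n^{-1})$ bound on $h_n^o(x)\cdot e^{-n\ell(x)}+1$ is uniform on a complex neighborhood of $K$, enabling a Cauchy estimate. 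Once this technical point is in place, the corollary is immediate.
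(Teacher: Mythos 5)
Your proof follows the same route the paper intends when it says ``combine \eqref{hnexp1} with \eqref{Lagra}, \eqref{orel} and \eqref{PIIconnec}'': invert the rescaling \eqref{orel}, apply the logarithmic derivative formula \eqref{PIIconnec}, and differentiate the Lagrange multiplier \eqref{Lagra}. Your extra care in justifying that the $\mathcal{O}(n^{-1})$ error in \eqref{hnexp1} differentiates to $\mathcal{O}(n^{-1})$ (via analyticity in $x$ of the small-norm solution and a Cauchy estimate) is exactly the right way to make the implicit step rigorous, and your sharper $\mathcal{O}(n^{-5/3})$ bound is correct and subsumes the stated $\mathcal{O}(n^{-1})$.
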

\br\label{remBM1}
If we substitute the large argument expansion of $a(x)$ into \eqref{Pexp}, we easily obtain for $x\in\mathbb{C}:\textnormal{dist}(x,\overline{\Delta})\geq\delta>0$
\begin{equation}\label{exp1}
 u(x;n) = -\left(\frac{n}{2}\right)^{\frac{1}{3}}\left(1+\mathcal O\left(n^{-\frac{2}{3}}\right)\right),\hspace{0.5cm}n\rightarrow\infty.
\end{equation}
On the other hand the rational solutions $\mathcal{P}_m(\xi)$ (to a rescaled PII equation) in \cite{BM} are shown to satisfy the following large $m$-behavior
\begin{equation}\label{exp2}
  \mathcal P_m(\xi) = m^{\frac{1}{3}}\left\{\dot{\mathcal{P}}\left(\left(m-\frac{1}{2}\right)^{-\frac{2}{3}}\xi\right)+\mathcal O\left(m^{-1}\right)\right\},\hspace{0.5cm}m\rightarrow\infty
\end{equation}
outside the corresponding star shaped region in the complex $\xi$-plane, compare Remark \ref{rem1}. Here $\dot{\mathcal{P}}(\xi) = -\frac{1}{2}S(\xi)$, where $S=S(\xi)$ solves
the cubic equation
\begin{equation*}
  3S^3+4\xi S+8=0,\hspace{0.5cm} S(\xi) = -\frac{2}{\xi}+\mathcal O\left(\xi^{-4}\right),\ \ \xi\rightarrow\infty.
\end{equation*}
The relation between $\mathcal{P}_m(\xi)$ and $u(x;n)$ is as follows
\begin{equation*}
 u(x;n) = -\left(\frac{3}{2}\right)^{\frac{1}{3}}\mathcal P_n\left(\left(\frac{3}{2}\right)^{\frac{1}{3}}\xi\right),\hspace{0.5cm}\xi=(12)^{\frac{1}{3}}x
\end{equation*}
and we recall that
\begin{equation*}
 S(\xi) = -\left(\frac{2}{3}\right)^{\frac{1}{3}}\frac{1}{a(\xi)}.
\end{equation*}
Substituting the latter into $\dot{\mathcal{P}}(\xi) = -\frac{1}{2}S(\xi)$ and using \eqref{exp2} we verify that
\begin{equation*}
  -\left(\frac{3}{2}\right)^{\frac{1}{3}}\mathcal P_n\left(\left(\frac{3}{2}\right)^{\frac{1}{3}}\xi\right) = -\left(\frac{n}{2}\right)^{\frac{1}{3}}
\left(1+\mathcal O\left(n^{-\frac{2}{3}}\right)\right),
\hspace{0.5cm}n\rightarrow\infty
\end{equation*}
hence \eqref{exp1} matches \eqref{exp2} to leading order.
\er
\subsection{Expansions inside the star}
 For $x\in\mathbb{C}:\textnormal{dist}(x,\mathbb{C}\backslash\overline{\Delta})\geq\delta>0$ away from the zeroset $\mathcal Z_n$ \eqref{zeroset} we
have as $z\rightarrow\infty$
\begin{equation*}
  g(z) = \ln z-\frac{x}{2z}+\mathcal O\left(z^{-2}\right),\hspace{0.5cm}M(z) = I+\frac{M_1}{z}+\mathcal O\left(z^{-2}\right)
\end{equation*}
involving
\begin{equation*}
  M_1 = e^{-i\frac{\pi}{4}\sigma_3}(2\pi i)^{-\frac{1}{2}\sigma_3}\big(\mathcal{D}(\infty)\big)^{-\sigma_3}Q_1\big(\mathcal{D}(\infty)\big)^{\sigma_3}(2\pi i)^{\frac{1}{2}\sigma_3}
  e^{i\frac{\pi}{4}\sigma_3}.
\end{equation*}
Since
\begin{equation*}
 \mathcal{D}(\infty) = \exp\left[-\frac{1}{2\pi i}\left(\sum_{j=1}^3\int_{a_{2j-1}}^{a_{2j}}\frac{w^2\ln w}{\sqrt{R(w)}_+}\d w-
  \sum_{j=1}^2\int_{a_{2j}}^{a_{2j+1}}\frac{i\pi\delta_j w^2}{\sqrt{R(w)}_+}\d w\right)\right]\neq 0 
\end{equation*}
we can continue with
\begin{equation*}
  \Gamma^o(z)z^{-n\sigma_3}-I = \frac{e^{\frac{n\ell}{2}\sigma_3}}{z}\left\{M_1-\frac{x\sigma_3}{2}+\frac{i}{2\pi}\int_{\Sigma_{\mathcal E}}{\mathcal E}_-(w)\big(G_{\mathcal E}(w)-I\big)\,\d w+
  \mathcal O\left(z^{-1}\right)\right\}e^{-\frac{n\ell}{2}\sigma_3}
\end{equation*}
and thus obtain the following analogue to \eqref{hnexp1} inside the star (recall the change of orientation in genus two)
\begin{equation}\label{hnexp2}
  \big(h_{n-1}^o(x)\big)^{-1} = ie^{-n\ell}\big(\mathcal{D}(\infty)\big)^{2}\Big\{Q_1^{21}+\mathcal O\left(n^{-1}\right)\Big\},\hspace{0.5cm}n\rightarrow\infty,
\end{equation}
where $Q_1^{21}$ is given in \eqref{coeff}. Here the leading coefficient $Q_1^{21}$ is written in terms of theta functions on a genus two hyperelliptic Riemann surface. 

Using Lemma \ref{cute} and along the same lines as \eqref{g22g1} we rewrite $Q_1^{21}$ as
\begin{equation*}
  Q_1^{21}=\frac{C_0^{-1}e^{2\pi i(\langle {\bf e}_1+\vec{\boldsymbol \alpha},\mathfrak{u}(\infty)\rangle}}{\Theta\big(2\mathfrak{u}(\infty)-\mathfrak{u}(a_6)-\mathcal{K}\big)}
  \frac{\Theta\left(\rho_n\begin{bmatrix}
                      1\\
-1\\
                     \end{bmatrix}+3\mathfrak{u}(\infty)+\mathcal{K}-\frac{1}{2}(\varkappa_2+1)\begin{bmatrix}
1\\
-1\\
\end{bmatrix}\right)
}{\Theta\left(\rho_n\begin{bmatrix}
                      1\\
-1\\
                     \end{bmatrix}+\mathfrak{u}(\infty)+\mathcal{K}-\frac{1}{2}(\varkappa_2+1)\begin{bmatrix}
1\\
-1\\
\end{bmatrix}\right)}
\end{equation*}
and therefore in \eqref{hnexp2}
\begin{equation}
\label{h0n1}
  \big(h_{n-1}^o(x)\big)^{-1} = ie^{-n\ell}\big(C(\infty^+)\big)^{-1}C_0^{-1}e^{2\pi i(\rho_n+\langle{\bf e}_1+\vec{\boldsymbol \alpha},\mathfrak{u}(\infty)\rangle)}
\left\{\frac{T(\rho_n)+\mathcal O\left(n^{-1}\right)}{\vartheta^2(\rho_n)}\right\}
\end{equation}
as $n\rightarrow\infty$ away from the zeroset $\mathcal{Z}_n$ determined in \eqref{zerosetneat}. We introduced 
\begin{equation}\label{Teq}
 T(\rho_n) = \frac{\Theta\left(
 \le(\rho_n - \frac {\varkappa_2+1}2\ri) \begin{bmatrix}
                      1\\
-1\\
                     \end{bmatrix}+3\mathfrak{u}(\infty)+\mathcal{K}
                     \right)}{\Theta\big(2\mathfrak{u}(\infty)-\mathfrak{u}(a_6)-\mathcal{K}\big)}
\end{equation}
The formula \eqref{h0n1} is our fundamental pivot to analyze the location of the zeros of $\wh Q_n (x) =\mathcal Q_n(n^\frac  23 x)$; indeed we remind the reader that 
\be
 \big(h_{n-1}^o(x)\big)^{-1} = \le(\frac {\wh Q_{n-1}(x)}{\wh Q_n(x)}\ri)^2.
\ee
The error term in the numerator of \eqref{h0n1} prevents us from localizing the zeros of $\wh Q_{n-1}$; however we can detect those of $\wh Q_n$ because they appear as poles of $h^o_{n-1}(x)$. In particular the poles of $h^o_{n-1}(x)$ must be of second order, which is automatically guaranteed in our approximation \eqref{h0n1} by the fact that the denominator is a square.

We shall thus verify (Proposition \ref{nocommon} below) that the zeros of the leading approximation $T(\rho_n)$ never coincide with the denominator's. Then, using the argument principle on a small circle around a point of $\mathcal Z_n$ we shall see that indeed the function $(h^o_{n-1}(x))^{-1}$ has a double pole within the enclosed disk.


\begin{proposition}
\label{nocommon} The functions $\vartheta(z),z\in\mathbb{C}$ and $T(z),z\in\mathbb{C}$ have no common roots.
\end{proposition}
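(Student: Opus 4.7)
The plan is to show that at each zero $\rho_*$ of $\vartheta$ the numerator theta entering $T$ does not vanish. I will combine three ingredients: (i) the explicit location of the zeros of $\vartheta$; (ii) quasi-periodicity of the genus-two theta $\Theta$; and (iii) Riemann's vanishing theorem for the hyperelliptic curve $X$.

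The zeros of $\vartheta(\,\cdot\,|\,\varkappa_2)$ sit at $\rho_* = \frac{1+\varkappa_2}{2} + k + \ell\varkappa_2$ with $k,\ell \in \mathbb{Z}$. At such $\rho_*$ the argument of the numerator of $T$ is $(k+\ell\varkappa_2)[1,-1]^t + 3\mathfrak{u}(\infty^+) + \mathcal{K}$, and this shift lies in the period lattice $\Lambda$: $k[1,-1]^t \in \mathbb{Z}^2$ while, by \eqref{Bmat}, $\ell\varkappa_2[1,-1]^t = \ell(\boldsymbol\tau_1 - \boldsymbol\tau_2) = \boldsymbol\tau(\ell[1,-1]^t) \in \boldsymbol\tau\mathbb{Z}^2$. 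Quasi-periodicity of $\Theta$ therefore reduces the numerator at $\rho_*$, up to a non-vanishing exponential factor, to $\Theta(3\mathfrak{u}(\infty^+) + \mathcal{K})$, and the proposition comes down to proving this is non-zero.

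Next I apply Riemann's vanishing theorem in genus $g=2$: the above theta vanishes iff $\mathfrak{u}(Q) \equiv 3\mathfrak{u}(\infty^+) + 2\mathcal{K} \pmod \Lambda$ for some $Q\in X$. Directly from \eqref{RiemannK}, $2\mathcal{K} = \mathbf{e}_1 + \mathbf{e}_2 - \boldsymbol\tau_2 \in \Lambda$, so $2\mathcal{K}\equiv 0$; equivalently, $a_1$ is a Weierstrass point giving $K_X\sim 2a_1$ and, since $\mathfrak{u}(a_1)=0$, $\mathfrak{u}(K_X)=0$. By Abel's theorem the condition becomes the existence of a meromorphic function $f$ on $X$ with $\operatorname{div}(f) = Q + 2a_1 - 3\infty^+$, equivalently $L(3\infty^+ - 2a_1)\neq\{0\}$.

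The closing step is to show $L(3\infty^+-2a_1)=\{0\}$ by explicit construction. Since $\infty^\pm$ are not ramification points of the hyperelliptic projection, they are non-Weierstrass; the first non-gap at $\infty^+$ is $g+1=3$, so by Riemann-Roch $\dim L(3\infty^+)=2$. Using the expansion $w=\sqrt{R(z)}=z^3+\tfrac{a^2+b^2+c^2}{2}z+O(z^{-1})$ at $\infty^+$ (opposite sign at $\infty^-$) identifies
$$\phi_3:=w+z^3+\tfrac{a^2+b^2+c^2}{2}z$$
as the non-constant generator of $L(3\infty^+)$ (pole of order $3$ at $\infty^+$, simple zero at $\infty^-$). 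Any $f=\alpha+\beta\phi_3\in L(3\infty^+)$ admits, in the local parameter $t$ with $t^2=z-a_1$, the expansion
$$f=\bigl(\alpha+\beta\phi_3(a_1)\bigr)+\beta\sqrt{R'(a_1)}\,t+O(t^2),$$
because $z$ and $z^3$ contribute only even powers of $t$ while the only odd-in-$t$ contribution arises from $w=\sqrt{R'(a_1)}\,t+O(t^3)$. Since $a_1$ is a simple root of $R$, $\sqrt{R'(a_1)}\neq 0$; demanding a double zero at $a_1$ forces successively $\beta=0$ and $\alpha=0$, contradicting the preceding step. The main obstacle is conventional rather than conceptual: fixing signs and normalizations so that $2\mathcal{K}\equiv 0$ and so that the version of Riemann's vanishing theorem recalled in Appendix \ref{app2} is applied correctly. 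The decisive observation—that the only source of odd-in-$t$ contributions at a branch point is $w$, whose leading coefficient is non-vanishing—is a one-line calculation once that framework is in place.
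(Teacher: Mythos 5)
Your reduction coincides with the paper's: at a zero $\rho_*$ of $\vartheta$ the shift $(\rho_*-\tfrac{1+\varkappa_2}{2})[1,-1]^t$ is a lattice vector (since $\boldsymbol\tau[1,-1]^t=\varkappa_2[1,-1]^t$), so the numerator of $T$ collapses—up to a nonzero exponential—to $\Theta(3\mathfrak u(\infty^+)+\mathcal K)$, and both proofs then invoke the Riemann vanishing theorem of Corollary \ref{generalThetadiv}. Where you diverge is in how the non-vanishing of $\Theta(3\mathfrak u(\infty^+)+\mathcal K)$ is established. The paper identifies the putative vanishing with the linear-equivalence $2\infty^-\sim a_3+a_5$ (after writing $3\mathfrak u(\infty^+)+\mathcal K$ using $\mathfrak u(\infty^-)=-\mathfrak u(\infty^+)$ and $\mathcal K=\mathfrak u(a_3)+\mathfrak u(a_5)$), and then appeals to injectivity of the Abel map on non-special degree-$g$ divisors: $2\infty^-$ and $a_3+a_5$ are both non-special yet distinct, so they cannot be linearly equivalent. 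You instead translate the vanishing into $L(3\infty^+-2a_1)\neq\{0\}$ via Abel's theorem, compute $\dim L(3\infty^+)=2$ by Riemann–Roch (consistent with $\infty^+$ being non-Weierstrass), exhibit the explicit basis $\{1,\phi_3\}$ with $\phi_3=w+z^3+\tfrac{a^2+b^2+c^2}{2}z$, and observe that the odd part of any $\alpha+\beta\phi_3$ in the local coordinate $t^2=z-a_1$ is $\beta\sqrt{R'(a_1)}t$, forcing $\beta=\alpha=0$ if a double zero at $a_1$ is imposed. Both arguments are valid; yours is more concrete and self-contained (it only uses Riemann–Roch plus a two-line local computation, avoiding the abstract injectivity statement), while the paper's is shorter once that injectivity fact is taken as known. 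Two small blemishes in your write-up, neither fatal: the condition from Corollary \ref{generalThetadiv} is $\mathfrak u(Q)\equiv 3\mathfrak u(\infty^+)$, not $3\mathfrak u(\infty^+)+2\mathcal K$ (you then discard the $2\mathcal K$, so the slip washes out); and the claim that $\phi_3$ has a \emph{simple} zero at $\infty^-$ is unsubstantiated, but also unused—what matters is only that $\phi_3$ is a non-constant element of $L(3\infty^+)$.
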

\begin{proof} The roots of $\vartheta(z)$ are located at $z^{\ast}\equiv\frac{1}{2}(1+\varkappa_2)\mod(\mathbb{Z}+\varkappa_2\mathbb{Z})$, or equivalently 
(compare Lemma \ref{cute} and Corollary \ref{generalThetadiv}), we have for some $P_0\in X$
\begin{equation*}
  \mathfrak{u}(P_0)\equiv\mathfrak{u}(\infty^+)+\mathcal{K}\ \ \mod(\mathbb{Z}^2+{\boldsymbol \tau}\mathbb{Z}^2).
\end{equation*}
But at the points $z=z^{\ast}$ the numerator in \eqref{Teq} is proportional to
\begin{equation*}
  \Theta\big(\mathfrak{u}(P_0)+\mathcal{K}+2\mathfrak{u}(\infty)\big) = \Theta\big(\mathfrak{u}(P_0)+\mathfrak{u}(\infty^+)-\mathfrak{u}(\infty^-)+\mathcal{K}\big)
\end{equation*}
and vanishes precisely if $P_0=\infty^-$. But then we would have
\begin{equation*}
  \mathfrak{u}(\infty^-)\equiv \mathfrak{u}(\infty^+)+\mathcal{K}\ \ \Leftrightarrow\ \ 2\mathfrak{u}(\infty^-) \equiv \mathfrak{u}(a_3)+\mathfrak{u}(a_5)\ \ 
\mod(\mathbb{Z}^2+{\boldsymbol \tau}\mathbb{Z}^2)
\end{equation*}
and in the last equality both sides are equal to the Abel map of a nonspecial divisor of degree $2$. However the genus of $X$ is $g=2$ and the Abel map is one-to-one on the
set of nonspecial divisors of degree two, hence both sides cannot be the same. Thus $\vartheta(z)$ cannot be zero at the same time as $T(z)$.
\end{proof}
In order to detect poles of $(h^o_{n-1})^{-1}$ in \eqref{h0n1} we shall use the argument principle by tracking the increment of the argument as $x$ makes a  small loop around a point of the zeroset $\mathcal Z_n$ \eqref{zerosetneat}. There are two salient points worth mentioning here;
\begin{enumerate}
\item the approximation \eqref{h0n1} is  a uniform approximation of the holomorphic function $(h^o_{n-1})^{-1}(x)$ by a {\em smooth} function of $x$;
\item the circle used in the detection of the poles must not contain any zero of the leading term approximation.
\end{enumerate}
The first point follows from the fact that the conditions \eqref{gen2con3} that determine the branchpoints of the Riemann surface $X$ are real--analytic constraints. Nonetheless the argument principle can be used because the approximation is uniform.\smallskip

In regard to the second point, the strategy is as follows; we shall prove that $\rho_n(x)$ given by \eqref{simple} is a  locally smooth function from $\C \simeq \R^2$ to $\C \simeq \R^2$. Therefore, if $x$ makes a small loop around a point $x_\star$, then so does $\rho_n (x)$ around $\rho_n(x_\star)$. If the loop is chosen sufficiently small around a point of $\mathcal Z_n$ we can exclude the zeros of $T(\rho)$ because by Prop. \ref{nocommon} the zeros of $T(\rho)$ and $\vartheta(\rho)$ never coincide and thus the argument of \eqref{h0n1} has the same increment as the argument of the denominator $\vartheta^2(\rho)$, thus proving that $(h^o_{n-1}(x))^{-1}$ (which is a priori a meromorphic function) must have a double pole within the loop in the $x$-plane.\smallskip

We thus now recall that \eqref{zerosetneat} holds iff
\begin{equation*}
  \rho_n=\rho_n(x)=\frac{n}{2\pi i}\left[\oint_{\mathcal{B}_1}\d\phi+\varkappa_2\oint_{\mathcal{A}_1}\d\phi\right]
  +\frac{1}{2}\left[\int_{a_1}^0\frac{\eta_2}{\mathbb{A}_{22}}+\frac{\varkappa_2}{2}\right]\equiv \frac{1}{2}(1+\varkappa_2)\ \ \mod(\mathbb{Z}+\varkappa_2\mathbb{Z})
\end{equation*}
in other words iff we choose $x=x_{n,j,k}$ in such a way that
\begin{equation*}
  \frac{n}{2\pi i}\left[\oint_{\mathcal{B}_1}\d\phi+\varkappa_2\oint_{\mathcal{A}_1}\d\phi\right]+\frac{1}{2}\left[\int_{a_1}^0\frac{\eta_2}{\mathbb{A}_{22}}-\frac{\varkappa_2+2}{2}\right]=
j+\varkappa_2k,\ \ j,k\in\mathbb{Z}.
\end{equation*}
We aim at showing that $\rho_n(x)$ makes a loop around $\rho_n(x_{n,j,k})$ as $x$ makes a loop around  $x_{n,j,k}$. For this fix
$x\in\mathbb{C}:x-x_{n,j,k}=\frac{\epsilon}{n},\epsilon\in\mathbb{C}$ with $|\epsilon|>0$ sufficiently small and consider
\begin{equation*}
 \Xi=\Xi(x)=\rho_n(x)-\rho_n(x_{n,j,k}) = \frac{n}{2\pi i}\left[\oint_{\mathcal{B}_1}\d\phi+\varkappa_2\oint_{\mathcal{A}_1}\d\phi\right]+\frac{1}{2}\left[\int_{a_1}^0\frac{\eta_2}{\mathbb{A}_{22}}-\frac{\varkappa_2+2}{2}\right]
-j-\varkappa_2k,
\end{equation*}
i.e. we need to show that $\Xi$ makes a loop around the origin. This will be achieved by evaluating the Jacobian of the mapping $\Xi=\Xi(u,v)$ with $x=u(\Re\epsilon,\Im\epsilon)+iv(\Re\epsilon,\Im\epsilon),u,v\in\mathbb{R}$ at $\epsilon=0$. Put
\begin{equation*}
  A(\epsilon)=\frac{1}{2\pi i}\oint_{\mathcal{A}_1}\d\phi,\hspace{0.5cm}B(\epsilon)=\frac{1}{2\pi i}\oint_{\mathcal{B}_1}\d\phi
\end{equation*}
and notice that $A,B\in\mathbb{R}$. Now any point in the complex plane can be written as $b+a\varkappa_2,a,b\in\mathbb{R}$, hence
\begin{equation*}
  \Xi=n\big(B(\epsilon)+\varkappa_2(\epsilon)A(\epsilon)\big)-\left(j+\frac{1}{4}+b(\epsilon)\right)-\left(k+\frac{1}{4}+a(\epsilon)\right)\varkappa_2(\epsilon).
\end{equation*}
We recall (compare Section \ref{ggen2}) that the differential $\d\phi$ is the unique meromorphic differential on $X$ such that
\begin{eqnarray*}
	\Re\left(\oint_{\gamma}\d\phi\right)=0\ \ \forall\,\gamma\in H_1(X,\mathbb{Z});\ \ \ \ \d\phi(z)&=&\pm\frac{1}{2}\left(\frac{1}{z^4}-\frac{x}{z^2}+\mathcal O(1)\right)\d z,\ \ z\rightarrow 0^{\pm}\\d\phi(z)&=&\pm\frac{1}{2}\left(z^{-1}+\mathcal O\left(z^{-2}\right)\right)\d z,\ \ z\rightarrow\infty^{\pm}.
\end{eqnarray*}
Hence, $\partial_u\d\phi$ and $\partial_v\d\phi$ are the unique meromorphic differentials on $X$ with a double pole at $z=0^{\pm}$, vanishing residues, purely imaginary periods and behavior $\mathcal O\left(z^{-2}\right)$ as $z\rightarrow\infty^{\pm}$. In order to construct them explicitly, we consider (as a function on the universal covering of $X$)
\begin{equation*}
 G(z) = \mathbb{A}_{11}\sqrt{R(z)}\,\frac{\d}{\d z}\ln\vartheta_1\left(\int_0^z\frac{\eta_1}{\mathbb{A}_{11}}\,\Big|\varkappa_1\right).
\end{equation*}
As $z$ varies on $X$, notice that
\begin{align*}
 z\mapsto z_{\gamma}\ \ \textnormal{along}\ \mathcal{A}_1:\ \ & G(z_{\gamma})=G(z);&& z\mapsto z_{\gamma}\ \ \textnormal{along}\ \mathcal{A}_2: && G(z_{\gamma}) = G(z);\\
 z\mapsto z_{\gamma}\ \ \textnormal{along}\ \mathcal{B}_1:\ \ & G(z_{\gamma})= G(z)-2\pi i;&&
  z\mapsto z_{\gamma}\ \ \textnormal{along}\ \mathcal{B}_2: && G(z_{\gamma})=G(z)-2\pi i,
\end{align*}
and thus
\begin{equation*}
	\partial_u\d\phi=\frac{\d G(z)}{\mathbb{A}_{11}}+2\pi i\frac{\Im(\mathbb{A}_{11}^{-1})}{\Im\varkappa_1}\frac{\eta_1}{\mathbb{A}_{11}},\hspace{1cm}
	\partial_v\d\phi=\frac{i\d G(z)}{\mathbb{A}_{11}}+2\pi i\frac{\Re(\mathbb{A}_{11}^{-1})}{\Im\varkappa_1}\frac{\eta_1}{\mathbb{A}_{11}}.
\end{equation*}
We also compute
\begin{eqnarray*}
	\frac{1}{2\pi i}\oint_{\mathcal{A}_1}\partial_u\d\phi=\frac{\Im(\mathbb{A}_{11}^{-1})}{\Im\varkappa_1}&&\frac{1}{2\pi i}\oint_{\mathcal{B}_1}\partial_u\d\phi =-\Re(\mathbb{A}_{11}^{-1})+\Im(\mathbb{A}_{11}^{-1})\frac{\Re\varkappa_1}{\Im\varkappa_1}\\
	\frac{1}{2\pi i}\oint_{\mathcal{A}_1}\partial_v\d\phi=\frac{\Re(\mathbb{A}_{11}^{-1})}{\Im\varkappa_1}&&\frac{1}{2\pi i}\oint_{\mathcal{B}_1}\partial_v\d\phi =\Im(\mathbb{A}_{11}^{-1})+\Re(\mathbb{A}_{11}^{-1})\frac{\Re\varkappa_1}{\Im\varkappa_1}
\end{eqnarray*}
and obtain therefore
\begin{equation*}
	\det\begin{bmatrix}
	\partial_uA &\partial_uB\\
	\partial_vA &\partial_vB\\
	\end{bmatrix}=\frac{|\mathbb{A}_{11}^{-1}|^2}{\Im\varkappa_1}> 0.
\end{equation*}
The Jacobian of the mapping
\begin{equation*}
	\big(\Re\epsilon,\Im\epsilon\big)\mapsto\Big(\Re\big(\Xi(u,v)\big),\Im\big(\Xi(u,v)\big)\Big)
\end{equation*}
equals
\begin{equation*}
	J(\epsilon)=\frac{1}{n^2}\det\begin{bmatrix}
	\partial_u\Re\Xi &\partial_v\Re\Xi\\
	\partial_u\Im\Xi &\partial_v\Im\Xi\\
	\end{bmatrix}.
\end{equation*}
Hence at $\epsilon=0$,
\begin{eqnarray*}
	J(0)&=&\det\left(\begin{bmatrix}
	B_u+A_u\Re\varkappa_2 & B_v+A_v\Re\varkappa_2\\
	A_u\Im\varkappa_2 & A_v\Im\varkappa_2\\
	\end{bmatrix}-\frac{1}{n}\begin{bmatrix}
	b_u+a_u\Re\varkappa_2 & b_v+a_v\Re\varkappa_2\\
	a_u\Im\varkappa_2 & a_v\Im\varkappa_2\\
	\end{bmatrix}
	\right)\\
	&=&\Im\varkappa_2\det\begin{bmatrix}
	B_u& B_v\\
	A_u&A_v\\
	\end{bmatrix}+\mathcal O\left(n^{-1}\right)=-|\mathbb{A}_{11}^{-1}|^2\frac{\Im\varkappa_2}{\Im\varkappa_1}\big(1+\mathcal O\left(n^{-1}\right)\big)
\end{eqnarray*}
which shows that we can find a sufficiently small $r_0>0$ which is $n$ independent such that the small circle
\begin{equation*}
	x=x_{n,j,k}+\frac{r_0}{n}e^{i\alpha},\ \ \ \alpha\in[0,2\pi)
\end{equation*}
is mapped smoothly onto a curve in the $\Xi$-plane, around the origin with a diameter that is bounded with respect to $n$. By choosing $r_0$ sufficiently small we can thus guarantee that no zeros of $T(\rho)$ are included. 
Then the total increment of the argument in the leading approximation \eqref{h0n1} is solely determined by the denominator $\vartheta^2(\rho)$; this proves that indeed the function $(h^o_{n-1})^{-1}(x)$ has a pole in a $1/n$ neighborhood of  the zeroset $\mathcal Z_n$ \eqref{zerosetneat} and completes the proof of Theorem \ref{theo4}.

\begin{appendix}
\section{Airy parametrices}\label{app1}
Our constructions in Subsection \eqref{RHP0} make use of certain piecewise analytic functions which are constructed out of a Wronskian matrix. On the technical level 
(we use here the identical construction of \cite{BB}), introduce
\begin{equation}\label{pl:2}
	A_0(\zeta) = \begin{bmatrix}
	\frac{\d}{\d\zeta}\textnormal{Ai}(\zeta) & e^{i\frac{\pi}{3}}\frac{\d}{\d\zeta}\textnormal{Ai}\Big(e^{-i\frac{2\pi}{3}}\zeta\Big)\\
	\textnormal{Ai}(\zeta) & e^{i\frac{\pi}{3}}\textnormal{Ai}\Big(e^{-i\frac{2\pi}{3}}\zeta\Big)\\
	\end{bmatrix},\hspace{0.5cm}\zeta\in\mathbb{C}
\end{equation}
where $\textnormal{Ai}(\zeta)$ the solution to Airy's equation
\begin{equation*}
	w''=zw
\end{equation*}
uniquely determined by its asymptotics as $\zeta\rightarrow\infty$ and $-\pi<\textnormal{arg}\ \zeta<\pi$
\begin{equation*}
	\textnormal{Ai}(\zeta) = \frac{\zeta^{-1/4}}{2\sqrt{\pi}}e^{-\frac{2}{3}\zeta^{3/2}}\left(1-\frac{5}{48}\zeta^{-3/2}+\frac{385}{4608}\zeta^{-6/2}+O\left(\zeta^{-9/2}\right)\right).
\end{equation*}
Next assemble the model function
\begin{equation}\label{pl:3}
		A^{RH}(\zeta) = \left\{
                                 \begin{array}{ll}
                                   A_0(\zeta), & \hbox{arg $\zeta\in(0,\frac{2\pi}{3})$,} \smallskip\\
                                   A_0(\zeta)\begin{bmatrix}
                          1 & 0 \\
                          -1 & 1 \\
                        \end{bmatrix}, & \hbox{arg $\zeta\in(\frac{2\pi}{3},\pi)$,} \bigskip \\
                                   A_0(\zeta)\begin{bmatrix}
                                   1 & -1\\
                                   0 & 1\\
                                   \end{bmatrix},& \hbox{arg $\zeta\in(-\frac{2\pi}{3},0)$,} \bigskip \\
                                   A_0(\zeta)\begin{bmatrix}
                                   0 & -1 \\
                                   1 & 1\\
                                   \end{bmatrix}, & \hbox{arg $\zeta\in(-\pi,-\frac{2\pi}{3})$,}
                                 \end{array}
                               \right.
\end{equation}
which solves the RHP with jumps for $\textnormal{arg}\,\zeta=-\pi,-\frac{2\pi}{3},0,\frac{2\pi}{3}$ as depicted in Figure \ref{Airy1}.
\begin{figure}[tbh]
\begin{center}
\resizebox{0.3\textwidth}{!}{\input{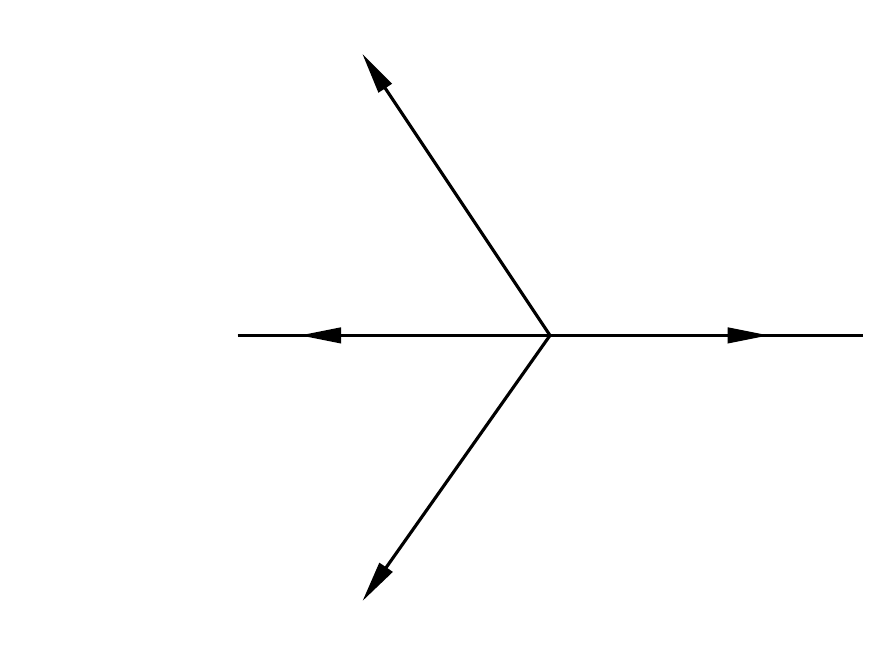_t}}
\caption{A jump behavior which can be modeled explicitly in terms of Airy functions}
\label{Airy1}
\end{center}
\end{figure}
Besides the indicated jump behavior we also have an expansion as $\zeta\rightarrow\infty$ which is valid in a full neighborhood of infinity:
\begin{equation}\label{pl:4}
		A^{RH}(\zeta)=\frac{\zeta^{\sigma_3/4}}{2\sqrt{\pi}}\begin{bmatrix}
		-1 & i\\
		1 & i\\
		\end{bmatrix}\Bigg\{I+\frac{1}{48\zeta^{3/2}}\begin{bmatrix}
	1 & 6i\\
	6i & -1\\
	\end{bmatrix}+\mathcal O\left(\zeta^{-6/2}\right)\Bigg\}e^{-\frac{2}{3}\zeta^{3/2}\sigma_3}.
\end{equation}
Next we construct out of \eqref{pl:2} the function
\begin{equation*}
	\tilde{A}_0(\zeta)=-\bigl[\begin{smallmatrix}
	0 & 1\\
	1 & 0\\
	\end{smallmatrix}\bigr]\sigma_3A_0\big(e^{-i\pi}\zeta\big)\sigma_3,\hspace{0.5cm}\zeta\in\mathbb{C}
\end{equation*}
and then assemble
\begin{equation}\label{tildeARH}
	\tilde{A}^{RH}(\zeta) = \left\{
                                 \begin{array}{ll}
                                   \tilde{A}_0(\zeta)\begin{bmatrix}
                                   0 & 1\\
                                   -1 & 1\\
                                   \end{bmatrix}, & \hbox{arg $\zeta\in(0,\frac{\pi}{3})$,} \smallskip\\
                                   \tilde{A}_0(\zeta)\begin{bmatrix}
                          1 & 1 \\
                          0 & 1 \\
                        \end{bmatrix}, & \hbox{arg $\zeta\in(\frac{\pi}{3},\pi)$,} \smallskip \\
                                   \tilde{A}_0(\zeta),& \hbox{arg $\zeta\in(\pi,\frac{5\pi}{3})$,} \smallskip \\
                                   \tilde{A}_0(\zeta)\begin{bmatrix}
                                   1 & 0 \\
                                   1 & 1\\
                                   \end{bmatrix}, & \hbox{arg $\zeta\in(\frac{5\pi}{3},2\pi)$.}
                                 \end{array}
                               \right.
\end{equation}
This model function solves again a RHP with jumps on the rays $\textnormal{arg}\,\zeta=0,\frac{\pi}{3},\pi,\frac{5\pi}{3}$ (indicated in Figure \ref{Airy2}) and we have the uniform
expansion
\begin{equation}\label{tildeARHasy}
		\tilde{A}_{RH}(\zeta)=\frac{\big(e^{-i\pi}\zeta\big)^{-\sigma_3/4}}{2\sqrt{\pi}}\begin{bmatrix}
	1 & -i\\
	1 & i\\
	\end{bmatrix}\Bigg\{I+\frac{i}{48\zeta^{3/2}}\begin{bmatrix}
	-1 & 6i\\
	6i & 1\\
	\end{bmatrix}+\mathcal O\left(\zeta^{-6/2}\right)\Bigg\}
	 e^{-\frac{2}{3}i\zeta^{3/2}\sigma_3},\ \ \zeta\rightarrow\infty.
\end{equation}
\begin{figure}
\begin{center}
\resizebox{0.25\textwidth}{!}{\input{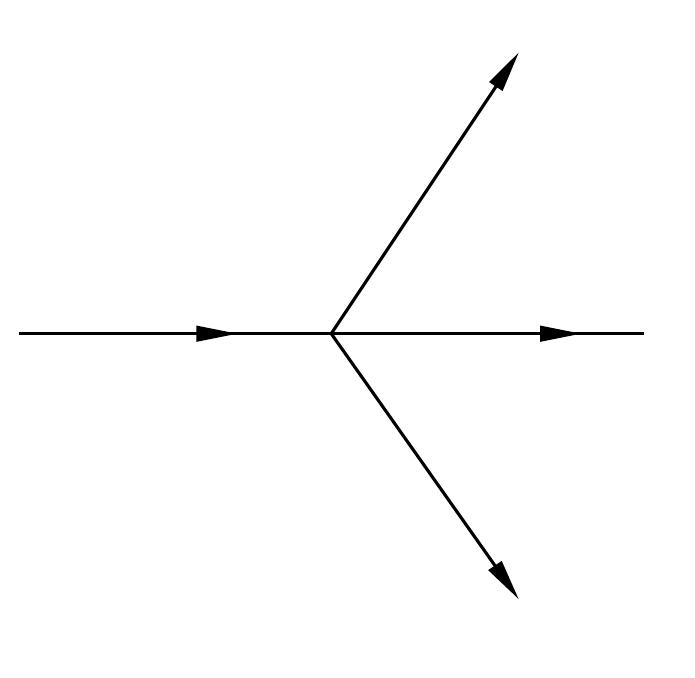_t}}
\caption{Another jump behavior which can be modeled in terms of Airy functions}
\label{Airy2}
\end{center}
\end{figure}

\section{Some basic facts about theta functions and divisors}\label{app2}
The reference for all the following theorems is \cite{FK}, we quote here certain results about general Riemann surfaces of the genus $g\in \N$.\bigskip

The Riemann theta function, associated with a symmetric matrix ${\boldsymbol \tau}$ that has a strictly positive imaginary part,  
is the function of the vector argument $\vec z\in\C^g$ given by
\be\label{Rtheta}
\Theta(\vec z\,|{\boldsymbol \tau})= \sum_{\vec k\in \Z^g} \exp\left[i\pi \langle\vec{k}{\boldsymbol \tau},\vec{k}\,\rangle +2\pi i \langle\vec{k},\vec{z}\,\rangle\right].
\ee
Often the dependence on ${\boldsymbol \tau}$ is  omitted from the notation.
\bp
\label{thetaproperties}
The theta function has the following properties:
\begin{enumerate}
\item $\Theta(\vec z\,|{\boldsymbol \tau})  = \Theta(-\vec z\,|{\boldsymbol \tau})$ (parity);
\item For any $\vec{\lambda},\vec{\mu}\in \Z^g$ we have
\be
\Theta(\vec z + \vec{\mu} + {\boldsymbol \tau}\vec{\lambda}\,|{\boldsymbol \tau}) = \exp\Big[ -2\pi i \langle\vec{\lambda},\vec z\,\rangle - i\pi\langle  \vec{\lambda} {\boldsymbol \tau},\vec{\lambda}\,\rangle 
\Big]
\Theta(\vec z\,|{\boldsymbol \tau}).\label{thetaperiods}
\ee
\end{enumerate}
\ep
In addition to \eqref{Rtheta} we also use the theta function with characteristics $\vec{\boldsymbol \alpha},\vec{\boldsymbol \beta}\in\mathbb{C}^g$
\begin{equation}\label{cRtheta}
 \Theta\begin{bmatrix}
        \vec{\boldsymbol \alpha}\,\\
\vec{\boldsymbol \beta}\,\\
       \end{bmatrix}
(\vec z\,|{\boldsymbol \tau})= \exp \le[ 2\pi i \le(
\frac 1 8 \langle\vec{\boldsymbol \alpha}{\boldsymbol \tau},\vec{\boldsymbol \alpha}\,\rangle + \frac 1 2 \langle\vec{\boldsymbol \alpha},\vec{z}\,\rangle + \frac 1 4 \langle\vec{\boldsymbol \alpha},\vec{\boldsymbol \beta}\,\rangle
\ri)\ri] \Theta \le( \vec{z} + \frac 1 2 \vec{\boldsymbol \beta} + \frac {\boldsymbol \tau} 2 \vec{\boldsymbol \alpha}\,\bigg|{\boldsymbol \tau} \ri)
\end{equation}

\bp
\label{propB2}
The theta function with characteristics $\vec{\boldsymbol \alpha},\vec{\boldsymbol \beta}\in\mathbb{C}^g$ has the properties
\begin{equation*}
 \Theta\begin{bmatrix}
        \vec{\boldsymbol \alpha}\\
\vec{\boldsymbol \beta}\\
       \end{bmatrix}(\vec{z}+\vec{\mu}+{\boldsymbol \tau}\vec{\lambda}\,|{\boldsymbol \tau}) = \exp\left[2\pi i\left(\frac{1}{2}\Big(\langle \vec{\boldsymbol \alpha},\vec{\mu}\,\rangle
  -\langle\vec{\lambda},\vec{\boldsymbol \beta}\,\rangle\Big)-\langle\vec{\lambda},\vec{z}\,\rangle-\frac{1}{2}\langle\vec{\lambda}{\boldsymbol \tau},\vec{\lambda}\,\rangle\right)\right]
\Theta\begin{bmatrix}
\vec{\boldsymbol \alpha}\\
\vec{\boldsymbol \beta}\\
\end{bmatrix}(\vec{z}\,|{\boldsymbol \tau}),\ \ \ \vec{\mu},\vec{\lambda}\in\mathbb{Z}^g.
\end{equation*}
\begin{equation*}
 \Theta\begin{bmatrix}
        \vec{\boldsymbol \alpha} + 2 \vec{\mu}\\
\vec{\boldsymbol \beta} + 2\vec{\lambda} \\
       \end{bmatrix}(\vec{z}\,|{\boldsymbol \tau}) =
        \exp\left[i \pi  \langle \vec{\boldsymbol \alpha} , \vec \lambda \rangle\right]\Theta\begin{bmatrix}
        \vec{\boldsymbol \alpha} \\
\vec{\boldsymbol \beta} \\
       \end{bmatrix}(\vec{z}\,|{\boldsymbol \tau}) 
        ,\ \ \ \vec{\mu},\vec{\lambda}\in\mathbb{Z}^g.
\end{equation*}

\ep
For the case of a hyperelliptic Riemann surface $X$
\begin{equation*}
  X=\Big\{(z,w):\, w^2=\prod_{j=1}^{2g+2}(z-a_j)\Big\}
\end{equation*}
with fixed homology basis $\{\mathcal{A}_j,\mathcal{B}_j\}_{j=1}^g$, let $\{\omega_j\}_{j=1}^g$ denote the collection of holomorphic one forms on $X$ with standard normalization
\begin{equation*}
  \oint_{\mathcal{A}_j}\omega_k=\delta_{jk},\ \ \ j,k=1,\ldots,g
\end{equation*}
and $B$-period matrix ${\boldsymbol \tau}$. We denote with $\mathbb{J}_{{\boldsymbol \tau}} = \mathbb{C}^g/(\mathbb{Z}^g+{\boldsymbol \tau}\mathbb{Z}^g)$ the underlying Jacobian variety. If
\begin{equation*}
  \mathfrak{u}(p)=\int_{a_1}^p\vec{\omega},\hspace{0.5cm}\mathfrak{u}:X\rightarrow\mathbb{J}_{{\boldsymbol \tau}}
\end{equation*}
is the Abel map extended to the whole Riemann surface then
\bth[\cite{FK}, p. 308]
\label{generalTheta}
For ${\bf  f}\in \C^g$ arbitrary,
the (multi-valued) function $\Theta(\mathfrak u(z) - {\bf  f}\,|{\boldsymbol \tau})$ on the Riemann surface either vanishes identically or it 
vanishes at $g$ points ${p}_1,\dots, {p}_g$ (counted with multiplicity).
In the latter case we have 
\be
{\bf  f} =\sum_{j=1}^{g} \mathfrak u(p_j) + \mathcal K.
\ee
where the vector of Riemann constants equals
\begin{equation*}
  \mathcal{K}=\sum_{j=1}^g\mathfrak{u}(a_{2j+1}).
\end{equation*}
\et
An immediate consequence of Theorem \ref{generalTheta} is the following statement.
\bc
\label{generalThetadiv}
The function $\Theta({\bf e}\,|{\boldsymbol \tau})$ vanishes at ${\bf  e}\in \mathbb J_{\boldsymbol \tau}$ iff  there exist $g-1$ points $p_1,\dots, p_{g-1}$ on the Riemann surface such that
\be\label{vece}
{\bf e} =\sum_{j=1}^{g-1} \mathfrak u(p_j) + \mathcal K.
\ee
\ec
On a Riemann surface of genus $g$ a divisor is a collection of points (counted with a multiplicity). We are going to consider here 
only positive divisors, namely, with positive multiplicities.
\bd
\label{defspecial}
A (positive) divisor of degree $k\leq g$ is  called special if the vector space of meromorphic functions with poles at the points of order not exceeding
 the given multiplicities has dimension strictly greater than $1$. (Note that the constant function is always in this space).
\ed
As the definition suggests, generic divisors of degree $\leq g$ do not admit other than the constant function in the above-mentioned vector space.
The other fact that we have used is that a divisor $\mathcal D = p_1 + \dots +p_k$ ($k\leq g$)  on the hyperelliptic Riemann 
surface $X$ is special if and only if at least one pair of points are of the 
form $(z, \pm w)$ (i.e. the points are on the two sheets and with the same $z$ value).

\end{appendix}


\end{document}